\documentclass[12pt,english]{article}
\usepackage[english]{babel}
\usepackage{feynmp-auto}
\usepackage{amsmath,amssymb,amscd,amsthm,xcolor}
\usepackage{amsfonts}
\usepackage{mathrsfs}
\usepackage{mathtools}
\usepackage{float}
\usepackage{graphicx}
\usepackage{url}
\usepackage{hyperref}
\usepackage{cite}
\usepackage{physics}
\usepackage{braket}
\usepackage{verbatim}
\usepackage[font=footnotesize,labelfont=bf,width=.9\textwidth]{caption}
\usepackage{multirow}
\usepackage{boldline}
\usepackage{enumitem}
\usepackage{wrapfig}

\newtheorem{theorem}{Theorem}

\usepackage{tikz}
\usetikzlibrary{knots,decorations.markings,decorations.pathmorphing,decorations.pathreplacing, shapes.misc,perspective,arrows.meta}
\usetikzlibrary{arrows}
\usetikzlibrary{external}

\usepackage{scalerel}
\usepackage[normalem]{ulem}
\allowdisplaybreaks

\hypersetup{
    colorlinks,%
    citecolor=blue,%
    filecolor=blue,%
    linkcolor=blue,%
    urlcolor=blue
}

\makeatletter
\renewcommand\section{\@startsection {section}{1}{\z@}%
                                   {-5.5ex \@plus -1ex \@minus -.2ex}
                                   {2.3ex \@plus.2ex}%
                                   {\normalfont\large\bfseries}}
\renewcommand\subsection{\@startsection{subsection}{2}{\z@}%
                                     {-3.25ex\@plus -1ex \@minus -.2ex}%
                                     {1.5ex \@plus .2ex}%
                                     {\normalfont\bfseries}}

\numberwithin{equation}{section}

\newcommand{\subalign}[1]{%
  \vcenter{%
    \Let@ \restore@math@cr \default@tag
    \baselineskip\fontdimen10 \scriptfont\tw@
    \advance\baselineskip\fontdimen12 \scriptfont\tw@
    \lineskip\thr@@\fontdimen8 \scriptfont\thr@@
    \lineskiplimit\lineskip
    \ialign{\hfil$\m@th\scriptstyle##$&$\m@th\scriptstyle{}##$\hfil\crcr
      #1\crcr
    }%
  }%
}
\newcommand{\vast}{\bBigg@{4}}
\newcommand{\Vast}{\bBigg@{5}}
\makeatother

\newcommand{\bea}{\begin{eqnarray}}
\newcommand{\eea}{\end{eqnarray}}
\newcommand{\beq}{\begin{equation}}
\newcommand{\eeq}{\end{equation}}
\newcommand{\be}{\begin{equation}}
\newcommand{\ee}{\end{equation}}

\newcommand{\mc}[1]{\mathcal{#1}}

\newcommand{\tmc}[1]{\tilde{\mathcal{#1}}}
\newcommand{\pa}{\partial}

\newcommand{\msg}{\mathsf{g}}

\renewcommand{\title}[1]{\vbox{\center\LARGE{#1}}\vspace{5mm}}
\renewcommand{\author}[1]{\vbox{\center#1}\vspace{5mm}}
\newcommand{\address}[1]{\vbox{\center\footnotesize\em#1}}
\newcommand{\email}[1]{\vbox{\center\footnotesize\tt#1}\vspace{5mm}}

\begin{document}
\begin{titlepage}
 \begin{flushright}

\end{flushright}

\begin{center}

\hfill \\
\hfill \\
\vskip 1cm

\title{Modave lectures on energy conditions in \\ quantum field theory and semi-classical gravity}

\author{Jackson R. Fliss}
\address{
{Department of Applied Mathematics and Theoretical Physics, University of Cambridge,\\ Cambridge CB3 0WA, United Kingdom}\\\vspace{.5 cm}
{Physique Th\'eoretique et Math\'ematique, Universit\'e Libre de Bruxelles \& International Solvay Institutes, CP 231, 1050 Bruxelles, Belgium}
}
  
\email{jackson.fliss@ulb.be}

\end{center}

\vfill
\abstract{
We review well known classical energy conditions and their implications for gravitational solutions, including the celebrated Hawking and Penrose singularity theorems. We then consider quantum fields coupled to gravity, where the topic becomes both richer and more subtle, as even the simplest quantum theories violate local energy conditions. We discuss directions for constraining energy densities in quantum field theories, including averaging over regions of spacetime and bounds relating energy and quantum information. We explore implications of these bounds for quantum field theory coupled to gravity as an effective theory and discuss how they guide our understanding of quantum gravity more broadly. These notes are based on a series of lectures given at the XXI Modave Summer School in Mathematical Sciences. 
}
\vfill

\end{titlepage}

\newpage
{
  \hypersetup{linkcolor=black}
  \tableofcontents
}

\section*{Introduction: A carpenter's guide to semi-classical gravity}\label{sect:intro}

In his essay {\it Physics and Reality} \cite{EINSTEIN1936349}, Albert Einstein characterized the field equation of general relativity as ``a building, one wing of which is made of fine marble, but the other wing is built of low-grade wood":
\beq\tag*{(I.1)}\label{eq:marblewood}
    \underbrace{R_{\mu\nu}-\frac{1}{2}g_{\mu\nu}}_{\text{fine marble}}=\underbrace{8\pi G_N\,T_{\mu\nu}}_{\text{low-grade wood}}~.
\eeq
Within this statement is the frustration of a tensor fixed uniquely by general covariance and capturing the geometry of spacetime itself being contingent upon the stress-energy tensor, which Einstein considered ``a crude substitute for a representation which would correspond to all known properties of matter.'' Without prior knowledge of what restrictions exist on types of matter and how its energy density is composed, \ref{eq:marblewood} is almost vacuous: given one's favorite metric, $g_{\mu\nu}^{\text{favorite}}$ (say with closed timelike curves, a bouncing cosmology, or a traversable wormhole), one could claim it as a solution to the field equation for {\it some} distribution of matter by declaring $T_{\mu\nu}$ as the right-hand side of \ref{eq:marblewood} by fiat.

In these lectures we will take on the role of carpenters, inspecting the quality and shoring up the wooden flank of \ref{eq:marblewood} to provide structure for Einstein's marble wing. That is to say, we will posit reasonable restrictions on how stress-energy is locally composed in physical theories, look for methods to support those assumptions, and, in some cases, even prove those assumptions to be true in known physical theories. We can then explore how those restrictions on stress-energy translate to restrictions on allowed geometries in the form of geometric theorems, the most famous of which are the singularity theorems of Penrose and Hawking \cite{Penrose:1964wq,Hawking:1966sx}.

We will introduce the idea of semi-classical gravity as classical gravity coupled to a quantum source of matter modelled as a quantum field theory. Along the way we will have to make peace with an inconvenient fact about quantum field theories: every quantum field theory violates every local (that is, point-wise) energy condition. We will explore how the notion of averaging energy densities over spacetime regions (and the idea of `quantum interest') allows us to put lower bounds -- quantum energy inequalities -- on how badly energy density can behave in a quantum theory. We will briefly discuss how such inequalities can feed into new Penrose-like and Hawking-like singularity theorems, shoring up the wooden base of \ref{eq:marblewood}, even in the presence of some amount of energy condition violation.

An examination of how energy conditions are violated in quantum field theory will lead us naturally into more modern explorations relating energy to entropy and quantum information more broadly. Such connections run deep into the heart of our understanding of how geometry emerges through holographic duality, and hints at broader lessons for quantum gravity in general. After introducing some basic definitions regarding quantum entanglement, we will describe field theoretic lower bounds on energy density including the celebrated {\it quantum null energy condition} (QNEC). We will then explore how such bounds interplay with semi-classical gravity through quantum focussing and arrive at new, entropic, approaches to constraining geometry in semi-classical and quantum gravity. The culmination of this will be a brief review of Wall's quantum singularity theorem.

Lastly we will finish with a short discussion on the broader context of these lectures with respect to directions of ongoing research.

{\bf NB:} These lecture notes have been influenced by my own pedagogical introduction to the subject matter. I have tried to minimize the intersection with other reviews, however to remain true to what was covered in the summer school where these lectures were given, some of these influences remain evident. In particular Lectures \ref{sect:l1} and \ref{sect:l2} have overlap with the review articles \cite{Kontou:2020bta} and \cite{Fewster:2002ne}. Additionally, as these lectures were being given, \cite{Iizuka:2025xnd} appeared which has coincidental overlap with Lectures \ref{sect:l5} and \ref{sect:l6} of this review.

\section{Lecture 1: Energy conditions in classical field theory}\label{sect:l1}

We will begin with a classical field theory in Lorentzian signature.\footnote{By convention we will work in the ``mostly plus" metric signature.} The action, $S$, of that field theory defines naturally the stress-energy tensor that couples to the Einstein equation through the variational derivative with respect to a background metric
\beq\label{eq:Tcanondef}
    T_{\mu\nu}:=-\frac{2}{\sqrt{-g}}\frac{\delta S}{\delta g^{\mu\nu}}~.
\eeq

We can consider a spacelike Cauchy surface, $\Sigma$. The energy on that surface is given by the Hamiltonian
\beq\label{eq:Ham}
    H=\int_\Sigma\dd^D\vec x~\sqrt{h}~\xi^\mu \xi^\nu\,T_{\mu\nu}~,
\eeq
where $h$ is the induced metric on $\Sigma$ and $\xi^\mu$ is the unit (time-like) normal to $\Sigma$. The simplest classical energy condition is that the total energy is positive:
\beq\label{eq:Hgeq0}
    H\geq 0~.
\eeq
Moreover, when $\xi$ is a Killing vector whose action on the fields also leaves $S$ invariant, $H$ is conserved and independent of the choice of $\Sigma$ along flow of $\xi$.

Positivity and conservation of energy are cornerstones of classical physics and indeed generalize to quantum physics as the positivity and constancy of the eigenvalues of $H$. However conditions on the Hamiltonian itself stands in tension of the experience of a local observer inside a spacetime. In essence we need a `bird's eye' view of $\Sigma$ to measure the total energy. Thus in this lecture we will discuss {\it local} energy conditions and their validity in classical field theories.

To get some intuition for these conditions, it will be useful to consider the perfect fluid ansatz for the stress-energy tensor,
\beq\label{eq:PFT}
    T_{\mu\nu}(x)=\left(\rho(x)+P(x)\right)u_\mu(x)\,u_\nu(x)+P(x)\,g_{\mu\nu}(x)~,
\eeq
where $u^\mu(x)$ is the normalized time-like velocity vector of the fluid ($u^\mu u_\mu=-1$), $\rho(x)$ is its mass density, and $P(x)$ is the isotropic pressure measured in the rest frame of the fluid. Perfect fluid ansatzes are myriad in their applications in modelling the matter sourcing classical spacetimes from stars, black holes, and cosmology \cite{Wald:1984rg}.

\subsubsection*{Weak Energy Condition}
The intuitive pointwise version of \eqref{eq:Hgeq0} is known as the {\it Weak Energy Condition} (WEC) which states that the energy density of an inertial observer, with a normalized worldline tangent vector $t^\mu$, is positive:
\beq\label{eq:WEC}
    t^\mu t^\nu\,T_{\mu\nu}\geq 0~.
\eeq
This condition is very intuitive in terms of the perfect fluid, \eqref{eq:PFT}: the WEC in the rest frame of the fluid states that the mass density of the fluid is positive, 
\beq
    u^\mu u^\nu T_{\mu\nu}(x)=\rho(x)\geq 0
\eeq
while in more general reference frames it implies that the sum of the density and pressure is positive:
\beq
    \rho(x)+P(x)\geq 0~.
\eeq
While intuitive from the perspective of a local observer or from a fluid, the WEC is a rather weak condition geometrically. Its classical geometric implications is the positivity of the Einstein tensor:
\beq
    G_{\mu\nu}\equiv R_{\mu\nu}-\frac{1}{2}R\,g_{\mu\nu}\geq 0~.
\eeq
This condition may seem unnatural, however Feynman gave the following (albeit contrived) physical implication \cite{Feynman:1996kb}. Consider a spatial geodesic ball, $B$, and shrink the ball continuously to a point. The WEC implies that, in this infinitesimal limit, the spatial volume of $B$ is greater than or equal to the spatial volume of a spatial ball in Minkowksi spacetime of the same surface area. 

\subsubsection*{Dominant Energy Condition}

A generalization and strengthening of the WEC is known as the {\it dominant energy condition} (DEC) which states that for any two time-like vectors, $t_1$ and $t_2$,
\beq\label{eq:DEC}
    t_1^\mu t_2^\nu T_{\mu\nu}\geq 0~.
\eeq
The DEC states that in the inertial reference frame set by $t_1$, $-{T^\mu}_\nu\,t_2^\nu$ is a future-pointing causal vector field. This implies that inertial observers always measure energy flux that is causal and future directed. This has the intuitively appealing interpretation of prohibiting superluminal propagation of energy density. The DEC is equivalent to the WEC with the further condition
\beq
    t^\mu t^\nu {T^{\alpha}}_\mu T_{\alpha\nu} \leq 0
\eeq
for any time-like vector, $t$. This implies that in any frame, the energy density is the dominant component of the stress-energy:
\beq
    t^\mu t^\nu T_{\mu\nu}\geq \abs{T_{\mu\nu}}~.
\eeq
In particular, for a perfect fluid this states that
\beq
    \rho\geq\abs{P}~.
\eeq
Much like the WEC, however, there is no immediately clear `marble-side' interpretation of the DEC. 

\subsubsection*{Strong Energy Condition}

For dimensions $d>2$ the {\it strong energy condition} (SEC) can be stated as
\beq\label{eq:SEC}
    t^\mu t^\nu T_{\mu\nu}+\frac{1}{d-2}T\geq 0~,
\eeq
for any normalized time-like vector, $t^\mu$, and where $T=g^{\mu\nu}T_{\mu\nu}$ is the trace of the stress-energy tensor. The `wood-side' physics of the SEC is much less intuitive; in terms of the perfect fluid ansatz it implies
\beq
    (d-3)\rho+(d-2)P\geq 0~,\qquad \rho+P\geq 0~.
\eeq
The quantity on the right-hand side of \eqref{eq:SEC} is sometimes known as the `effective energy density' and the SEC implies that it is positive for all inertial observers. However the `marble-side' physics of the SEC is much more intuitive. It is easy to work out that the Einstein equation implies the curvature condition
\beq\label{eq:TLcurv_cond}
    t^\mu t^\nu R_{\mu\nu}\geq 0~.
\eeq
As we will cover in more detail in the second lecture (Section \ref{sect:l2}), this time-like curvature condition implies that all time-like geodesics converge and is the input for Hawking's singularity theorem.

\subsubsection*{Null Energy Condition}

Lastly we introduce the {\it null energy condition} (NEC) which states that the light-like flux of energy is positivity. That is, for any null vector, $k^\mu$,
\beq\label{eq:NEC}
    k^\mu k^\nu\,T_{\mu\nu}\geq 0~.
\eeq
Of the other classical energy conditions, the NEC is the weakest: for instance, in a perfect fluid it only implies that the sum of the pressure and the density is positive,
\beq
    \rho+P\geq 0~.
\eeq
Regardless, the NEC has several interesting implications. For one, the NEC is relevant for the long distance behavior of asymptotically flat spacetimes. Additionally, the geometric side of the NEC is also clear: the Einstein equation applied to NEC satisfying sources implies the {\it null curvature condition},
\beq\label{eq:null_curv_cond}
    k^\mu k^\nu\,R_{\mu\nu}\geq 0~.
\eeq
As we will soon discuss in more detail, this condition implies that null geodesics are always focussed. This is similar to \eqref{eq:null_curv_cond} in flavor, and features as the key component of the Penrose singularity theorem. However since the NEC is a weaker assumption than the SEC (and thus holds in more situations), this gives the Penrose singularity theorem a stronger starting point.

\subsection*{Example field theories}

Let us now examine the stress tensor in some example field theories to illustrate their (non)compliance with various energy conditions.\\
\\
\textbf{Scalar field}\\
\\
The action of a minimally-coupled real scalar field is given by
\beq\label{eq:scalaract}
    S_\text{scalar}=\int\dd^dx\,\sqrt{-g}\left(-\frac{1}{2}g^{\mu\nu}\pa_\mu\phi\pa_\nu\phi-V(\phi)\right)~,
\eeq
with a corresponding stress-tensor
\beq\label{eq:scalarSET}
    T_{\mu\nu}=\pa_\mu\phi\pa_\nu\phi-g_{\mu\nu}\left(\frac{1}{2}(\pa\phi)^2+V(\phi)\right)~.
\eeq

The WEC quantity is given by
\beq
    t^\mu t^\nu T_{\mu\nu}=\left(t\cdot\pa\phi\right)^2+\frac{1}{2}g^{\mu\nu}\pa_\mu\phi\pa_\nu\phi+V(\phi)~,
\eeq
where we have used that $t^\mu$ is normalized: $g_{\mu\nu}t^\mu t^\nu=-1$. To analyze this further, it is useful to introduce a set of frames, $\{E^\mu_A\}$, and coframes, $\{e^A_\mu\}$, (with $E^\mu_A\,e^B_\mu=\delta^B_A$) such that
\beq
    g_{\mu\nu}=e^A_\mu e^B_\nu\,\eta_{AB}~,\qquad g^{\mu\nu}=E_A^\mu E_B^\nu\,\eta^{AB}~,\qquad \eta=\text{diag}(-1,1,\ldots,1)~.
\eeq
At least in a neighborhood of a point of interest for the WEC, we are free to choose $E_0^\mu=t^\mu$ in which case the WEC quantity is given by
\beq\label{eq:scalarT00}
    t^\mu t^\nu T_{\mu\nu}=\frac{1}{2}E_0(\phi)^2+\frac{1}{2}\delta^{IJ}E_I(\phi)E_J(\phi)+V(\phi)~.
\eeq
This is positive when the potential is positive, $V(\phi)\geq 0$, and the weak energy condition is obeyed.

The strong energy is given by
\beq
    t^\mu t^\nu T_{\mu\nu}+\frac{1}{d-2}T=(t\cdot \pa\phi)^2-\frac{2}{d-2}V(\phi)~.
\eeq
It is easy to see that for positive potentials (e.g. even a simple mass, $V(\phi)=\frac{1}{2}m^2\phi^2$ will do) it is easy to violate the strong energy condition.

For the dominant energy condition, since we have already established that the WEC holds when $V\geq 0$, we simply need to check the quadratic condition,
\beq
    t^\mu t^\nu {T^\alpha}_\mu T_{\alpha\nu}=-2V(\phi)(t\cdot\pa\phi)^2-\left(V+\frac{1}{2}(\pa\phi)^2\right)^2\leq 0~,
\eeq
which for positive potentials is indeed negative and the DEC is satisfied.

Lastly the NEC is easily seen to be obeyed independent of the potential:
\beq
    k^\mu k^\nu T_{\mu\nu}=(k\cdot\pa\phi)^2\geq 0~.
\eeq
\textbf{Yang-Mills fields}\\
\\
Let us consider a non-Abelian vector field valued in an Lie algebra, $\mathfrak{g}$
\beq
    A_\mu=A_\mu^a\,\mathfrak{t}^a~,\qquad\{\mathfrak{t}^a\}\text{ a basis of }\mathfrak g~.
\eeq
The Yang-Mills action is
\beq
S_\text{YM}=-\frac{1}{4\msg_\text{YM}^2}\int \dd^dx\sqrt{-g}\,g^{\mu\nu}g^{\alpha\beta}\tr\left(F_{\mu\alpha}F_{\nu\beta}\right)
\eeq
where the trace is taken in the fundamental representation of $\mathfrak{g}$ (i.e. if $\mathfrak g$ is simple, $\tr(\mathfrak t^a\mathfrak t^b)=\frac{1}{2}\delta^{ab}$ is the Killing form), $F$ is the field strength
\beq
    F^a_{\mu\nu}=\pa_{[\mu}A_{\nu]}^a+\frac{1}{2}{f^a}_{bc}(A^b_\mu A^c_\nu-A^b_\nu A^c_\nu)~,
\eeq
and the $f^{abc}$ are the structure constants of $\mathfrak g$. The stress tensor is computed to be
\beq
    T_{\mu\nu}=\frac{1}{2\msg_\text{YM}^2}\tr\left(g^{\alpha\beta}F_{\mu\alpha}F_{\nu\beta}-\frac{1}{4}g_{\mu\nu}g^{\alpha\beta}g^{\gamma\delta}F_{\alpha\gamma}F_{\beta\delta}\right)~.
\eeq
For investigating the WEC quantity, it is again useful to go to a set of (co)frames comoving with $t^\mu$:
\beq
    t^\mu t^\nu\,T_{\mu\nu}=\frac{1}{4\msg_\text{YM}^2}\left(\delta^{IJ}F_{0I}F_{0J}+\frac{1}{2}\delta^{IJ}\delta^{KL}F_{IK}F_{JL}\right)\geq 0~,\qquad (I,J=1,\ldots, d-1)
\eeq
which is inherently positive. So the WEC is classically obeyed in Yang-Mills theory.

In considering the SEC, it is useful to first look at the trace of the stress-energy tensor
\beq
    T=\frac{4-d}{8\msg_\text{YM}^2}\tr\left(g^{\mu\nu}g^{\alpha\beta}F_{\mu\alpha}F_{\nu\beta}\right)~,
\eeq
noting that it vanishes when $d=4$\footnote{This theory is classically scale invariant in $d=4$. Of course, quantum mechanically the coupling will (famously) run and its renormalization will break this scale invariance.}. The SEC quantity is
\beq
    t^{\mu}t^{\nu}T_{\mu\nu}+\frac{1}{d-2}T=\frac{(d-3)}{(d-2)}\frac{1}{4\msg_\text{YM}^2}\tr\left(\delta^{IJ}F_{0I}F_{0J}\right)+\frac{1}{(d-2)}\frac{1}{8\msg_\text{YM}^2}\tr\left(\delta^{IJ}\delta^{KL}F_{IK}F_{JL}\right)~,
\eeq
which is positive when $d\geq 3$. Since $d=2$ was already excluded from our analysis, unless we do something exotic with the spacetime dimensions (e.g. as in dim-reg) the SEC is also, generally, obeyed in classical Yang-Mills theory.

For the DEC, we need to consider the quadratic quantity which after some massaging takes the following form in a the set of frames adapted to $t^\mu$:
\begin{align}
    t^\mu t^\nu {T^\alpha}_\mu T_{\alpha\nu}=-\frac{1}{16\msg_\text{YM}^4}&\Big(\Big(\delta^{IJ}\tr(F_{0I}F_{0J})+\frac{1}{2}\delta^{IJ}\delta^{KL}\tr(F_{IK}F_{JL})\Big)^2\nonumber\\
    &\qquad-4\delta^{IJ}\delta^{KL}\delta^{MN}\tr(F_{0I}F_{KJ})\tr(F_{0M}F_{LN})\Big)~.
\end{align}
This potentially seems that it could take either sign, however for positive definite Killing forms Cauchy-Schwarz implies
\beq
    \delta^{IJ}\delta^{KL}\delta^{MN}\tr(F_{0I}F_{KJ})\tr(F_{0M}F_{LN})\leq\frac{1}{2}\delta^{IJ}\delta^{KL}\delta^{MN}\tr(F_{0I}F_{0J})\tr(F_{KM}F_{LN})
\eeq
and so
\beq
    t^\mu t^\nu {T^\alpha}_\mu T_{\alpha\nu}\leq-\frac{1}{16\msg_\text{YM}^4}\Big(\delta^{IJ}\tr(F_{0I}F_{0J})-\frac{1}{2}\delta^{IJ}\delta^{KL}\tr(F_{IK}F_{JL})\Big)^2\leq0~,
\eeq
and the DEC is obeyed.

Lastly we can consider the null energy condition
\beq
    k^\mu k^\nu T_{\mu\nu}=\frac{1}{2\msg_\text{YM}^2}\tr\left(k^\mu k^\nu g^{\alpha\beta}F_{\mu\alpha}F_{\nu\beta}\right)=\frac{1}{2\msg_\text{YM}^2}\tr\left(F_{AB}F_{AB}\right)~,\qquad A,B=2,\ldots,d-1~,
\eeq
which is inherently positive. Here we've used a null set of frames with $\{E_\pm,E_A\}_{A=2,\ldots,d-1}$ with $E_+^\mu=k^\mu$. Thus the Yang-Mills theory with a postive Killing form classically obeys all the energy conditions mentioned above.\\
\\
\textbf{The non-minimally coupled scalar field}\\
\\
As a final example we reconsider the scalar field \eqref{eq:scalaract} and add to it a coupling directly to the Ricci scalar
\beq
    S_\xi=-\frac{\xi}{2}\int \dd^dx\sqrt{-g}\,R\,\phi^2~.
\eeq
Such a coupling might naturally arise say in a dimensional reduction of pure gravity, or as a coupling in effective field theory treatment of gravity. Additionally such a coupling is necessary to couple the conformal scalar to a curved background in which case the coupling takes the specific value (known as the {\it conformal coupling}):
\beq
    \xi_\text{c}=\frac{d-2}{4(d-1)}~.
\eeq
At this coupling the action is classically invariant under $(g,\phi)\rightarrow (\Omega^2\,g,\Omega^{\Delta_\phi}\phi)$ with $\Delta_\phi=\frac{d-2}{2}$ the classical scaling dimension of a free scalar field. This coupling has an interesting effect on the stress-energy tensor in that, in its canonical definition, \eqref{eq:Tcanondef}, it depends explicitly on the Einstein tensor:
\beq\label{eq:NMCT}
    T_{\mu\nu}=\pa_\mu\phi\pa_\nu\phi-g_{\mu\nu}\left(\frac{1}{2}(\pa\phi)^2+V\right)-\xi\left(g_{\mu\nu}\nabla^2+\nabla_\mu\nabla_\nu-G_{\mu\nu}\right)\phi^2~.
\eeq
It is sometimes common to utilize the Einstein equation
\beq
    G_{\mu\nu}=8\pi G_N T_{\mu\nu}~,
\eeq
and move all the ${G_{\mu\nu}}'s$ to one side and write an effective stress-tensor
\beq
    G_{\mu\nu}=8\pi G_N\,T_{\text{eff},\mu\nu}~,
\eeq
which only depends on field-theory quantities:
\beq
    T_{\text{eff},\mu\nu}=\frac{1}{1-8\pi G_N\xi\phi^2}\left(\pa_\mu\phi\pa_\nu\phi-g_{\mu\nu}\left(\frac{1}{2}(\pa\phi)^2+V\right)-\xi\left(g_{\mu\nu}\nabla^2+\nabla_\mu\nabla_\nu\right)\phi^2\right)~.
\eeq
It is this latter stress-tensor that is relevant for constraining semi-classical geometries. Although $S_\xi\rightarrow 0$ in Minkowski space, the non-minimally coupled stress-energy tensor does not reduce to the minimally coupled stress-energy tensor in this limit:
\beq
    T_{\mu\nu}\Big|_\text{Mink.}=\pa_\mu\phi\pa_\nu\phi-\eta_{\mu\nu}\left(\frac{1}{2}(\pa\phi)^2+V\right)-\xi\left(\eta_{\mu\nu}\pa^2+\pa_\mu\pa_\nu\right)\phi^2~.
\eeq
which reflects that $T$ knows about variations about a background as well as a background itself. In this case the terms proportional to $\xi$ are actually {\it improvement terms} to the canonical stress tensor. We also note that when $\xi=\xi_\text{c}$ and $V=0$, these improvement terms make $T_{\mu\nu}$ traceless $T=0$ which is important for its interpretation as a conformal stress tensor.

We saw that the minimally coupled scalar field already violates the SEC (at least when $V\geq0$); here the non-minimal coupling spoils every other energy condition. Let us just investigate the NEC which is the weakest of all the conditions:
\beq\label{eq:NMCTkk}
    k^\mu k^\nu T_{\mu\nu}=(1-2\xi)(k\cdot\pa\phi)^2-2\xi\,\phi\left(k^\mu k^\nu\nabla_\mu \nabla_\nu+\frac{1}{2}R_{kk}\right)\phi~,
\eeq
which can be potentially negative. Additionally the effective null energy density can possibly be negative
\beq
    k^\mu k^\nu\,T_{\text{eff},\mu\nu}=\frac{1}{1-8\pi G_N\xi\phi^2}\left((1-2\xi)(k\cdot\pa\phi)^2-2\xi\phi\left(k^\mu k^\nu \nabla_\mu\nabla_\nu+\frac{1}{2}R_{kk}\right)\phi\right)~.
\eeq
and can even switch sign for trans-Planckian field values, $\phi^2>(8\pi G_N\xi)^{-1}$.

In summary, none of the above classical energy conditions are immutable: each has a counterexample in scalar field theory. Regardless, the NEC stands out as the classical energy condition that holds `the most often'\footnote{In fact, we will see in Lecture \ref{sect:l3} that the regimes of field space of the non-minimally coupled scalar which violate the NEC has somewhat pathological features that call into question trusting them within effective field theory \cite{Fliss:2023rzi}.}. We will see in the next lecture that the NEC plays a special role in geometric theorems as well.

\section{Lecture 2: Constraining classical gravity}\label{sect:l2}

In this lecture we are going to use the energy conditions in the previous lecture to say something about the `marble side' of \ref{eq:marblewood} and the role of energy conditions in {\it singularity theorems} in general relativity.

\subsection*{Singularity theorems in GR}

We should be begin by stating more precisely what we mean by a singularity in general relativity. As physicists we are often accustomed to saying that a singularity is a place where the mathematics of general relativity breaks down. While true in spirit, this is not very precise. For instance, the Schwarzschild black hole metric components
\beq
    \dd s^2=-f(r)\dd t^2+f(r)^{-1}\dd r^2+r^2\dd\Omega^2~,\qquad f(r)=\left(1-\frac{r_\text s}{r}\right)~,
\eeq
`breaks down' when $r=r_\text s$, however we are all very aware that this is simply the signal of a horizon which, locally, is a very ordinary place. This is a coordinate singularity that can be removed by going to, say, Kruskal coordinates. More serious is the singularity at $r=0$ which can be stated in a coordinate invariant manner as a curvature divergence, e.g.
\beq
    R^{\mu\nu\rho\sigma}R_{\mu\nu\rho\sigma}=\frac{48\,G_N^2\,M^2}{r^6}~,
\eeq
in $d=4.$  

We might use this example to state that singularities are places where spacetime curvatures blow up and this is not a bad definition, particularly when thinking of a GR as an effective theory organized by a curvature expansion. However it will be much more helpful for us to relax our definition of singularity to something more mathematically robust. For the purposes of these lectures, we will define a singularity as the following:\\
\\
\textit{A spacetime possesses a singularity if it possesses at least a single incomplete and inextendible causal geodesic.}
\\

This broad definition of `singularity,' on the one hand, does not indicate what kind of singularity a spacetime possesses. While this is certainly a disadvantage, it comes with the advantage of encapsulating a larger class of singular spacetimes that are harder to diagnose with curvature invariants (e.g. conical singularities whose curvature invariants can remain small everywhere except a single point).\\
\\
\textbf{Why singularity theorems?}\\
\\
Before we move into a more detailed description of singularity theorems, let us motivate why we should care. For one, as a physicist reading lecture notes on semi-classical gravity, chances are you are personally interested, as am I, in the large-scale structure of our universe and amazed at the ways we diagnose its features using the tools we mere mortal mammals have access to. More historically, the discovery of the singularities at $r=0$ and $r=r_\text{s}$ were initially quite startling and their interpretation remained unclear (indeed, it over a decade after Schwarzschild's solution to even realize that the horizon singularity was a coordinate artifact). General relativity is a notoriously difficult system to solve, involving a set of coupled, non-linear partial differential equations. Exact solutions are rare and employ a large amount of symmetry. This calls into question the physicality of curvature singularities even at the classical level: perhaps these are pathologies of symmetric spacetimes that are resolved by perturbations breaking that symmetry. For instance, could it not happen, in the absence of spherical symmetry, that infalling geodesics narrowly `miss' and bounce away? Do we have to solve the non-linear equations to find out?

The result of singularity theorems of this lecture is to answer firmly that this is not the case: spacetime singularities are robust and, given a reasonable set of initial data and a motivated energy condition, they are generic. Thus if your prejudice is that spacetime singularities are unphysical and should be resolved, it will not happen classically and instead will require quantum effects. Later lectures (Lectures \ref{sect:l4} and \ref{sect:l6}), will give indication that even QFT effects are not enough and such a resolution will require quantum gravity effects at a higher energy scale (such as the string scale or the Planck-scale).

\subsubsection*{Singularity theorems: general structure}

The general structure of the singluarity theorems of this section follow the basic recipe of proving the incompatibility of a set of assumptions on the initial condition, the global spacetime structure (such as the absence of singularities), the validity of GR, and a motivated energy condition:

\textit{\begin{itemize}
    \item A surface, $\Sigma$, satisfying a global condition (e.g. compactness, non-compactness, Cauchy) and an `initial condition' (e.g. initial convergence of null geodesics, divergence of normals).
    \item Nonsingularity and/or completeness of the spacetime manifold (e.g. null or timelike geodesic completeness).
    \item Satisfaction of the Einstein equation.
    \item Satisfaction of an energy condition (e.g. the WEC, the SEC, or the NEC).
\end{itemize}}
The goal is then to show that these ingredients are mutually incompatible. Barring either a drastic modification of Einstein gravity as an effective theory or the violation of an energy condition (more on this later!), then the conclusion is that initial conditions of the first bullet inevitably lead to violation of the second, i.e. geodesic incompleteness.

\subsubsection*{The Raychaudhuri equation}

The key tool utilized in the following singularity theorems is a geometric identity known as the {\it Raychaudhuri equation} (this is actually a pair of identities depending on if we are interested in timelike geodesics or null geodesics). Since this identity will continue to play a central role in later lectures we pause to make special note of it here. We will follow closely the derivations found in Wald \cite{Wald:1984rg}.

In a nutshell, the Raychaudhuri equation succinctly spells out the conditions by which geodesics converge or diverge depending on background curvature. More specifically, it spells out the behaviour of infinitesimal area elements as they evolve under proper time (or null affine time).\\
\\
\textbf{Timelike Raychaudhuri}\\
\\
Let $\Sigma$ be a $(d-1)$ spacelike submanifold and $t$ be a timelike vector field generating a congruence of timelike geodesics emanating orthogonally from $\Sigma$. The induced metric on $\Sigma$ is given by $h_{\mu\nu}=g_{\mu\nu}+t_\mu t_\nu$. We can decompose the divergence of $t^\mu$ into its trace, traceless-symmetric, and antisymmetric pieces as
\beq
    \nabla_\mu t_\nu=\frac{1}{(d-1)}\vartheta+\sigma_{\mu\nu}+\omega_{\mu\nu}~,
\eeq
where
\beq
    \vartheta=h^{\mu\nu}\nabla_\mu t_\nu~,\qquad \sigma_{\mu\nu}=\nabla_{(\mu}t_{\nu)}-\frac{1}{(d-1)}\vartheta\,h_{\mu\nu}~,\qquad \omega_{\mu\nu}=\nabla_{[\mu}t_{\nu]}~,
\eeq
are the {\it expansion}, the {\it shear}, and the {\it rotation}, respectively. In words these quantify how much an infinitesimal unit of area on $\Sigma$ expands, squeezes, and twists under the flow of $t^\mu$. It is also useful to note that we can express the expansion as 
\beq
    \vartheta=\frac{1}{\sqrt{h}}\frac{\dd}{\dd\tau}\sqrt{h}~,\qquad \frac{\dd}{\dd\tau}:=t^\mu\nabla_\mu~,
\eeq
which makes that statement a bit more manifest. The Raychaudhuri equation itself follows from taking the second flow derivative, $\frac{\dd}{\dd\tau}=t^\mu\nabla_\mu$, of the above equations and massaging (being sure to use $t^\mu t_\mu=-1$, that it is geodesic, and $\nabla_{[\mu}\nabla_{\nu]}=R_{\mu\nu}$):
\beq\label{eq:TLRayEq}
    \frac{\dd\vartheta}{\dd\tau}=-\frac{1}{(d-1)}\vartheta^2-\sigma_{\mu\nu}\sigma^{\mu\nu}+\omega_{\mu\nu}\omega^{\mu\nu}-R_{\mu\nu}t^\mu t^\nu~.
\eeq
This equation relates that the rate of change of the expansion itself to how much it shears, rotates, and to the background curvature. In particular for geodesic flows orthogonal to $\Sigma$, it is possible to arrange for them to be irrotational ($\omega_{\mu\nu}$). If additionally the {\it weak curvature condition} is satisfied, $R_{\mu\nu}t^\mu t^\nu\geq0$, then the Raychaudhuri equation implies
\beq
    \frac{\dd\vartheta}{\dd\tau}+\frac{1}{d-1}\vartheta^2\leq 0~,\qquad(t^\mu t^\nu R_{\mu\nu}\geq0)~,
\eeq
which formalizes the intuition that gravity brings free-falling massive bodies together.\\
\\
\textbf{Null Raychaudhuri}\\
\\
Due to the somewhat singular nature of null submanifolds, formulating the Raychaudhuri equation for null geodesics requires some minor retooling. Let $\Gamma$ be a $(d-2)$-dim spacelike submanifold. Emanating from every point on $\Gamma$ are two independent future pointing null vectors which we can collect into two vector fields in a neighborhood surrounding $\Gamma$ which we will denote $k^\mu$ and $\ell^\mu$ (note that $k^\mu k_\mu=\ell^\mu\ell_\mu=0$ and we normalize $k^\mu\ell_\mu=-1$). We can locally let these vector fields generate a set of null geodesic congruences, e.g.
\beq
    k^\mu \nabla_\mu k_\nu=\ell^\mu \nabla_\mu \ell_\nu=0~.
\eeq
The induced metric on $\Gamma$ is given by
\beq
    \hat h_{\mu\nu}=g_{\mu\nu}+\ell_\mu k_\nu+\ell_\nu k_\mu~.
\eeq
We will additionally define
\beq
    B_{\mu\nu}=\nabla_\mu k_\nu~,
\eeq
and `project it down' to $\Gamma$ as 
\beq
    \hat B_{\mu\nu}\equiv \nabla_\nu k_\mu+(\ell^\rho\,\nabla_\nu k_\rho)k_\mu+(\ell^\rho\,\nabla_\rho k_\mu)k_\nu~,
\eeq
which we can verify is annihilated by both contraction with $\ell^\mu$ and $k^\mu$. We now consider its trace, traceless, and anti-symmetric decomposition
\beq
    \theta =\hat h^{\mu\nu}\hat B_{\mu\nu}~,\qquad\hat\sigma_{\mu\nu}=\hat B_{(\mu\nu)}-\frac{1}{(d-2)}\theta\,\hat h_{\mu\nu}~,\qquad
    \hat\omega_{\mu\nu}=\hat B_{[\mu\nu]}~.
\eeq
These objects have the same interpretation as before but now applied to the codimension-2 area elements of $\Gamma$. Namely we can also show,
\beq
    \theta=\frac{1}{\sqrt{\hat h}}\frac{\dd}{\dd\lambda}\sqrt{\hat h}:=\frac{1}{\sqrt{\hat h}}k^\mu\nabla_\mu\sqrt{\hat h}~.
\eeq
We can now consider the null evolution of $B_{\mu\nu}$ (unhatted), $\frac{\dd}{\dd\lambda}B_{\mu\nu}:=k^{\rho}\nabla_\rho B_{\mu\nu}$, massage it, and then project it down to $\Gamma$, $\widehat{\frac{\dd}{\dd\lambda}B_{\mu\nu}}$, to find
\beq\label{eq:nullRayEq}
    \frac{\dd\theta}{\dd\lambda}=-\frac{1}{(d-2)}\theta^2-\hat\sigma_{\mu\nu}\hat\sigma^{\mu\nu}+\hat\omega_{\mu\nu}\hat\omega^{\mu\nu}-R_{\mu\nu}k^\mu k^\nu~.
\eeq
The key feature of the null Raychaudhuri equation is that when null curvature is positive ($R_{\mu\nu}k^\mu k^\nu\geq 0$), irrotational flows ($\hat\omega=0$) always have non-increasing expansions:
\beq
    \frac{\dd\theta}{\dd\lambda}+\frac{1}{(d-2)}\theta^2\leq 0~,\qquad (R_{kk}\geq0)~,
\eeq
which is the statement that gravity {\it focusses} light rays.

The null Raychaudhuri equation will play a special role in what follows partially because of the relative nicety of the NEC. It is also interesting to note that the areas defining the null expansion are codimension-2; this suggests a very speculative connection to entanglement entropy QFT where the codimension-2 area of the entangling surface plays a special role. We will make this connection more solid in Lectures \ref{sect:l5} and \ref{sect:l6}. Lastly, null limits play a special role in QFT: quantization on null surfaces simplifies in super-renormalizable field theories (a point we will return to in the next lecture) and CFT OPEs simplify significantly at null separation, essentially being organized by twist as opposed to conformal dimension.

\subsection*{The Penrose Singularity Theorem}

With the general ethos, structure, and the key tools of classical singularity theorems laid out, let us now take a look at our first example singularity theorem (and indeed, arguably, the most famous one) which was developed by Penrose in 1965 in a beautiful and short paper \cite{Penrose:1964wq}.

Penrose considers a situation in which a star starts to collapse, but without an assumption of spherical symmetry. He notes that shortly after collapse and the formation of a horizon there is a spacelike region just inside the horizon whose boundary is ${\it trapped}$:\\
\\
{\it A trapped surface is a closed codimension-2 spacelike submanifold whose two future-directed null geodesic congruences have negative expansion.}\\
\\
More plainly, future directed light-rays shot from a trapped surface are initially converging. This is physically sensible for the codimension-2 surface sitting just inside horizon: light cannot escape and so any future-directed ray converges inwards.

Penrose then shows that the NEC is incompatible with null geodesic completeness. We will state his theorem formally.

\begin{theorem}[Penrose Singularity Theorem]\label{thm:PST}
Let $\Sigma$ be an initial non-compact hypersurface such that (i) every past-directed causal geodesic can be extended to intersect $\Sigma$ (i.e. $\Sigma$ is Cauchy),(ii) the future development of $\Sigma$, $M_+$ is a nonsingular manifold, (iii) $M_+$ contains a trapped surface, $\Gamma$, (iv) every future-directed null geodesic of $M_+$ can be extended to arbitarily large affine parameter (i.e. null geodesic completeness), and (v) the null curvature is positive. Then (i)-(v) are incompatible.
\end{theorem}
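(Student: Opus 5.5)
We argue by contradiction: assume (i)--(v) all hold and play two consequences against each other. The first is a \emph{local, focussing} statement, that every null generator leaving the trapped surface $\Gamma$ must leave the boundary of its causal future within finite affine parameter. The second is a \emph{global, topological} statement, that this boundary must be non-compact because $\Sigma$ is non-compact and Cauchy. The tension is that focussing makes the boundary compact.

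\textbf{Focussing.} Let $\theta_0<0$ be the maximum of the initial expansions of the two future-directed null congruences orthogonal to $\Gamma$; it exists and is negative because $\Gamma$ is compact and trapped. These congruences are surface-orthogonal, so $\hat\omega_{\mu\nu}=0$. Condition (v) then lets us use the null Raychaudhuri equation \eqref{eq:nullRayEq} in the form $\frac{\dd\theta}{\dd\lambda}+\frac{1}{d-2}\theta^2\leq0$. Integrating gives
\beq
\frac{1}{\theta(\lambda)}\geq\frac{1}{\theta_0}+\frac{\lambda}{d-2}~,
\eeq
so $\theta\to-\infty$ at some $\lambda\leq (d-2)/\abs{\theta_0}$. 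By (iv) every generator extends this far, so each one reaches a focal (conjugate) point to $\Gamma$ within this uniform affine bound.

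\textbf{Compactness of the boundary.} We then use the standard causal-structure fact that a null geodesic orthogonal to $\Gamma$ which passes beyond a point conjugate to $\Gamma$ enters the chronological future $I^+(\Gamma)$. It therefore no longer lies on the achronal boundary $\partial J^+(\Gamma)$. Hence $\partial J^+(\Gamma)$ is contained in the image of the compact set $\Gamma\times[0,(d-2)/\abs{\theta_0}]$ (two copies, one for each congruence) under the exponential map, which is continuous by (ii) and (iv). Since $\partial J^+(\Gamma)$ is also closed, it is compact.

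\textbf{Contradiction with the Cauchy surface.} The boundary $\partial J^+(\Gamma)$ is an achronal, closed, topological $(d-1)$-manifold without boundary. Using (i), follow a timelike vector field in $M_+$ to map each point of $\partial J^+(\Gamma)$ along its integral curve to $\Sigma$. Achronality makes this map injective, and invariance of domain makes it open. Its image is therefore open, and it is also compact, hence closed in $\Sigma$, so it is all of $\Sigma$. That contradicts the non-compactness of $\Sigma$. This global step is the main obstacle: establishing that $\partial J^+(\Gamma)$ is a topological hypersurface generated by null geodesics orthogonal to $\Gamma$, and that conjugate points really force generators off it, requires the Hawking--Ellis machinery of achronal boundaries, which we would cite rather than rederive. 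The focussing estimate itself is routine.
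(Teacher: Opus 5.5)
Your proposal is correct and follows the same route as the paper's sketch. Raychaudhuri focussing gives a caustic within affine parameter $(d-2)/\abs{\theta_0}$, which makes the future light-cone boundary of $\Gamma$ compact, and the past-directed map of that boundary into $\Sigma$ then contradicts non-compactness; your invariance-of-domain step (image open and closed, hence all of $\Sigma$, which tacitly requires $\Sigma$ to be connected) spells out what the paper's phrase ``one-one map'' leaves implicit.
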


Given what we covered in Lecture \ref{sect:l1}, it is clear that assumption {\it (v)} is equivalent to the supposition of both the Einstein equation and the NEC. Even though Penrose's paper is short and understandably written will only give a coarse rundown of his proof in this lecture.

\begin{proof}[Proof (sketch)]
    Because $\Gamma$ is trapped, both systems of future-directed null congruences emanating from $\Gamma$ are initially negative, $\theta^{(1,2)}(\lambda=0):=\theta^{(1,2)}_0<0$. The null Raychaudhuri equation, \eqref{eq:nullRayEq}, and the null curvature condition applied to each system of congruences imply that the corresponding expansion diverges in finite affine parameter
    \beq
        \theta^{(i)}(\lambda)\leq-\frac{|\theta_0^{(i)}|}{1-\alpha\lambda}~,\qquad \alpha=\frac{|\theta_0^{(i)}|}{(d-2)}~,
    \eeq
    and so the congruence develops a caustic in finite affine parameter. The future lightcone of $\Gamma$, call it $\mc L$, is then compact. However since $\Sigma$ is Cauchy and $M_+$ is complete, the past causal geodesics emanating from $\mc L$ (or a spacelike surface approximating $\mc L$) establish a one-one map from $\mc L$ to $\Sigma$. This is impossible due to non-compactness of $\Sigma$.
\end{proof}

Note that the non-compactness of $\Sigma$ is a key component of Penrose's theorem. This is suited for establishing robustness of singularities under gravitational collapse in asymptotically flat spacetimes. However there are several situations, such as expanding or cosmological, spacetimes that have compact Cauchy surfaces. Thus Penrose's theorem is not sufficient for establishing the robustness of say, the Big Bang singularity going backwards in time. One year after Penrose, Hawking addressed exactly this scenario.

\subsection*{The Hawking Singularity Theorem}

\begin{theorem}[Hawking Singularity Theorem]
Let $M$ be a spacetime such that (i) there exists a compact spacelike Cauchy surface, $\Sigma$, of $M$, (ii) the timelike unit normal vector field to $\Sigma$, $t^\mu$, has negative expansion, $\vartheta<0$, for all points in $\Sigma$, (iii) $M$ is future timelike geodesically complete, and (iv) the strong curvature condition $R_{\mu\nu}t^\mu t^\nu\geq 0$, holds for all points in $M$. Then (i)-(iv) are mutually incompatible.
\end{theorem}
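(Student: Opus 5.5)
The plan mirrors the Penrose argument in Theorem \ref{thm:PST}, but uses timelike geodesics and the timelike Raychaudhuri equation \eqref{eq:TLRayEq} in place of null ones.

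\textbf{Step 1: a uniform bound on the expansion.} Since $\Sigma$ is compact and the expansion $\vartheta$ of the unit normal congruence is continuous and strictly negative on $\Sigma$, it attains a maximum $\vartheta_{\max}<0$. Hence
\beq
    \vartheta\leq-\beta<0 \quad\text{on all of } \Sigma~,
\eeq
for a single constant $\beta>0$. Compactness is used exactly here, and it replaces the local, surface-by-surface initial data of the Penrose theorem.

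\textbf{Step 2: focusing in finite proper time.} Consider the congruence of future-directed timelike geodesics emanating orthogonally from $\Sigma$. Because it is hypersurface-orthogonal, it is irrotational, $\omega_{\mu\nu}=0$. The shear term $-\sigma_{\mu\nu}\sigma^{\mu\nu}$ is non-positive, and assumption (iv) gives $R_{\mu\nu}t^\mu t^\nu\geq0$. Equation \eqref{eq:TLRayEq} therefore yields
\beq
    \frac{\dd\vartheta}{\dd\tau}\leq-\frac{1}{d-1}\vartheta^2~.
\eeq
Integrating gives $\vartheta^{-1}(\tau)\geq\vartheta_0^{-1}+\tau/(d-1)$. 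So $\vartheta\to-\infty$ at some $\tau\leq(d-1)/\beta$, provided the geodesic exists that long, which assumption (iii) guarantees. Every normal geodesic thus reaches a point conjugate to $\Sigma$ within proper time $(d-1)/\beta$.

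\textbf{Step 3: maximizing geodesics.} The next step is the causal-structure input, which I would quote from Wald \cite{Wald:1984rg} rather than prove. First, a timelike geodesic orthogonal to $\Sigma$ that contains a point conjugate to $\Sigma$ between $\Sigma$ and $p$ does not locally maximize proper time from $\Sigma$ to $p$. Second, since $\Sigma$ is a Cauchy surface of the globally hyperbolic spacetime $M$, for every $p$ in the future of $\Sigma$ there exists a timelike geodesic from $\Sigma$ to $p$ that maximizes proper length. That maximizer must be orthogonal to $\Sigma$ and free of conjugate points before $p$.

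\textbf{Step 4: contradiction.} Combining Steps 2 and 3, every point of $I^+(\Sigma)$ lies at proper-time distance at most $(d-1)/\beta$ from $\Sigma$. By future timelike completeness, however, one can follow any future-directed normal geodesic from $\Sigma$ to proper time $\tau=(d-1)/\beta+1$. The point reached lies in $I^+(\Sigma)$, and its distance to $\Sigma$ is at least $(d-1)/\beta+1$, which is the desired contradiction.

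\textbf{Main obstacle.} The main difficulty is Step 3. Two facts are needed. One is that conjugate points destroy maximality, which comes from the second variation of proper time (index form). The other is that maximizers exist, which requires compactness of the space of causal curves in a globally hyperbolic spacetime. Like Penrose's ``one-one map'' step, this is global Lorentzian geometry rather than a direct consequence of Raychaudhuri. In a sketch I would state these as standard lemmas and concentrate on Steps 1, 2 and 4.
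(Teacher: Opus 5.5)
Your proposal is correct and follows the same route as the paper's sketch. Compactness gives a uniform negative bound on $\vartheta$, and the timelike Raychaudhuri equation then forces focusing within a fixed proper time. Every point of the future development of $\Sigma$ therefore lies within that proper time of $\Sigma$, which contradicts future completeness. Your Step 3 (maximizing normal geodesics without conjugate points exist, by global hyperbolicity) only makes explicit the justification the paper leaves implicit when it asserts that every point of $D_+(\Sigma)$ lies within proper time $\bar\tau$ of $\Sigma$.
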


Two comments: First, assumption (iv) is obviously equivalent to assuming both the Einstein equation and the SEC. Second, the time reverse of this theorem is that if the past-directed expansion is negative everywhere on $\Sigma$ then the strong energy condition implies that the spacetime is past timelike geodesically incomplete. A sketch of the proof is as follows.

\begin{proof}[Proof (sketch)]
Because $\Sigma$ is compact, the expansion reaches a maximum, $\bar\vartheta$, which by supposition is negative. So all flows of $t^\mu$ terminate in finite proper time at least before, $\bar\tau$, the time that $\bar\vartheta$ diverges. Thus the future causal development of $\Sigma$, $D_+(\Sigma)$ is finite, with every point $p\in D_+(\Sigma)$ being within a proper time $\bar\tau$ of a point in $\Sigma$. Let $\gamma$ be a future directed timelike curve (not necessarily a geodesic generated by $t^\mu$). If $\gamma$ were extendable then it would have to exit the causal development of $\Sigma$ which contradicts it being a Cauchy surface of $M$. Thus all future directed timelike curves are incomplete and inextendable. 
\end{proof}

There have been subsequent extensions to these singularity theorems, jointly by Hawking and Penrose \cite{Hawking:1970zqf}, weakening the assumptions of the Hawking Singularity Theorem to requiring the WEC as opposed to the SEC. However the main ideas of these singularity theorems are essentially the same as what we have described above.

\section{Lecture 3: Energy conditions in quantum field theory}\label{sect:l3}
  
We now move from classical field theory to quantum field theory. The motivation is obvious: 
neglecting gravity, our most successful theory describing matter is the Standard Model of particle physics based on quantum field theory. Moreover we might expect there to be a regime where we can consistently treat the Einstein equation {\it semi-classically}, i.e. we replace the wood side with an expectation value of the stress tensor
\beq\label{eq:SCEineq}
    G_{\mu\nu}=8\pi G_N\langle T_{\mu\nu}\rangle_\psi~.
\eeq

There are lots of reasons to be suspect of the above equation.\footnote{A much more consistent approach is to start with the gravity and matter path-integral and integrate out the matter to arrive at a gravitational effective action, schematically:
\beq
    Z_\text{matter+grav}=\int Dg_{\mu\nu}D\Phi \,e^{iS_\text{EH}[g]+iS_\text{matter}[g,\Phi]}=\int Dg_{\mu\nu}e^{i\Gamma_\text{grav,eff}[g]}~.
\eeq
The saddle-point equations of $\Gamma_\text{grav,eff}$, if they exist, are then a consistent semi-classical description of the system.}
For one, even defining it requires us to quantize field theory in generic curved backgrounds which, despite being a rich subject that we will not jump into here, is still subtle even for free field theories. Another confusion lies in which how we choose the state, $\psi$, to define the right-hand side of \eqref{eq:SCEineq}. Despite its appearance, \eqref{eq:SCEineq} is not even covariant: at best \eqref{eq:SCEineq} can only hold on a time-slice since the vacuums of a QFT can be significantly different in different coordinate patches (a famous example being the Unruh radiation encountered in the Rindler patch of Minkowski space). More alarming is the mismatch of the classicality of left-hand and right-hand sides of \eqref{eq:SCEineq} which is inconsistent when $\psi$ is a quantum superposition of large gravitating objects \cite{Page:1981aj}.

Nevertheless, semi-classical gravity described as above has its utilities and should not be discarded. It provides a framework for organizing perturbative backreaction: one quantizes a field theory on a classical vacuum background (say a cosmology or the exterior of a black hole), and then computes the effect of adding a quantum source by solving \eqref{eq:SCEineq} order by order in a $G_N$ expansion of $\langle T\rangle$ and the background metric $g_{\mu\nu}$. 

For our purposes we will simply assume that we have in a hand {\it a consistent solution} to \eqref{eq:SCEineq}, e.g. there exists a state defined on a Cauchy slice quantized in the curved background of $g_{\mu\nu}$ such that the stress-energy tensor expectation value in that state satisfies \eqref{eq:SCEineq}. If this sounds impractical or even a bit circular, you are not entirely wrong. However the consistency of \eqref{eq:SCEineq} can be checked a bit more explicitly in AdS/CFT where the stress-energy tensor expectation value can be related to a CFT stress-energy tensor expectation value which itself is a boundary value of the metric. More broadly, we might still hope that general features and constraints on energy in QFT may still constrain geometric solutions to \eqref{eq:SCEineq}. This is the ethos that we will push in these lectures.

\subsection*{Failure of energy conditions in QFT}

Having covered several classical energy conditions and having written our apologia for the semi-classical Einstein equation on the basis that maybe it is constrained by general features of stress-energy in quantum field theory, let us now show that every such pointwise energy condition from Lecture \ref{sect:l1} is violated in quantum field theory!

Let us first give a simple example in scalar field theory on Minkowski spacetime.
\subsubsection*{Example: the 0+2 state in scalar field theory}

Consider the massive scalar field in Fock quantization in Minkowski spacetime:
\beq\label{eq:scalarquant}
    \hat \phi(t,\vec x)=\int\frac{\dd^{d-1}\vec k}{(2\pi)^{d-1}\sqrt{2\omega_{\vec k}}}\left(a_{\vec k}~e^{i\omega_{\vec k}t+i\vec k\cdot x}+a^\dagger_{\vec k}~e^{-i\omega_{\vec k}t-i\vec k\cdot\vec x}\right)~,\qquad \omega_{\vec k}=\sqrt{\vec k^2+m^2}~.
\eeq
with commutation relations
\beq
    [a_{\vec k},a^\dagger_{\vec k'}]=(2\pi)^{d-1}\delta^{(d-1)}(\vec k-\vec k')~.
\eeq
Of course the stress-energy tensor \eqref{eq:scalarSET} involves a coincident field product over field operators, $\hat\phi$, which is undefined in quantum field theory. Thus we supplement this with a normal ordering prescription, denoted by $:~~:$, that in the Fock quantization is canonically taken to be putting all the ${\hat a^{\dagger}}$'s to the left and all the ${\hat a}$'s to the right:
\beq
    \hat T_{\mu\nu}=:\pa_\mu\phi\pa_\nu\phi:-\frac{1}{2}\eta_{\mu\nu}\left(:(\pa\phi)^2:+m^2:\phi^2:\right)~.
\eeq
For the following example we can consider the energy density appearing in the WEC
\begin{align}
    \hat T_{00}=&\int_{\vec k_1}\int_{\vec k_2}\left(\omega_{\vec k_1}\omega_{\vec k_2}+\vec k_1\cdot\vec k_2+m^2\right)a^\dagger_{\vec k_1}a_{\vec k_2}e^{-i(k_1-k_2)_\mu x^\mu}\nonumber\\
    &-\frac{1}{2}\int_{\vec k_1}\int_{\vec k_2}(\omega_{\vec k_1}\omega_{\vec k_2}+\vec k_1\cdot\vec k_2-m^2)\left(a_{\vec k_1}a_{\vec k_2}e^{i(k_1+k_2)_\mu x^\mu}+a^\dagger_{\vec k_1}a^\dagger_{\vec k_2}e^{-i(k_1+k_2)_\mu x^\mu}\right)
\end{align}
where we have introduced a shorthand $\int_{\vec k}:=\int\frac{\dd^{d-1}\vec k}{(2\pi)^{d-1}\sqrt{2\omega_{\vec k}}}$. The first line, the number conserving term, is inherently positive while the second line can have either sign depending on the state. This second term can come into play in expectation values of states that are superpositions of particle number. 

For example we consider the ``0+2'' state \cite{Pfenning:1998ua}:
\beq
    |\psi_\alpha\rangle =\cos\alpha|\Omega\rangle+\frac{\sin\alpha}{\sqrt{2}V}a^\dagger_{\vec k}a^\dagger_{\vec k}|\Omega\rangle~.
\eeq
Above $V\sim\delta^{(d-1)}(\vec k=0)$ is a stand-in for an IR regulator on the spatial volume. This can be made more rigorous by giving the particle momentum a small Gaussian width and the reader is encouraged to repeat this exercise to convince themselves that this detail is inessential to our argument below. We evaluate the $T_{00}$ in this state to find
\beq
    \langle \hat T_{00}(t,\vec x)\rangle_{\psi_\alpha}=\frac{2}{V}\left(\omega_{\vec k}\sin^2\alpha-\sqrt{2}\sin\alpha|\vec k|^2\cos\left(\omega_{\vec k}t+\vec k\cdot\vec x\right)\right)
\eeq
The first term arises from the number conserving piece of the stress-energy tensor and is indeed positive, however the second term oscillates between signs. For sufficiently small $\alpha$, this term can win over the the positive term and we find regions of negative energy density.

\subsubsection*{A general argument against positive energy densities}

One may skeptically brush aside the above example as a peculiarity of free scalar field theory, or because we have evaluated the stress-energy tensor at a point: indeed energy densities localized to a point are rarely meaningful and we are usually taught in any introductory course on QFT that expectation values are distributional i.e. should be integrated against test functions. Unfortunately (or perhaps interestingly) the persistence of negative energy densities is a general feature of local quantum field theories, even if integrated against a smooth function compactly supported on a region of a Cauchy slice. As far as I know, the clearest argument of this was given by Epstein, Glaser, and Jaffe in 1965 \cite{Epstein:1965zza}.

\begin{theorem}[Epstein, Glaser, Jaffe (1965)]
Consider a classically positive operator, $\mc O(x)\geq 0$, quantized $\mc O\rightarrow\hat{\mc O}$ and renormalized via the standard normal ordering prescription:
\beq
    :\hat{\mc O}:=\hat {\mc O}-\langle \hat {\mc O}\rangle_\Omega~,
\eeq
such that $\langle :\hat {\mc O}:\rangle_\Omega=0$. We allow ourselves ourselves to integrate $:\hat{\mc O}:$ against a smooth test function, $h$, of compact support on $\Sigma$:
\beq
    :\hat{\mc O}_h:=\int_\Sigma\dd^{d-1}\vec x\,h(\vec x):\hat{\mc O}(\vec x):~.
\eeq
Either there exists a state such that $:\hat{\mc O}_h:$ possesses a negative expectation value, or $:\hat{\mc O}_h:$ is zero as an operator.
\end{theorem}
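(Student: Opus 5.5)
The plan is to argue by contradiction using the Reeh--Schlieder property of the vacuum together with the vanishing vacuum expectation value $\langle :\hat{\mc O}_h:\rangle_\Omega=0$. Suppose $:\hat{\mc O}_h:$ is positive semi-definite, so that $\langle :\hat{\mc O}_h:\rangle_\psi\geq 0$ for every state $\psi$ in its domain. The first step is to note that a positive operator $A$ with $\langle\Omega|A|\Omega\rangle=0$ must annihilate $\Omega$. Indeed, $A=B^\dagger B$ with $B=A^{1/2}$, so $\|A^{1/2}\Omega\|^2=0$, hence $A^{1/2}\Omega=0$ and therefore $A\Omega=0$. (Equivalently, one can apply Cauchy--Schwarz to the positive form $(\phi,\chi)\mapsto\langle\phi|A|\chi\rangle$, which gives $|\langle\phi|A|\Omega\rangle|^2\leq\langle\phi|A|\phi\rangle\langle\Omega|A|\Omega\rangle=0$ for all $\phi$.) So positivity forces $:\hat{\mc O}_h:\,|\Omega\rangle=0$.

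The second step turns this into the vanishing of the operator itself. Since $h$ has compact support on $\Sigma$, $:\hat{\mc O}_h:$ is localized in a bounded region $\mc R$, and hence belongs to (or is affiliated with) the local algebra $\mathfrak A(\mc R)$. Pick a region $\mc R'$ spacelike separated from $\mc R$. It is nonempty because $\mc R$ is bounded. By the Reeh--Schlieder theorem, $\Omega$ is cyclic for $\mathfrak A(\mc R')$: vectors $Q\Omega$ with $Q\in\mathfrak A(\mc R')$ are dense in the Hilbert space. Microcausality gives $[:\hat{\mc O}_h:,Q]=0$, so
\beq
    :\hat{\mc O}_h:\,Q|\Omega\rangle=Q\,:\hat{\mc O}_h:|\Omega\rangle=0~.
\eeq
Thus $:\hat{\mc O}_h:$ vanishes on a dense set. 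Being a (closable, or at least densely defined, symmetric) operator, it is therefore the zero operator. This is precisely the equivalent statement that $\Omega$ is separating for $\mathfrak A(\mc R)$. Contrapositively, if $:\hat{\mc O}_h:\neq0$, there is some $\psi$ with $\langle :\hat{\mc O}_h:\rangle_\psi<0$.

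For a concrete version in the free Fock theory, where the paper's normal-ordering prescription lives, I would also note a direct construction that bypasses Reeh--Schlieder. Take $\psi=\cos\alpha\,\Omega+\sin\alpha\,\chi$ with $\chi$ a two-particle state. For the quadratic operator $\hat{\mc O}$, the expectation value is $\sin^2\alpha\langle\chi|:\hat{\mc O}_h:|\chi\rangle+2\sin\alpha\cos\alpha\,\mathrm{Re}\langle\Omega|:\hat{\mc O}_h:|\chi\rangle$, using $\langle :\hat{\mc O}_h:\rangle_\Omega=0$. The cross term is linear in $\alpha$ and the diagonal term quadratic, so for small $\alpha$ of the appropriate sign the total is negative, unless $:\hat{\mc O}_h:\Omega$ has no overlap with any state. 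This is the same mechanism as in the 0+2 example above.

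The main obstacle is the analytic bookkeeping in the second step: $:\hat{\mc O}_h:$ is an unbounded operator-valued distribution, so one must make sense of ``positive,'' of its domain containing $\Omega$ and the dense set $\mathfrak A(\mc R')\Omega$, and of commutation with unbounded $Q$. I would handle this at the physics level of rigor: work on the common invariant dense domain of smeared polynomial fields (the Wightman domain), where Reeh--Schlieder holds in the form ``polynomials in fields smeared in $\mc R'$ acting on $\Omega$ are dense,'' and locality holds as genuine commutation on that domain.
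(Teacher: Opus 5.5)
Your proposal is correct and is essentially the paper's proof: your Cauchy--Schwarz (or $A^{1/2}$) step has the same content as the paper's $2\times 2$ matrix of matrix elements between $|\Omega\rangle$ and $|\phi\rangle$ with negative determinant (the paper phrases it constructively, exhibiting the negative-energy superposition), and your second step is exactly its locality-plus-Reeh--Schlieder argument. One small fix to your optional Fock-space aside: the cross term only sees $\mathrm{Re}\,\langle\Omega|:\hat{\mc O}_h:|\chi\rangle$, which can vanish even when the overlap is nonzero, so you should allow a relative phase, $\cos\alpha\,|\Omega\rangle+e^{i\varphi}\sin\alpha\,|\chi\rangle$.
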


\begin{proof}
    Let us show firstly that either $:\hat{\mc O}_h:$ annihilates the vacuum, $:\hat{\mc O}_h:$ or possesses a negative expectation value. Suppose it does not annihilate the vacuum. Then there exists at least one state, $|\phi\rangle$, such that
    \beq
        \langle \phi|:\hat{\mc O}_h:|\Omega\rangle :=\mathfrak O_h\neq 0~.
    \eeq
    Now consider the 2x2 matrix spanned by expectaion values of $:\hat{\mc O}_h:$ between $|\Omega\rangle$ and $|\phi\rangle$:
    \beq
        M:=\left(\begin{array}{cc}0&\mathfrak O_h\\\mathfrak O^\ast_h &\langle :\hat{\mc O}_h:\rangle_\phi\end{array}\right)~.
    \eeq
    The determinant of $M$ is negative and so contains one positive and one negative eigenvalue. It follows that there exists a linear combination of $|\Omega\rangle$ and $|\phi\rangle$ such that $:\hat{\mc O}_h:$ has a negative expectation value.

    So if $:\hat{\mc O}_h:$ is positive, it must annihilate the vacuum exactly, $:\hat{\mc O}_h:|\Omega\rangle=0$. However this implies that $:\hat{\mc O}_h:$ is zero as an operator. This is because by causality
    \beq
        [:\hat{\mc O}_h:,\hat{\mc O}'(\vec y)]=0~,
    \eeq
    for any other operator spacelike separated from $:\hat{\mc O}_h:$, $\vec y\notin\text{supp}(h)$. This implies
    \beq
        :\hat{\mc O}_h:\hat{\mc O}'(\vec y)|\Omega\rangle = 0~.
    \eeq
    However by the Reeh-Schlieder theorem \cite{Reeh:1961ujh} such states of local operators acting on the vacuum are dense in the Hilbert space and so $:\hat{\mc O}_h:$ must be the zero operator.
\end{proof}

Note how this proof generalizes our earlier example: the two-particle state is precisely the state with non-zero overlap with $\hat T_{00}$ acting on the vacuum and our negative energy density was constructed out of the superposition of these states. It is also useful to consider how genuinely positive quantum operators, such as the Hamiltonian or total momentum evade the above theorem: they can be possible because they annihilate the vacuum. This does not imply they are exactly zero because they are integrated over an entire Cauchy slice and so there is no analogous argument using the Reeh-Schlieder theorem. Lastly, we remark that the Reeh-Schlieder theorem follows from the dense entanglement structure of the QFT vacuum and so this is our first encounter on the interesting interplay between energy and quantum information which we will revisit in Lectures \ref{sect:l5} and \ref{sect:l6}.

\subsection*{QEIs}\label{sec:QEIs}

Above we have seen that quantum fields can yield negative expectation values even for classically positive operators due to the necessity of renormalizing their short distance divergences. This means that the pointwise energy conditions of Lecture \ref{sect:l1} are all violated. The situation is in fact, even worse: typically not only are these expectation values possibly negative, they are not even lower bounded. The argument from Witten is the following \cite{WittenTalk}: imagine smearing a classically positive operator in some compact domain and shrinking the domain to a point. Initially it will be negative for some state. When that domain is very small, that expectation value is dominated by high-frequency modes and we can better and better approximate the expectation value by the expectation value at the UV fixed point which will be a CFT and display a scaling symmetry. For this purposes we will assume our operator of interest approaches a CFT primary operator (which is true for the stress-energy tensor) and the state of interest approaches a sum over states prepared by primary operators which is a dense set in a CFT via the state-operator correspondence. The expectation value of a primary operator obeys a scaling symmetry
\beq
    \langle:\hat{\mc O}(\lambda x)\rangle_{\psi}\Big|_{\text{high-freq.}}\approx \lambda^{-\Delta}\langle:\hat{\mc O}(x)\rangle_{\tilde\psi}\Big|_\text{high freq.}~,
\eeq
where $\tilde\psi$ is the state transformed by the dilatation operator. If the expectation value was negative in the state $\psi$ this implies it is always possible to find a state of arbitrarily negative expectation values by shrinking the domain upon which $\hat{\mc O}$ has support.

Thus to get physically meaningful quantities, we need to `smear' operators over some spacetime domain. Given our discussion in the previous section of this lecture, such smeared operators can still be negative, but their negativity is under control due to the smearing. This brings us to the introduction of {\it quantum energy inequalities} (QEIs). We will start this discussion with a simple example of a QEI.

\subsubsection*{Example: Worldline QEI of a 4$d$ scalar field}

Consider again the scalar field energy density, $T_{00}$, in Minkowski space in four dimensions. We average the energy density along a timelike worldline with a smooth function of compact support $g(t)$. This is lower bounded by \cite{Fewster:1998pu}
\beq\label{eq:FewsterEvesonBound}
    \int \dd t\,g(t)^2\langle T_{00}(t,\vec x)\rangle_\psi\geq - \frac{1}{16\pi^3}\int_{m}^\infty \dd\omega\,\omega^4\,\abs{\tilde g(\omega)}^2\,Q_3(\omega/m)~,
\eeq
with
\beq
    Q_3(x)=\sqrt{1-\frac{1}{x^2}}\left(1-\frac{1}{2x^2}\right)-\frac{1}{2x^4}\log\left(x+\sqrt{x^2-1}\right)~.
\eeq
for all Hadamard states, $\psi$.\footnote{A Hadamard state being one that has the same singularity structure of local operator expectation values as the vacuum itself. These form a large class of physically relevant states in QFT.} We note $Q_3(1)=0$ and $\lim_{x\rightarrow\infty}Q_3(x)=1$. If the field is massless then this bound is
\beq
    \int\dd t\,g(t)^2\langle \hat T_{00}(t,\vec x)\rangle_\psi\geq-\frac{1}{16\pi^2}\int\dd t\,\abs{g''(t)}^2~.
\eeq

This bound applied to $g_\lambda(t):=\lambda^{-1/2}g(t/\lambda)$ illustrates the scaling behavior mentioned above:
\beq
    \int\frac{\dd t}{\lambda}g(t/\lambda)^2\langle \hat T_{00}(t,\vec x)\rangle_\psi\geq-\frac{1}{16\pi^2\lambda^4}\int\dd t\abs{g''(t)}^2~.
\eeq
The limit that $\lambda\rightarrow 0$ shrinks the support of $g_\lambda$ to a point and we find the lower bound trivializes. Oppositely, in the limit that $\lambda\rightarrow\infty$, $g_\lambda$ becomes constant and we derive the {\it averaged weak energy condition} (AWEC):
\beq\label{eq:AWEC}
    \int_{-\infty}^\infty\dd t\,\langle\hat T_{00}(t,\vec x)\rangle_\psi\geq 0~.
\eeq

The key physical significance of this QEI that will persist for all QEIs is that while negative energy densities can exist along a geodesic, they are eventually compensated for elsewhere along that geodesic. This was described by Ford and Roman as {\it quantum interest} \cite{Ford:1999qv}: one can borrow a small amount negative energy density for a short time, but this must be `paid back' with a greater amount of positive energy density. The AWEC \eqref{eq:AWEC} indicates that, in the end, the one making the loan comes out ahead.

\subsubsection*{State-independent QEIs}

The worldline bound \eqref{eq:FewsterEvesonBound} is a particular example of a {\it state-independent} QEI in which the lower bound involves only $\mathbb C$-number quantities depending on the smearing function. The general structure of a state-independent QEI is the following:
\beq
    \int_{D}\dd^p x\,g(x)^2\,\langle\hat\rho(x)\rangle_\psi\geq - \mc D^{(\rho)}[f]~,
\eeq
where $\hat \rho$ the operator of interest (such as an energy density or a null-energy density), $D$ is some codimension-$(d-p)$ spacetime region on which $f$ is compactly supported, and $\mc D^{(\rho)}[\cdot]$ is a distributional map on $f$ (i.e. it could be a differential operator or could involve a Fourier transform), whose form can depend on the operator in question. 

This is a somewhat restrictive form of QEI and not all fall into this class. However for most of this lecture I will focus on some notable state-independent QEIs. I will return to talk shortly about state-dependent QEIs and their interpretation at the end of this lecture.

\subsubsection*{Free field construction of state-independent QEIs in Minkowski space}

Let $\rho$ be a classically positive quantity. It should be expressed as a sum (with positive coefficients) of squares of some combinations of the elementary fields and w.l.o.g we will take the case that $\rho$ itself is a perfect square, $\rho(x)=\mc O(x)\mc O(x)$ for some $\mc O$. Upon quantization we renormalize $\rho$ by normal ordering:
\beq
    :\hat\rho(x): =\hat\rho(x)-\langle\hat\rho(x)\rangle_\Omega~.
\eeq
We now consider the smeared quantity
\beq
    \mc Q_\rho[g]=\int_{D_p}\dd^px\,g(x)^2\langle :\hat\rho(x):\rangle_\psi~,
\eeq
where $D_p$ is a $p$-dimensional timelike subspace of $\mathbb R^{1,d-1}$. We can artificially ``point-split" the composite $\mc O\mc O$ along $D_p$ by a subspace Fourier transform:
\beq\label{eq:Qrho1}
    \mc Q_\rho[g]=\int_{\tilde D_p}\frac{\dd^p\xi}{(2\pi)^p}\int_{D_p}\dd^px\,\dd^px'g(x)g(x')e^{i\xi\cdot(x-x')}\left(\langle \mc O(x)\mc O(x')\rangle_\psi-\langle\mc O(x)\mc O(x')\rangle_\Omega\right)~.
\eeq
where $\tilde D_p$ is the dual space to $D_p$. The first term is inherently positive and the source of the negativity comes from the vacuum subtraction, which is actually divergent. However there are a lot of cancellations between the two terms that we can take into account before dropping the positive term in bounding $\mc Q_\rho$ from below.\\
\\
\textbf{Assumption:} the commutator of $\mc O$ with itself is a $\mathbb C$-number,
\beq
    [\mc O(x),\mc O(x')]=c(x,x')\hat{1}~.
\eeq
This is a strong assumption as it needs to apply for all points in the timelike domain, $D_p$, not just for spacelike separated pairs. This assumption is obviously true for free fields. For CFTs this assumption strongly restricts the operators that can appear in the $\mc O\mc O$ OPE and implies that $\mc O$ is a (generalized) free field \cite{Fliss:2021phs}.

This assumption implies that the anti-symmetric part of \eqref{eq:Qrho1} cancels in the vacuum subtraction and so we can replace $\tilde D_p$ by a half-space $\tilde H_p=\{\xi\in\tilde D_p\big|\xi_0\geq 0\}$ of positive frequency:
\beq
    \mc Q_\rho[g]=2\int_{\tilde H_p}\frac{\dd^p\xi}{(2\pi)^p}\int_{D_p}\dd^px\,\dd^px'g(x)g(x')e^{i\xi\cdot(x-x')}\left(\langle \mc O(x)\mc O(x')\rangle_\psi-\langle\mc O(x)\mc O(x')\rangle_\Omega\right)~.
\eeq
The first term is still inherently positive and the second term inherently negative, but under the cancellation they are both {\it finite}. This is because the cancellation has softened the contact divergence, $\int_{\tilde D_p}\frac{\dd^p\xi}{(2\pi)^p}e^{i\xi\cdot(x-x')}=\delta^p(x-x')$ to $\int_{\tilde H_p}\frac{\dd^p\xi}{(2\pi)^p}e^{i\xi(x-x')}\sim\frac{\delta^{(p-1)}(\vec x-\vec x')}{x_0-x_0'}$. Thus we can drop the inherently positive term and arrive a finite lower bound for $\mc Q_\rho$:
\beq\label{eq:freeQEI1}
    \mc Q_\rho[g]\geq-2\int_{\tilde H_p}\frac{\dd^p\xi}{(2\pi)^p}\int_{D_p}\dd^px\dd^px'g(x)g(x')e^{i\xi(x-x')}\langle \mc O(x)\mc O(x')\rangle_\Omega~.
\eeq
This bound is state-independent in that it is determined only by the vacuum of the QFT and applies for the expectation values of all states $\psi$ with the same singularity structure as the vacuum (that is, Hadamard states). Translation invariance of the vacuum additionally implies that in momentum space
\beq
    \langle \tmc O(q)\tmc O(q')\rangle_\Omega=(2\pi)^p\delta^{p}(q+q')\mc G_{\rho}(q)~,
\eeq
for some $\mc G_{\mc O\mc O}$ and so our QEI is most easily expressed in momentum space:
\beq\label{eq:freeQEI2}
    \mc Q_\rho[g]\geq-2\int_{\tilde D_p}\frac{\dd^pq}{(2\pi)^p}\abs{\tilde g(q)}^2\int_{\tilde H_p}\frac{\dd^p\xi}{(2\pi)^p}\mc G_\rho(q-\xi)~.
\eeq
One can check that \eqref{eq:freeQEI2} includes the timelike worldline bound described above, \eqref{eq:FewsterEvesonBound}.

\subsection*{ANEC}

The most celebrated state-independent energy inequality is the {\it averaged null energy condition} which states that the integral of the null-energy density integrated over an entire null geodesic, $\gamma$, is a positive operator:
\beq\label{eq:ANEC}
    \int_{\gamma}\dd u\,k^\mu k^\nu\langle T_{\mu\nu}(u,v;\vec x_\perp)\rangle_\psi\geq 0~,
\eeq
where $k$ is the tangent to $\gamma$. For free fields in Minkowski spacetime the ANEC follows from boosting the AWEC to a null hypersurface. We can also verify this directly in Fock quantization. For notational simplicity let $k=\pa_+:=(\pa_0+\pa_1)$. The null-stress energy tensor is given by
\begin{align}
    T_{++}=&\int_{\vec q_1}\int_{\vec q_2}\left(a^\dagger_{\vec q_1}a_{\vec q_2}e^{-i(q_1-q_2)_\mu x^\mu}-\frac{1}{2}(a^\dagger_{\vec q_1}a^\dagger_{\vec q_2}e^{i(q_1+q_2)_\mu x^\mu}+a_{\vec q_1}a_{\vec q_2}e^{-i(q_1+q_2)_\mu x^\mu})\right)~.
\end{align}
Again the first term, the number conserving one, is inherently positive. In this case, however, the integral $\int\dd x^+$ completely removes the second term (at least for states where it doesn't diverge). This is because it brings down a delta function
\beq
    (2\pi)\delta(q_{1+}+q_{2+})~,
\eeq
which can't be satisfied and so vanishes as a distribution.\footnote{We recall that the positive frequency integration range in canonical quantization is
\beq
    q_+q_-\geq\vec q_{\perp}^2+m^2~,\qquad q_+\geq 0~,
\eeq
when stated in lightcone coordinates.}

Much like the Hamiltonian, the ANE and the AWE operators are allowed to be positive operators because they annihilate the vacuum and they are not identically zero because they are integrated over an entire causal geodesic: there are no spacelike-separated operators on which to run the Reeh-Schlieder argument.

The ANEC is perhaps also the greatest success story of a quantum energy inequality: it has been proven for {\it all} quantum field theories in Minkowski spacetime, including interacting ones \cite{Hartman:2016lgu,Faulkner:2016mzt}. The proofs in these two papers are significantly different and illustrate the intricate relation between energy and causality, and energy and quantum information, respectively. We will return to give a sketch of the latter proof in the following lecture. The ANEC is also practical and can be used to restrict long-traversable wormholes and other topologically pathological spacetimes. I will revisit this point in Lecture \ref{sect:l4}. This utility is in fact two-way. For 4d CFTs the ANEC implies the Hofman-Maldacena bounds on the $a$ and $c$ central charges of the theory \cite{2008_Hofman}:
\beq\label{eq:HMbounds}
    \frac{31}{18}\geq \frac{a}{c}\geq \frac{1}{3}~,
\eeq
with the upper and lower bounds saturated by free vectors and free scalars, respectively. Recently the positivity of the ANEC was used to establish new monotonicity theorems along renormalization group flows \cite{Hartman:2023qdn,Hartman:2023ccw}.

However, the ANEC also has its drawbacks. For one, it has only been proven in generality in Minkowksi space; a curved background proof only pertains to free fields, \cite{Kontou:2015yha}. Moreover its pseudo-global nature (integration over an entire null geodesic) makes it a severe departure from the classical, pointwise, energy conditions and severely restricts its validity and thus utility. As an easy example we can consider a QFT on $\mathbb R_t\times S^1\times X_{d-2}$ and look at the null-energy density averaged over the null geodesic wrapping the $S^1$, as depicted in Figure \ref{fig:AANEC}. Because of the compact direction, the QFT has a negative Casimir energy density which then integrates to negative infinity:
\beq
    \int_\gamma \dd \lambda\,\langle T_{kk}\rangle \sim -\int_\gamma \dd\lambda\,\rho_\text{Cas}=-\infty~.
\eeq
\begin{figure}[h!]
\centering
    \begin{tikzpicture}[isometric view]
\draw[->] (-2,0,1) -- (-2,0,6);
\node[above] at (-2.5,0,3.5) {$t$};
\node[below,blue] at (-.5,0,1.5) {$\gamma$};
\foreach[count=\jj]\j in {dashed,solid}
{
\draw[black,solid] (135:1) arc (135:-45:1) --++ (0,0,6) arc (-45:135:1) -- cycle;
\draw[black,\j] (135:1) + (0,0,6*\jj-6) arc (135:-45:1); 
\foreach\z in {0,3}
     \draw[dashed,blue,thick] plot[domain=0:180,samples=145] ({cos(\x-45)},{sin(\x-45)},\z+\x/120);
}
   \draw[black,solid] (135:1) arc (135:315:1) --++ (0,0,6) arc (315:135:1) -- cycle;
   \foreach\z in {0,3}
     \draw[blue,thick,-Stealth] plot[domain=180:360,samples=145] ({cos(\x-45)},{sin(\x-45)},\z+\x/120);
    \draw[thick,red] (.5,0,1.5) -- (.5,0,4.5);
\end{tikzpicture}
\caption{A null geodesic, $\gamma$, wrapping a compact direction in spacetime. Because of Casimir energy, the averaged null energy along this geodesic is infinitely negative. The causal line, in red, connecting separate points in $\gamma$ exclude this example from the achronal ANEC.}\label{fig:AANEC}
\end{figure}
This has led to an improvement of the ANEC known as the {\it achronal ANEC} (AANEC) which states that the ANEC, \eqref{eq:ANEC}, holds for all achronal $\gamma$ (an achronal path being one in which no two separate points can be connected by a causal line segment; see Figure \ref{fig:AANEC}). This eliminates the Casimir example because the geodesic in question is obviously chronal. The condition on chronality is much more restrictive however: not every spacetime admits a complete, achronal, null geodesic. It would be desirable if we had a more local bound on the null energy density; let us explore some attempts in this direction below.

\subsection*{2d CFTs}

A bound on integrated null-energy density was proposed for free field theories in two dimensions by Flanagan in 1997 \cite{Flanagan:1997gn} and was later proven by to hold generically in 2d CFTs by Fewster and Hollands in 2004 \cite{Fewster:2004nj}. The Fewster-Hollands argument is neat and goes as follows. Consider the stress-energy tensor in a 2d CFT. Tracelessness implies $T_{-+}=0$ and so conservation implies that the null stress-tensor is only a function of $x^+$, $\pa_-T_{++}=0$. We integrate it along the null line with a smooth smearing function of compact support,
\beq
    \mc Q_{++}[g]=\int\dd x^+\,g(x^+)^2\langle T_{++}(x^+)\rangle_\psi~.
\eeq
We now perform a coordinate transformation $x^+\rightarrow u(x^+)$. Because of the infinite dimensional symmetry of 2d CFTs there exists a unitary operator, $\hat U$, implementing this coordinate change up to the anomaly given by the Schwarzian derivative, $\{u,x\}=\frac{u'''}{u'}-\frac{3}{2}\left(\frac{u''}{u'}\right)^2$:
\beq
    \hat U^\dagger T_{++}(x^+)\hat U=u'(x^+)^2T_{uu}(u(x^+))-\frac{c}{24\pi}\left\{ u(x^+),x^+\right\}~.
\eeq
with $c$ the central charge. We can choose $u'(x^+)=g(x^+)^{-2}$ to `uniformize' the integral over the null line and find
\beq\label{eq:2dQpptransformed}
    \mc Q_{++}[g]=\int \dd u\langle T_{uu}(u)\rangle_{U^\dagger\psi}-\frac{c}{24\pi}\int \dd x^+\,\left(g'(x^+)\right)^2~.
\eeq
The first term is the ANE quantity in the state $U^\dagger|\psi\rangle$ and is positive. Thus we find the semi-local null-energy bound
\beq\label{eq:2dFlanaganBound}
    \int\dd x^+\,g(x^+)^2\langle T_{++}(x^+)\rangle_\psi\geq -\frac{c}{24\pi}\int\dd x^+g'(x^+)^2~.
\eeq
Moreover, this bound is tight: we know, in principle, the state that saturates it and so there is no bound that can improve on \eqref{eq:2dFlanaganBound}. This is because we have already argued that the ANE is allowed to be positive as an operator because it annihilates the vacuum exactly. Thus if $|\psi\rangle=U|\Omega\rangle$, the ANE portion of \eqref{eq:2dQpptransformed} exactly vanishes which saturates \eqref{eq:2dFlanaganBound}.

\subsection*{SNEC}

Can we promote the integrated null-bound of Flanagan to higher dimensions? We gave some arguments above that smeared equalities over compact spacelike regions are generally unbounded while we know that a compact timelike wordline smearing is lower bounded (at least in free field theory), e.g. \eqref{eq:FewsterEvesonBound}. It is unclear where a finite null segment show sit between these two classifications. However it was shown by Fewster and Roman \cite{Fewster:2002ne} that the null stress-energy tensor integrated along a null segment diverges to $-\infty$ amongst the class the ``0+2" states in 4d free scalar field theory. The origin of this divergence is clear from dimensional analysis: in 2d the stress tensor has dimension $\Delta=2$ and boost-weight two (i.e. two lower ${\scriptstyle (+)}$ indices). Thus it can be lower bounded by a quantity with two null derivatives $\sim \pa_+^2$ and no other length scales. However in higher dimensions $T_{++}$ has dimension $\Delta=d$ and boost-weight two. Thus we can soak up the boosts with two null derivatives, however it is unclear (in flat space) what length scale should soak up the other $d-2$ dimensions in the bound. As is wont to happen in QFTs, any opening for a divergence leads to a divergence and this length scale is the UV cutoff.

Freivogel and Krommydas conjectured an alternative, proposing that in a QFT consistently coupled to dynamical gravity the natural cutoff is the Planck-scale. This led to the conjectured {\it smeared null energy condition} (SNEC) \cite{Freivogel:2018gxj}:
\beq\label{eq:gravSNEC}
    \int_{-\infty}^\infty \dd x^+g(x)^2\langle T_{++}(x)\rangle_\psi\geq-\frac{1}{16\pi G_N}\int_{-\infty}^\infty\dd x^+g'(x^+)^2~.
\eeq
The SNEC is an interesting (and powerful) bound because it involves both QFT and gravity variables. Later we will see how the SNEC can be used as an ingredient for an improved singularity theorem even in the presence of initial NEC violation. However this interplay of QFT and gravity makes gathering evidence for the SNEC difficult and it remains conjectural. Leichenauer and Levine \cite{Leichenauer:2018tnq} showed that it holds for holographic CFTs coupled to dynamical gravity through braneworld holography and using the principle of `no bulk shortcut' (Ref. \cite{Leichenauer:2018tnq} also fixed the precise coefficient which does not follow from dimensional analysis). However evidence for the SNEC directly from QFT coupled to dynamical gravity (or even QFT on a fixed curved background) remains unavailable.

Fliss and Freivogel \cite{Fliss:2021gdz} used the properties of free fields quantized on a light-sheet, to construct a field-theory SNEC for free field theories equipped with a UV cutoff. Because the null quantization of free field theory will come in handy when we discuss QNEC below let us summarize some of the key points of that construction.

The salient point is the Hilbert space of free fields quantized on a lightsheet, $\mc L$, becomes {\it ultralocal} in the transverse dimension \cite{Burkardt:1995ct,Wall:2011hj}. The vacuum and the operator algebra factorize into algebras lie along on each null ray. In practice to make sense of this we need to ``coarse grain'' them into small ``pencils''
\beq
    \mc L=\bigcup_{\mathfrak p}\mc P_{\mathfrak p}~,\qquad |\Omega\rangle_{\mc L}=\bigotimes_{\mathfrak p}|\Omega_{1+1}\rangle_{\mathfrak p}~,
\eeq
of transverse width $a^{d-2}$ as depicted in Figure \ref{fig:pencils}; this coarse graining implies that $a^{-1}$ plays the role of a UV cutoff on the transverse momenta \cite{Fliss:2021gdz}. 
\begin{figure}[h!]
\centering
\begin{tikzpicture}[3d view,perspective]
    \draw (0,0,0) -- (0,4,0) -- (3,4,3) -- (4,0,4) -- (0,0,0);
    \foreach \j in {.5,1,1.5,2,2.5,3,3.5} {
     \draw[dashed] (0,\j,0) -- (4-.25*\j,\j,4-.25*\j);
        }
    \fill[magenta,opacity=.5] (0,1,0) -- (0,1.5,0) -- (3.625,1.5,3.625) -- (3.75,1,3.75) -- (0,1,0);
    \node[above] at (3.75,1.25,3.75) {$a$};
        \draw[<->] (0,.75,-.75) -- (0,4.75,-.75);
    \node[below] at (0,2.8,-.8) {$\mathcal{L}$};
    \node[below] at (0,1.25,0) {$\mathcal P_{\mathfrak{p}}$};
\end{tikzpicture}
\caption{A lightsheet $\mathcal L$ coarse grained into disjoint collections of null rays, i.e. pencils, of transverse width $a^{d-2}$.}\label{fig:pencils}
\end{figure}
The operator algebra of the free field commutes amongst pencils and on each pencil is identical to that of a 2d chiral CFT:
\beq
    [\phi(x^+,\vec y_{\perp,\mathfrak p}),\pa_+\phi(x^{+'},\vec y'_{\perp,\mathfrak p'})]=\frac{i}{a^{d-2}}\delta_{\mathfrak p\mathfrak p'}\delta(x^+-x^{+'})~,
\eeq
where the lightsheet field $\hat \phi$ is related to a chiral CFT field on the pencil $\mc P_\mathfrak p$ by $\hat \phi(x^+,\vec y_{\perp,\mathfrak p})=a^{\frac{2-d}{2}}\varphi_{\mathfrak p}(x^+)$. For the null energy density integrated along the null line contained in pencil $\mathcal P_{\mathfrak p}$ we can utilize the properties of the chiral CFT living on that pencil and the 2d bound \eqref{eq:2dFlanaganBound} to find
\beq\label{eq:fieldSNEC}
    \int\dd x^+g(x^+)^2\langle T_{++}(x)\rangle_\psi\geq-\frac{1}{24\pi a^{d-2}}\int\dd x^+g'(x^+)^2~
\eeq
which we call the ``free field SNEC."

\subsection*{DSNEC}

It would be nice to have a `regulated' SNEC directly accessible in QFT without needing to specify the UV cutoff, or without the special properties of free fields on a lightsheet. The key alluded above is the need to specify another length scale. We also know that the structure of the free QEIs, \eqref{eq:freeQEI2}, works when the smearing is over a subspace, $D_p$, with timelike extent. This leads us to consider smearing the null energy density along {\it two} null directions, $x^\pm$. This provides an invariant length $\ell^2\sim-\delta^+\delta^-$, where $\delta^\pm$ are the smearing length scales in the $x^\pm$ direction, respectively. The result is the {\it double smeared null energy condition} (DSNEC) \cite{Fliss:2021gdz,Fliss:2021phs} which takes the schematic form
\beq\label{eq:DSNEC}
    \int \dd^2x^\pm g(x^+,x^-)^2\langle T_{++}(x)\rangle_\psi\geq-\frac{\mc N[\mu]}{(\delta^+)^{\frac{d+2}{2}}(\delta^-)^{\frac{d-2}{2}}}~,
\eeq
where $\mc N[\mu]$ is a function of any mass scales measured in units of the smearing length, $\mu^2=\delta^+\delta^-m^2$, and scaling linearly with the number of fields. For free fields the DSNEC follows from the general procedure we outlined above leading to \eqref{eq:freeQEI2}, however since it does not involve any theory specific UV regulator, it has hope of being true more broadly. Additionally by taking the smearing region to be long and thin, $(\delta^+,\delta^-)\rightarrow (\infty,0)$ holding $\ell^2\sim\delta^+\delta^-$ fixed we recover the ANEC. Alternatively, in the limit that $\delta^-\rightarrow 0$ with $\delta^+$ fixed (which then implies that $\ell^2\sim\delta^+\delta^-\rightarrow0$) we arrive at a geometrically regulated SNEC,
\beq
    \int\dd x^+ g(x^+)^2\langle T_{++}\rangle \geq -\frac{\mc N}{\ell^{d-2}(\delta^+)^2}~.
\eeq

\subsection*{State-dependent QEIs}

Having covered many nice state-independent QEIs, in this final section we point out that sometimes life and quantum field theory are not so nice to us. Sometimes there just is no single $\mathbb C$-number lower bounding an energy operator regardless of how much we smear it.

For free fields there is a one-to-one correspondence: a quadratic field operator, $\hat\rho$, admits a state-independent lower bound if and only if the classical quantity, $\rho$, is pointwise positive. The argument being that for classically non-positive $\rho$ there are one-particle states with negative expectation value (i.e. even the number conserving portion of $\hat\rho$ is non-positive). However since the theory is free, nothing stops us from building multi-particle states of arbitrary particle number by piling this one-particle state on top of itself. In this way one can build a state of arbitrarily negative $\langle\hat\rho\rangle$ regardless of how it is smeared.

We have already seen an example of classically non-positive energy operators in free scalar field theory in Lecture \ref{sect:l1}. E.g. the non-minimally coupled scalar field has a non-positive null stress tensor, $T_{++}$, \eqref{eq:NMCTkk}. As expected one can find explicit one-particle states realizing negative energy density \cite{Fewster:2007ec} and null energy density \cite{Fliss:2023rzi}, from which a series of multi-particle states can be constructed pushing those respectively quantities arbitrarily negative.

Not all hope is lost: states with large negative energy density are states are large particle number which mean they are also states of large field expectation value $\langle :\phi^2:\rangle$. This suggests that the lower bound on the energy density could depend on the field value expectation value. Such a {\it state-dependent QEI} for the non-minimally coupled scalar was first constructed by Fewster and Osterbrink in considering the energy density. Here we will state the corresponding quantity for the DSNEC constructed in \cite{Fliss:2023rzi}
\beq\label{eq:NMCDSNEC}
    \int\dd^2x^\pm g(x^\pm)^2\langle T_{++}(x)\rangle_\psi\geq-\text{DSNEC}[g]-\xi\int\dd^2x^\pm \langle :\phi(x)^2:\rangle_\psi\pa_+^2\left(g(x^\pm)\right)^2
\eeq

This suggests a ``next-best'' structure for state-dependent QEIs which are {\it linear}, i.e. they depend on the state only through the expectation values of operators:
\beq\label{eq:statedepQEIgen}
    \int_{D_p}\dd^px\,g(x)^2\langle\hat\rho(x)\rangle_\psi\geq-\mc D_\rho^{(1)}[g]-\sum_{\mc O}\int_{D_p}\dd^px\,\mc D^{(\mc O)}_\rho[g]\langle \mc O(x)\rangle_\psi~.
\eeq
A state-dependent QEI like \eqref{eq:statedepQEIgen}works best when the operators appearing on the right-hand side have a softer divergence structure in Hadamard states than the stress tensor itself (say, because they have lower scaling dimension). It is hard to say if the necessity of state-independent QEIs is an accident of free theory and the ability to pile states of arbitrary particle number (there is indication in large-$N$ interacting CFTs that this procedure fails and a state-independent bound may be possible \cite{Fliss:2024dxe}) or is too na\"ive and the most general bound is non-linear in the state (we will see such a bound below).

If it is the case that state-dependent QEIs are the general structure of interacting theories, one optimistic scenario is if the sum over operators terminates to some finite set. This scenario remains unproved in general, but one might have hope of proving such a bound in CFT using properties of the OPE:
\beq\label{eq:statedepQEICFThope}
    \int_{D_p}\dd^px\,g(x)^2\langle\hat\rho(x)\rangle_\psi\overset{?}{\geq}-\mc D_\rho^{(1)}[g]-\sum_{\Delta\leq\Delta_\ast}\int_{D_p}\dd^px\,\mc D^{(\mc O)}_\rho[g]\langle \mc O_\Delta(x)\rangle_\psi~,
\eeq
where we've organized the sum in terms of conformal dimensions up to some maximum dimension. In this situation, the finite number of expectation values $\langle \mc O_\Delta\rangle_\psi$ provide a set of parameters defining an effective field theory in which energy densities are well-behaved. Such a perspective was advocated for the non-minimally coupled scalar field in \cite{Fliss:2023rzi} where it was argued that semi-classical gravity mimimally coupled to a scalar field is a sensible effective field theory for states of small $\langle :\phi^2:\rangle$.

\section{Lecture 4: Constraining semi-classical gravity}\label{sect:l4}

At this point in the lectures we are now experienced carpenters, having constructed many structures of the `low-grade wood' side of \ref{eq:marblewood}. At this point we can utilize some of what we constructed to provide a base for the `fine marble' side of semi-classical gravity.

\subsection*{ANEC and topological censorship}

As a first example, let us examine the {\it topological censorship conjecture} \cite{Friedman:1993ty}. This is sometimes known more colloquially as ``no traversable wormholes'' and in words states that no causal observer should be able to probe topologically non-trivial features of in an asymptotically flat spacetime. More specifically, we can imagine a spacetime $M$ with two asymptotically flat regions connected by a spacelike wormhole as in the maximally extended Schwarzschild spacetime, depicted in Figure \ref{fig:maxSchwarz}. In the situation in which a spacetime has a wormhole connecting two separate regions of the same asymptotically flat region we can pass to the first situation by considering the double cover without affecting the following arguments. 

\begin{figure}[h!]
\centering
\begin{tikzpicture}
    \draw[thick] (0,0) -- (2,2) -- (6,-2) -- (8,0) -- (6,2) -- (2,-2) -- (0,0);
    \draw[thick,style={decorate,decoration=snake}] (2,2) -- (6,2);
    \draw[thick,style={decorate,decoration=snake}] (2,-2) -- (6,-2);
    \draw[blue,smooth] (1.5,-1.5) to[out=45,in=-45] (1,1);
    \draw[blue,smooth] (1.5,-1.5) to[out=45,in=-90] (3,1.9);
    \node[above] at (6,2) {$i^+$};
    \node[above] at (2,2) {$i^+$};
    \node[below] at (6,-2) {$i^-$};
    \node[below] at (2,-2) {$i^-$};
    \node[above] at (.8,1.2) {$\mathcal J^{+}$};
    \node[below] at (.8,-1.2) {$\mathcal J^{-}$};
    \node[above] at (7.2,1.2) {$\mathcal J^{+}$};
    \node[below] at (7.2,-1.2) {$\mathcal J^{-}$};
    \node[left] at (0,0) {$i^0$};
    \node[right] at (8,0) {$i^0$};
    \node[blue,right] at (2,-1) {$\gamma$};
\end{tikzpicture}
\caption{The Penrose diagram of the maximally extended Schwarzschild spacetime. A causal curve, $\gamma$, emanating from past null infinity, $\mathcal J^-$, can either escape to future null infinity $\mathcal J^+$ within the same connected component, or it can fall into the singularity. It cannot however traverse the spacelike wormhole to reach a separate component of null infinity.}
\label{fig:maxSchwarz}
\end{figure}

\begin{theorem}[Topological censorship]
    Consider an asymptotically flat, globally hyperbolic spacetime, $M$ that satisfies the ANEC. Then any causal curve from past null infinity, $\mc J^-$, to future null infinity, $\mc J^+$, is deformable to a timelike curve from $\mc J^-$ to $\mc J^+$ in a single connected component of $\mc J^-\cup\mc J^+$.
\end{theorem}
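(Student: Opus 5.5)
We argue by contradiction, following the Friedman--Schleich--Witt strategy and adapting the Penrose argument of Theorem \ref{thm:PST}. Suppose there is a causal curve $\gamma$ from $\mc J^-$ to $\mc J^+$ that cannot be deformed into a curve lying near a single connected component of $\mc J^-\cup\mc J^+$. We work in the universal (or suitable) covering spacetime, where asymptotic flatness, global hyperbolicity and the ANEC all lift. There the statement becomes: no causal curve connects two distinct asymptotic regions, since a non-deformable curve lifts to one joining different lifts of the ends. Let $\mc J_1$ denote the component reached by the lift of $\gamma$. Let $\mc J_0$ denote the component of $\mc J^+$ associated with the asymptotic region in which $\gamma$ starts.

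First, we build the geometric object to which the focussing argument will be applied. In the first asymptotic region, take a large round sphere $S$ near $\mc J^-$ lying in the domain of outer communications. By asymptotic flatness, both of its future null congruences are expanding outward and converging inward. Consider the boundary $\partial J^+(\mc J_1^-)$, or equivalently the achronal boundary separating events that can and cannot causally reach $\mc J_1$. By hypothesis the curve $\gamma$ shows that this boundary is nonempty and reaches the first region. Using global hyperbolicity, we extract from it a null geodesic $\eta$ that is achronal and complete. The natural candidate is a generator of $\partial J^+(\mc J^-)$ running from $\mc J^-$ in one region to $\mc J^+$ in another, obtained as a limit of causal curves $\gamma_n$ that hug the boundary. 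Establishing that this limit is a genuine complete achronal null line is the standard limit-curve step, and I would cite it rather than reprove it.

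Next comes the focussing step, which uses the ANEC. Along $\eta$, assume Einstein's equation, so that the ANEC gives $\int_\eta R_{kk}\,\dd\lambda\geq 0$. Applied to the null Raychaudhuri equation \eqref{eq:nullRayEq}, with vanishing rotation because the congruence is hypersurface-orthogonal, this implies that a complete null geodesic without conjugate points must have $R_{kk}$ and the shear vanishing identically. The argument goes through $\dot\theta+\theta^2/(d-2)\leq -R_{kk}$: a Riccati comparison shows that if $\int R_{kk}>0$ then $\theta\to-\infty$ at finite affine parameter. Asymptotic flatness supplies the boundary conditions $\theta\to 0$ at both ends. Because the nontrivial topology forces a genuine obstruction (nonzero shear or curvature somewhere along $\eta$, a point I would impose via the generic condition), this produces a pair of conjugate points on $\eta$. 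But a null geodesic with conjugate points cannot be achronal, since it can be deformed to a timelike curve. This contradicts the achronality of $\eta$ established in the previous step.

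The main obstacle is the middle step: producing a complete, achronal null geodesic from a non-deformable causal curve, and arranging that the ANEC is applied along exactly that geodesic. This also explains why the AANEC version suffices. It requires careful control of limit curves and of the structure of the boundary of the causal past of $\mc J^+$ near infinity, which is the technical heart of Friedman--Schleich--Witt. I would first try passing to the covering space and choosing the generator of $\partial J^-(\mc J^+_1)$ that meets the sphere $S$.
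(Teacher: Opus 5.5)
Your architecture matches the paper's. You extract an ``optimal'' null geodesic joining two distinct asymptotic ends, then use the ANEC to force conjugate points on it, contradicting its optimality. You run the focusing step through a Riccati comparison. The paper instead uses the index form with wide transverse deformations $x^A=c^Af(\lambda/L)$, which handles an integrated condition directly without building a congruence along the whole line. Either works, and your claim that asymptotic flatness gives $\theta\to0$ at both ends is not needed.

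The genuine gap is in the step you flag as the main obstacle: the objects you propose to extract the line from cannot supply it. A null geodesic leaving $p\in\mc J^-_0$ transversally is never a generator of $\partial J^+(\mc J^-_0)$. Any point $x$ on it lies in $I^+(p')$ for an earlier point $p'$ on the generator of $\mc J^-_0$ through $p$, since the broken null curve $p'\to p\to x$ deforms to a timelike one. So $x$ is in the interior of $J^+(\mc J^-_0)$. By time reversal, no generator of $\partial J^-(\mc J^+_1)$ reaches $\mc J^+_1$. In maximally extended Schwarzschild, for instance, $\partial J^+(\mc J^-)$ of one side is the Kruskal line $V=0$, which runs from $i^-$ of that side to $i^+$ of the other.

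Moreover, generators of such achronal boundaries are only guaranteed to be inextendible in one direction. So the generator of $\partial J^-(\mc J^+_1)$ through your sphere $S$ is in general a half-line, and the full-line ANEC of the statement does not apply to it. The initial convergence of $S$ would only help with an ANEC along half-infinite geodesics, which is a stronger hypothesis. You also conflate $\partial J^+(\mc J^-_1)$ with the boundary of the set of events that can reach $\mc J_1$; that boundary is $\partial J^-(\mc J^+_1)$.

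The missing idea is the paper's promptness, which also makes the limit-curve step unnecessary:
\begin{enumerate}
\item Fix the initial point $p\in\mc J^-_0$ of your curve.
\item Let $q$ be the earliest point of $J^+(p)$ on some generator of $\mc J^+_1$ that it meets. Its existence is where global hyperbolicity and asymptotic flatness enter.
\item Then $q\in\partial J^+(p)$, so the causal curve from $p$ to $q$ is a null geodesic.
\item It has infinite affine length at both ends, because it reaches null infinity.
\item It is achronal: two chronologically related points on it would give $q\in I^+(p)$, and openness of $I^+(p)$ would then produce an earlier point of $\mc J^+_1$ in $J^+(p)$.
\end{enumerate}
This is a null line, and your focusing step applies to it.

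Finally, it is not true that the nontrivial topology forces nonzero shear or curvature along the line. The generic condition (or a strict ANEC) is an extra hypothesis. The paper needs it too, though silently. Promptness with constant transverse deformations gives $\int R_{uu}\le 0$, and combined with the ANEC this only yields $\int R_{uu}=0$, not a contradiction. So the paper's final sentence tacitly assumes strictness. Stating the generic condition explicitly is an improvement, provided you present it as an added assumption rather than a consequence of the topology.
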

The last sentence is simply telling us that causal curves that emerge and escape to asymptotic infinity do so entirely within a single asympotic region; see Figure \ref{fig:TopCens}.

\begin{figure}[h!]
    \centering
    \begin{tikzpicture}
    \draw[thick] (2.5,-1.5) -- (2,-2) -- (0,0) -- (2,2) -- (2.5,1.5);
    \draw[thick] (5.5,1.5) -- (6,2) -- (8,0) -- (6,-2) -- (5.5,-1.5);
    \draw[blue,smooth,very thick] (1.5,-1.5) to[out=45,in=-45] (1.5,1.5);
    \draw[blue,smooth,very thick] (1.5,-1.5) to[out=135,in=-45] (.5,-.5) to[out=133,in=-90] (.2,0) to[out=90,in=-133] (.5,.5) to[out=45,in=-135] (1.5,1.5);
    \node[above] at (.8,1.2) {$\mathcal J^{+}$};
    \node[below] at (.8,-1.2) {$\mathcal J^{-}$};
    \node[above] at (7.2,1.2) {$\mathcal J^{+}$};
    \node[below] at (7.2,-1.2) {$\mathcal J^{-}$};
    \node[blue,right] at (2.25,0) {$\gamma$};
    \node[blue,left] at (1.5,0) {$\leftarrow$};
    \node[above] at (4,0) {$\Large{\boldsymbol{\ldots}}$};
\end{tikzpicture}
\caption{Topological censoreship: regardless of what happens in the interior of the spacetime, a causal curve $\gamma$ from $\mathcal J^-$ to $\mathcal J^+$ must be deformable to a timelike curve connecting a single connected component of $\mathcal J^-\cup\mathcal J^+$.}\label{fig:TopCens}
\end{figure}

Even though the theorem only requires the ANEC, we will start with the classical argument, assuming the NEC. Consider a null geodesic on the left asymptotic region shot outward and future directed. Very far away from the wormhole the metric is locally flat and takes the form
\beq
    \dd s^2\approx -\dd t^2+\dd r^2+r^2\dd\Omega_{d-2}^2~.
\eeq
Null geodesics shot inward (i.e. to decreasing $r$) will have negative expansion. As we by now are well aware, the null Raychaudhuri equation, paired with the null energy condition implies that this null congruence develops a focal point in finite affine time.

For the purposes of topological censorship however, we do not need the pointwise NEC but instead the much weaker ANEC. We will not fully reproduce the details here (although see \cite{Witten:2019qhl} for a good summary) however physically we can reason this as the following. A causal curve emanating from $\mc J^-$ to $\mc J^+$ will be extentable to infinite affine parameter. Thus even if there is a small region of NEC violation, there is a lot of affine time for positive null energy to win over and develop focal points.

More specifically, we can consider the notion of {\it promptness}. A prompt causal curve between two points is one that arrives ``as quick as possible.'' To be more precise:\\
\\
{\it A causal curve $\gamma$ from $q$ to $p$ is prompt if there does not exist a causal curve $\gamma'$ from $q$ to $p'$ with $p'$ in the past lightcone of $p$.}\\
\\
Prompt curves are null geodesics without any focal points shared with nearby null geodesics. If a two separate asymptotic regions of $M$ are causally connected then there exists a causal curve from a component $\mc J^-$ in one region to a component of $\mc J^+$ in another region, and so there also is a prompt causal curve obeying this. 

We consider the conditions for causal curve to be prompt along an interval by looking small deformations maintaining the causality. To see what this implies, we will pick Fermi normal coordinates $\{u,v, x^A\}_{A=1,\ldots, d-2}$ along $\gamma$, such that in a neighborhood of $\gamma$ the metric takes the form
\beq
    \dd s^2=-2\dd u\dd v+\sum_{A=1}^{d-2}(\dd x^A)^2+O(v^2,vx^A,x^Ax^B)~,
\eeq
and the geodesic equation, $\frac{D x^\mu}{D\lambda^2}=0$ is satisfied for the parameterization $u=\lambda$ and $v=x^A=0$. We also have along $\gamma$
\beq
    R_{AuBu}=-\frac{1}{2}\pa_A\pa_B\,g_{uu}~.
\eeq
In these coordinates let the initial point of $\gamma$, be at $(u,v,x^A)=(\lambda_0,0,0)$ and the endpoint at $(u,v,x^A)=(\lambda_1,0,0)$. We will consider a causal deformation of our null geodesic, $x^{\mu}(\lambda)=(\lambda,0,0)\rightarrow (\lambda,\epsilon^2\, v(\lambda),\epsilon x^A(\lambda))+O(\epsilon^3)$. The requirement of a causal curve is that
\beq
    g_{\mu\nu}(x(\lambda))\frac{\dd x^\mu}{\dd\lambda}\frac{\dd x^\nu}{\dd\lambda}\leq 0~,\qquad\qquad \forall \lambda~.
\eeq
At order of $\epsilon^2$, causal curves then satisfy
\beq
    2\frac{\dd v}{\dd\lambda}-\left(\frac{\dd x^A}{\dd\lambda}\right)^2+x^Ax^B\,R_{AuBu}\geq 0~.
\eeq
Integrating this equation we have
\beq\label{eq:vgeqJ}
    v(\lambda_1)\geq \frac{1}{2}J[x^A]~,
\eeq
with
\beq
    J[x^A]=\int_{\lambda_0}^{\lambda_1}\dd\lambda\left[\left(\frac{\dd x^A}{\dd\lambda}\right)^2-x^Ax^B\,R_{AuBu}\right]~.
\eeq
Recalling the definition of promptness, if we can find a single deformation with $x^A(\lambda_0)=x^B(\lambda_1)=0$ such that $J[x^A]$ is negative, then we can let $v(\lambda)$ saturate \eqref{eq:vgeqJ} and have $v(\lambda_1)<0$ putting it in the past light cone of the endpoint of $\gamma$. $\gamma$ is prompt only if no such situation arises. Said another way,
\beq
    \gamma\text{ is prompt }~\Rightarrow~ J[x_1^A]\geq 0~\forall~ x^A(\lambda)\text{ with }x^A(\lambda_0)=x^A(\lambda_1)=0~.
\eeq
This needs to be true for entire infinite null line extending between asymptotic regions. By taking $x^A$ to be a constant and averaging over those constants\footnote{If one is concerned about the boundary conditions $x^A(-\infty)=x^A(\infty)=0$, one can take large Gaussians in the limit of infinite width.}, one then finds this can only be if
\beq
    -\int_{-\infty}^\infty\dd\lambda R_{uu}(x(\lambda))\geq 0~.
\eeq
The ANEC then implies that there is no such prompt causal curve.

\subsection*{QEIs and improved singularity theorems}

The above proof for topological censorship falls under a general method of proving singularity theorems called `index methods,' introduced by O'Neill \cite{ONeill:1983semi}. There $J$ is the index of the null curve, $\gamma$. The key methodology is look at the second variation of the proper length functional away from either a null or timelike geodesic and seeing if resulting variational operator is positive or negative definite. In the previous section on topological censorship, we saw that existence of a focal point corresponded to a negative index when expanded about a null geodesic. The key benefit of index methods is they naturally implement integrated conditions as criteria for development of focal points within a given region. This makes them a natural starting point for utilizing QEIs in singularity theorems.

Following O'Neill's methods, Brown, Fewster and Kontou \cite{Brown:2018hym} (and subsequently Fewster and Kontou \cite{Fewster:2019bjg,Fewster:2021mmz}) developed a set of improved singularity theorems using as input the structure of state-independent QEIs.\footnote{Strictly speaking \cite{Brown:2018hym} establishes a singularity theorem using the classical SEC however uses as input an energy condition expressed in the same form as a QEI.} To be specific, \cite{Brown:2018hym,Fewster:2019bjg,Fewster:2021mmz} use as input for their singularity theorems the following {\it worldline curvature condition}. For any timelike geodesic, $\gamma$,
\beq\label{eq:WLcurvcond}
    \int_\gamma \dd\tau g(\tau)^2\,t^{\mu}t^{\nu}R_{\mu\nu}(\gamma(\tau))\geq -\sum_{\ell=0}^{m}\mc Q_\ell\norm{g^{(\ell)}}~,\qquad \mc Q_\ell\geq 0
\eeq
for some fixed, maximum $m$ and where $\norm{\cdot}:=\int\dd\tau\abs{\cdot}^2$ is the $L^2$-norm on $\gamma$ and the superscript indicates the $\ell$'th derivative. We will now schematically follow the two singularity theorems introduced in \cite{Fewster:2019bjg} which cover the scenarios in which timelike geodesics are initially converging or initially diverging, respectively.

\begin{theorem}[Improved Hawking Singularity Theorem I]\label{th:FK1}
Let $M$ be a globally hyperbolic spacetime with a compact Cauchy slice, $\Sigma$. Suppose that there exists a $\bar\tau>0$ such that for any $g(\tau)$ with compact support in $[0,\bar\tau]$ the worldline curvature condition \eqref{eq:WLcurvcond} holds for some $\{\mc Q_m\}$ and $m\geq 1$. Further suppose there exists a $\tau_0\in[0,\bar\tau)$ and $\rho_0>0$ such that $t^\mu t^\nu R_{\mu\nu}\geq\rho_0$ for all $\tau\in[0,\tau_0]$ and extrinsic curvature of $\Sigma$ is bounded below by a positive constant, $\bar\kappa$, that depends on $\mc Q_m$ (see \cite{Fewster:2019bjg} for the precise relation). Then no future timelike geodesic emanating from $\Sigma$ has proper length greater than $\bar\tau$ and $M$ is future timelike geodesically incomplete.
\end{theorem}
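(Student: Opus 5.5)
Following the index-method philosophy introduced just above for topological censorship, I will reduce the statement to showing that every future-directed timelike geodesic $\gamma$ normal to $\Sigma$ contains a focal point to $\Sigma$ before proper time $\bar\tau$. Once that holds, the global part is standard, as in the sketch of the Hawking Singularity Theorem. By global hyperbolicity every point $p$ in the future of $\Sigma$ is reached by a maximizing timelike geodesic from $\Sigma$, which must be normal to $\Sigma$ and free of focal points. So the proper time from $\Sigma$ to any point is at most $\bar\tau$. Hence no future timelike geodesic from $\Sigma$ has length greater than $\bar\tau$, and $M$ is future timelike incomplete. Compactness of $\Sigma$ is used to make the extrinsic-curvature bound $\bar\kappa$ uniform. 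The sign convention should be such that the normal congruence is initially converging, i.e. $\vartheta\le -\bar\kappa$ in the paper's earlier notation.

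For the focal-point criterion I will use the second variation of proper length, i.e. the timelike index form. Choose a parallel orthonormal frame $e_A$ ($A=1,\dots,d-1$) along $\gamma$. Take variations $V=f(\tau)e_A$ with $f(0)=1$ and $f(\bar\tau)=0$, and sum over $A$. The index form for the curve $\gamma$ emanating orthogonally from $\Sigma$ then becomes
\beq
  I[f]=\int_0^{\bar\tau}\dd\tau\left[(d-1)f'(\tau)^2-f(\tau)^2\,t^\mu t^\nu R_{\mu\nu}\right]+\vartheta\big|_{\tau=0}~.
\eeq
The boundary term comes from the extrinsic curvature of $\Sigma$. If $I[f]<0$ for some such $f$, there is a focal point along $\gamma$ in $(0,\bar\tau]$. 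This is the same mechanism as the negativity of $J[x^A]$ forcing non-promptness in the topological censorship argument. Note that the curvature term appears only integrated against $f^2$, which is exactly what \eqref{eq:WLcurvcond} controls.

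The main obstacle is that \eqref{eq:WLcurvcond} requires $g$ of compact support in $[0,\bar\tau]$, whereas the index form needs $f(0)=1$. I will split $f$ by hand. On $[0,\tau_0]$ I take $f$ to be an explicit profile, for instance $f=1$ near $0$ smoothly decreasing, and bound its curvature contribution by the pointwise assumption $t^\mu t^\nu R_{\mu\nu}\ge\rho_0$. On the remainder, $f$ agrees with a function $g$ compactly supported in $[0,\bar\tau]$ (vanishing at $0$ and at $\bar\tau$), and there I apply \eqref{eq:WLcurvcond}. Concretely, I write $f^2=\phi_1+g^2$, where $\phi_1$ is supported in $[0,\tau_0]$. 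The $\phi_1$ piece is bounded using $\rho_0$, and the $g^2$ piece using $-\sum_\ell\mathcal Q_\ell\|g^{(\ell)}\|$.

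This gives
\beq
I[f]\le (d-1)\|f'\|^2-\rho_0\!\int\!\phi_1+\sum_{\ell\le m}\mathcal Q_\ell\|g^{(\ell)}\|-\bar\kappa~.
\eeq
The right-hand side depends only on the chosen profile, with scale set by $\tau_0$ and $\bar\tau$, and not on $\gamma$. So defining $\bar\kappa$ to exceed this constant (optimized over profiles, giving the precise relation in \cite{Fewster:2019bjg}) yields $I<0$ uniformly over all normal geodesics. Since $m\ge1$, derivative norms appear, and the profile must be smooth enough that $\|g^{(m)}\|$ is finite. Because these norms scale like inverse powers of $\bar\tau$, I will take a fixed smooth bump rescaled to the interval. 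I will check that the $\rho_0$ region lets the bound be met with a finite $\bar\kappa$, and this bookkeeping is the step I expect to require the most care.
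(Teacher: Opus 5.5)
Your proposal is correct and follows the route the notes sketch, which \cite{Fewster:2019bjg} carry out in full: make the summed timelike index form on $[0,\bar\tau]$ negative by splitting the test function into a piece near $\Sigma$, controlled by the pointwise bound on $[0,\tau_0]$, and a compactly supported piece, controlled by \eqref{eq:WLcurvcond}, then conclude via maximizing normal geodesics in the globally hyperbolic spacetime. The details to fix are that you should take $g=\chi f$ with a cutoff $0\le\chi\le1$ that vanishes near $\tau=0$ and equals $1$ on $[\tau_0,\bar\tau]$, so that $\phi_1=(1-\chi^2)f^2\geq0$ (which the $\rho_0$ estimate requires), and that the split needs $\tau_0>0$ even though the statement allows $\tau_0=0$, since $\norm{g^{(m)}}$, and hence $\bar\kappa$, blows up like $\tau_0^{1-2m}$ as $\tau_0\to0$.
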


\begin{theorem}[Improved Hawking Singularity Theorem II]\label{th:FK2}
Let $M$ be a globally hyperbolic spacetime with a compact Cauchy slice, $\Sigma$. Suppose that there exists a $\bar\tau,\tau_0>$ such that for any $g(\tau)$ with compact support in $(-\tau_0,\bar\tau)$ the worldline curvature condition \eqref{eq:WLcurvcond} holds from some $\{\mc Q_m\}$ and $m\geq 1$. Further suppose there exists a $\rho_0>0$ such that for all $\tau\in(-\tau_0,0]$, $t^\mu t^\nu R_{\mu\nu}(\gamma(\tau))\leq -\rho_0$ and that the extrinsic curvature of $\Sigma$ is bounded below by a positive constant, $\bar\kappa'$ that depends on $\mc Q_m$ (a different constant than the previous theorem \cite{Fewster:2019bjg}). Then no future timelike geodesic emanating from $\Sigma$ has proper length greater than $\bar\tau$ and $M$ is future timelike geodesically incomplete.
\end{theorem}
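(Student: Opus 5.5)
The plan is to follow the index-form (Raychaudhuri-integrated) route that underlies Theorem \ref{th:FK1}, with one change. The past interval $(-\tau_0,0]$ is used as a source of ``quantum interest'': negative curvature there forces, through the worldline curvature condition \eqref{eq:WLcurvcond}, extra positive curvature to its future. Fix any future-directed timelike geodesic $\gamma$ normal to $\Sigma$, parametrized by proper time with $\gamma(0)\in\Sigma$, and extend it a proper time $\tau_0$ into the past. Suppose $\gamma$ has no focal point to $\Sigma$ in $[0,\bar\tau]$. Then the expansion $\vartheta$ of the normal congruence is smooth on $[0,\bar\tau]$ and obeys \eqref{eq:TLRayEq} with $\omega=0$.

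The first step is the standard integrated inequality. Multiply \eqref{eq:TLRayEq} by $f^2$, where $f$ is smooth on $[0,\bar\tau]$ with $f(0)=1$ and $f(\bar\tau)=0$. Drop the shear term, integrate by parts, and complete the square in $\vartheta$. With the sign convention of the statement, in which the extrinsic curvature is the initial contraction $-\vartheta(0)$, this gives the focusing criterion
\begin{equation*}
\int_0^{\bar\tau}\dd\tau\,\Big(f^2\,t^\mu t^\nu R_{\mu\nu}-(d-1)f'^2\Big)\;\le\;\vartheta(0)\;\le\;-\bar\kappa'~.
\end{equation*}
If instead the left-hand side is strictly greater than $-\bar\kappa'$, a focal point must occur before $\bar\tau$.

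The second step feeds in the hypotheses. Take a single test function $g$ with compact support in $(-\tau_0,\bar\tau)$ such that $g=f$ on $[0,\bar\tau]$. On $(-\tau_0,0]$ let $g$ be a smooth profile that rises from $0$ to $1$ and joins $f$ smoothly at $\tau=0$, so that all derivatives up to order $m$ match. Split the left-hand side of \eqref{eq:WLcurvcond} at $\tau=0$ and use $t^\mu t^\nu R_{\mu\nu}\le-\rho_0$ on the past piece. This gives
\begin{equation*}
\int_0^{\bar\tau}\dd\tau\,f^2\,t^\mu t^\nu R_{\mu\nu}\;\ge\;\rho_0\int_{-\tau_0}^0\dd\tau\,g^2\;-\;\sum_{\ell=0}^m\mc Q_\ell\norm{g^{(\ell)}}~.
\end{equation*}
Combined with the first step, a focal point before $\bar\tau$ is forced as soon as
\begin{equation*}
\bar\kappa'>\sum_{\ell}\mc Q_\ell\norm{g^{(\ell)}}+(d-1)\norm{f'}-\rho_0\norm{g}_{(-\tau_0,0)}~.
\end{equation*}
The right-hand side depends only on $\mc Q_\ell$, $\rho_0$, $\tau_0$, $\bar\tau$ and the chosen $g$, and not on $\gamma$. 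Defining $\bar\kappa'$ by this quantity, minimized over admissible $g$, is the constant the theorem refers to.

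The final step is the global argument. The focal-point conclusion holds for every normal geodesic from $\Sigma$, uniformly, because the bounds are uniform over $\Sigma$. Hence no point of $M$ lies at Lorentzian distance greater than $\bar\tau$ from $\Sigma$: maximizing curves exist by global hyperbolicity, and they must be normal geodesics without focal points. Compactness of the Cauchy slice $\Sigma$ then gives future timelike geodesic incompleteness, exactly as in the Hawking proof sketch.

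I expect the main obstacle to be the optimization of $g$ in the second step. The past profile must be steep enough to fit inside $(-\tau_0,0]$, yet its derivative norms $\norm{g^{(\ell)}}$ for $\ell\le m$ grow as the profile is compressed. For the inequality to be nonvacuous, the gain $\rho_0\int g^2$ must outweigh those costs together with the $\mc Q_\ell$-weighted norms on $[0,\bar\tau]$. I would first try a one-parameter family of polynomial or bump profiles, estimate the norms explicitly, and choose the scale to balance $\rho_0\tau_0$ against $\mc Q_m\tau_0^{1-2m}$. If that fails, I would concede that $\bar\kappa'$ may only be guaranteed positive when $\rho_0$ is large enough.
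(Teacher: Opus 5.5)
Your proposal is correct and is essentially the index-form argument the notes sketch. Your Raychaudhuri-integrated criterion $\int_0^{\bar\tau}\big(f^2\,t^\mu t^\nu R_{\mu\nu}-(d-1)f'^2\big)\,\dd\tau\le\vartheta(0)$ is exactly the index form traced over a parallel-propagated transverse frame, and extending the test function into $(-\tau_0,0]$ to cash in the negative curvature there through \eqref{eq:WLcurvcond} is the quantum-interest mechanism behind Theorem \ref{th:FK2}. The optimization you flag as the main obstacle is not needed: any single admissible $g$ (with $f$ chosen to vanish near $\bar\tau$, so that $g$ has compact support in $(-\tau_0,\bar\tau)$) gives a valid threshold, and if that threshold is nonpositive then any positive $\bar\kappa'$ works, so positivity is never at issue (and a larger $\rho_0$ lowers the threshold rather than being needed to make it positive).
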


We will not prove the above theorems. However they follow from considering the index of a timelike geodesic extending from $[0,\bar\tau]$ (in the first theorem) or $[-\tau_0,\bar\tau]$ in the second theorem, much in the same way we condsidered the index of a null geodesic in the proof of topological censorship. If the worldline curvature condition, \eqref{eq:WLcurvcond}, for any smooth function $g(\tau)$, then given a sufficiently positive extrinsic curvature, it is possible to find a test function such that the index is negative and a focal point is established. Note the curious set-up of the second theorem, Theorem \ref{th:FK2}: we have assumed that worldline curvature is sufficiently {\it negative} before reaching the Cauchy surface $\Sigma$. Fitting with the idea of quantum interest established in Lecture \ref{sect:l3} this means the curvature must then become even more positive into the future of $\Sigma$ (i.e. it costs more to borrow early and pay back late).

A sticking point in the above two theorems that it requires a worldline inequality on effective energy density, $\rho_\text{eff}=t^\mu t^\nu T_{\mu\nu}+\frac{1}{d-2}T$, which as we saw in Lecture \ref{sect:l1} is not classically positive for a scalar field with a potential (or even a mass). Thus given our discussion on state-independent vs. state-independent QEIs in the previous lecture, there does not exist a state-independent QEI of the form \eqref{eq:WLcurvcond} for the massive scalar field. There does exist a state-dependent QEI for the quantum effective energy density $:\hat\rho_\text{eff}:$ for the massive scalar field of the form (assuming $d$ is even) \cite{Fewster:2021mmz}
\beq\label{eq:scalarSECQEI}
    \int_\gamma \dd\tau\,g(\tau)^2\,\langle:\hat\rho_\text{eff}(\gamma(\tau)):\rangle_\psi\geq - \frac{\pi V_{S_{d-2}}}{d(2\pi)^d}\int_\gamma \dd\tau\abs{g^{(d/2)}}^2-\frac{m^2}{d-2}\int_\gamma\dd\tau\,g(\tau)^2\langle :\phi^2(\gamma(\tau)):\rangle_\psi~,
\eeq
where $V_{S_{d-2}}$ is the volume of the $(d-2)$-sphere. Within the effective field theory of states with maximum field values,
\beq
    \langle:\phi^2:\rangle\leq\phi_\text{max}^2~,
\eeq
we can replace \eqref{eq:scalarSECQEI} with
\beq\label{eq:scalarSECQEI2}
    \int_\gamma \dd\tau\,g(\tau)^2\,\langle:\hat\rho_\text{eff}(\gamma(\tau)):\rangle_\psi\geq - \frac{\pi V_{S_{d-2}}}{d(2\pi)^d}\norm{g^{(d/2)}}-\frac{m^2}{d-2}\phi_\text{max}^2\norm{g}~,
\eeq
which indeed implies the worldline curvature condition, \eqref{eq:WLcurvcond}. Using estimates of the early universe Hubble constant derived from the PLANCK experiment, and estimated masses of pions and baryons, \cite{Fewster:2021mmz} illustrate that $\Lambda_\text{CDM}$ model is consistent with the initial conditions implying past geodesic incompleteness within a proper time in the same order of magnitude of current estimated cosmological age of the universe.

We also inquire about the status of an improved Penrose Singularity Theorem. The index methods establishing a focal point require integrating along a null geodesic segment. The natural starting place for the improved Penrose Singularity Theorem is an {\it integrated null curvature condition} \cite{Fewster:2019bjg}, i.e. for any null geodesic $\gamma$,
\beq\label{eq:IntNullcurvcond}
    \int_\gamma\dd\lambda g(\lambda)^2\,k^\mu k^\nu R_{\mu\nu}(\gamma(\lambda))\geq -\sum_{\ell=0}^m\mc Q_\ell\norm{g^{(\ell)}}~,\qquad \mc Q_\ell\geq 0~,
\eeq
for some fixed maximum $m$. Let us first use this as input for singularity theorems and then inquire if QFT is compatible with it as an assumption.

The first of the singularity theorems assumes a small period of initial NEC satisfaction but then allows the NEC to be violated in a time scale well before (putative) singularity formation:

\begin{theorem}[Improved Penrose Singularity Theorem I]\label{th:FK3}
Let $M$ be a globally hyperbolic spacetime with a non-compact Cauchy slices and possessing a codimension-2 trapped surface, $\Gamma$. Furthermore suppose that there exists an $\bar\lambda$ such that for each null geodesic emanating from $\Gamma$ the integrated null curvature condition, \eqref{eq:IntNullcurvcond}, holds for any $g(\lambda)$ compactly supported in $[0,\bar\lambda]$ and for some $m>1$. Furthermore suppose that mean normal curvature of $\Gamma$ is bounded below by some constant $\bar h$ set by the $\{\mc Q_m\}$ appearing in \eqref{eq:IntNullcurvcond} (see \cite{Fewster:2019bjg} for precise relations) and there exists a $\lambda_0\in[0,\bar\lambda)$ such that the null curvature condition $k^\mu k^\nu R_{\mu\nu}$ holds for all future pointing null geodesics emanating from $\Gamma$ for affine times $\lambda\in[0,\lambda_0)$. Then there is a future directed inextendible null geodesic emanating from $\Gamma$ with affine length less than $\bar\lambda$ and $M$ is future null geodesically incomplete.
\end{theorem}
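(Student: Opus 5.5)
The plan is to follow the index-method strategy used above for topological censorship, now applied to null geodesics emanating orthogonally from $\Gamma$ over the finite segment $[0,\bar\lambda]$, and then close with the same global argument as in the proof sketch of Theorem~\ref{thm:PST}. The key reduction is to show that every null geodesic $\gamma$ normal to $\Gamma$ acquires a focal point to $\Gamma$ by affine parameter $\bar\lambda$. Once this holds, the boundary of the future of $\Gamma$ is generated by null segments of affine length at most $\bar\lambda$. That boundary is therefore compact, which contradicts the existence of a non-compact Cauchy surface together with null completeness, exactly as in Penrose's argument.

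To establish the focal point, I would write the index form for normal variations of $\gamma$ with the endpoint on $\Gamma$ fixed and the far end free. Using Fermi-type coordinates as in the topological censorship computation and taking variations of the form $x^A(\lambda)=g(\lambda)e^A$, with $e^A$ parallel transported and summed over an orthonormal transverse basis, the trace gives
\begin{equation*}
 I[g]=-(d-2)\,g(0)^2\,h_\Gamma+\int_0^{\bar\lambda}\dd\lambda\left[(d-2)\,g'(\lambda)^2-g(\lambda)^2\,R_{\mu\nu}k^\mu k^\nu\right],
\end{equation*}
where $h_\Gamma$ is the mean normal curvature, positive for a trapped surface in this sign convention. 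Here $g(0)\neq0$ but $g(\bar\lambda)=0$. If $I[g]<0$ for some such $g$, then there is a focal point on $(0,\bar\lambda]$. The aim is to produce one such $g$.

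The main obstacle is that \eqref{eq:IntNullcurvcond} only applies to $g$ compactly supported in $[0,\bar\lambda]$, whereas the trial function must be nonzero at $\lambda=0$ to pick up the trapped-surface term. I would split $g=g_1+g_2$: $g_1$ equals $1$ near $0$ and ramps to zero inside $[0,\lambda_0)$, while the remainder is compactly supported in $(0,\bar\lambda)$. More simply, I would choose $g$ piecewise: $g\equiv1$ on $[0,\lambda_0/2]$, and on the rest a smooth profile $\phi$ with $\phi=1$ near $\lambda_0/2$ and $\phi=0$ at $\bar\lambda$. On the initial interval the pointwise null curvature condition makes the curvature term non-positive in $I$, so it can be dropped. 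For the remaining piece I would apply \eqref{eq:IntNullcurvcond} to a cutoff $\chi\phi$ that is compactly supported in $[0,\bar\lambda]$; this requires handling the overlap region on which the NEC still holds, so that extending the support into $[0,\lambda_0)$ only adds non-negative curvature contributions. The result is
\begin{equation*}
 I[g]\le-(d-2)h_\Gamma+(d-2)\norm{\phi'}+\sum_{\ell=0}^m\mc Q_\ell\norm{\phi^{(\ell)}}.
\end{equation*}
The right-hand side beyond the $h_\Gamma$ term is a finite constant $\bar h$, determined by $\bar\lambda,\lambda_0$ and the $\mc Q_\ell$ once $\phi$ is fixed; optimizing over $\phi$ sharpens it. The hypothesis $h_\Gamma>\bar h$ then gives $I<0$.

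The delicate points are technical. One must check that the smoothness requirement $m>1$ is compatible with the piecewise construction; this is why the profile is built from a smooth bump family rather than a kink. One must also check that the Morse index argument is valid for null geodesics normal to a codimension-2 surface, which follows from the standard results of O'Neill. With focal points established on every generator within $\bar\lambda$, the compactness and contradiction step proceeds verbatim as in Theorem~\ref{thm:PST}. This yields an inextendible null geodesic from $\Gamma$ of affine length less than $\bar\lambda$.
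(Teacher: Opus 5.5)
Your strategy matches the one the paper points to, namely the Fewster--Kontou index-form argument. You trace the index form over a parallel-transported screen basis, find a trial function with $g(0)=1$, $g(\bar\lambda)=0$ that makes it negative, and close with Penrose's compactness argument. (Your prose says the $\Gamma$-end is fixed and the far end free. It is the other way round, and your conditions $g(0)\neq0$, $g(\bar\lambda)=0$ already encode that correctly.)

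The problem is the one step that combines the two energy hypotheses: as written, the inequality runs the wrong way. Suppose you first discard $\int_0^{\lambda_0/2}R_{kk}\,\dd\lambda\geq0$ and then bound the tail from below by applying \eqref{eq:IntNullcurvcond} to $\chi\phi$. All you get is
\begin{equation*}
\int_{\lambda_0/2}^{\bar\lambda}\phi^2R_{kk}\,\dd\lambda\;\geq\;-\sum_{\ell=0}^m\mathcal Q_\ell\norm{(\chi\phi)^{(\ell)}}-\int_0^{\lambda_0/2}\chi^2R_{kk}\,\dd\lambda~.
\end{equation*}
The last term is non-positive and unbounded below, because the NEC gives no upper bound on $R_{kk}$. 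The fact that extending the support into the NEC region ``only adds non-negative contributions'' is exactly what hurts. It makes the quantity the integrated condition controls larger than the one you need to control.

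The repair is to keep the initial-interval term and use it to pay for the overlap. Choose $0\leq\chi\leq1$, so that $(\chi\phi)^2\leq g^2$ and $g^2-(\chi\phi)^2$ is supported in $[0,\lambda_0)$. Then
\begin{equation*}
\int_0^{\bar\lambda}g^2R_{kk}\,\dd\lambda=\int_0^{\bar\lambda}(\chi\phi)^2R_{kk}\,\dd\lambda+\int_0^{\lambda_0}\big(g^2-(\chi\phi)^2\big)R_{kk}\,\dd\lambda\;\geq\;-\sum_{\ell=0}^m\mathcal Q_\ell\norm{(\chi\phi)^{(\ell)}}~.
\end{equation*}
This is the bound you want, with $\norm{(\chi\phi)^{(\ell)}}$ rather than $\norm{\phi^{(\ell)}}$ on the right.

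The ramp of $\chi$ inside $(0,\lambda_0)$ contributes terms of order $\lambda_0^{1-2\ell}$, so your $\bar h$ diverges as $\lambda_0\to0$. The argument therefore needs $\lambda_0>0$ strictly, even though the statement allows $\lambda_0\in[0,\bar\lambda)$. With $\lambda_0=0$, a negative spike of $R_{kk}$ arbitrarily close to $\Gamma$ is compatible with \eqref{eq:IntNullcurvcond} and can undo the trapping.

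With this correction, the rest of your proposal goes through: the traced index form, the focal-point criterion, and the global compactness step.
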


The second theorem removes the assumption of short NEC satisfaction to the future of $\Gamma$ for a short period of NEC violation to the past of $\Gamma$. Much like the second Improved Hawking Singularity Theorem, Theorem \ref{th:FK2}, this scenario works through quantum interest accrued in the integrated null curvature condition \eqref{eq:IntNullcurvcond}: this past NEC violation must be more than compensated for in the future of $\Gamma$.

\begin{theorem}[Improved Penrose Singularity Theorem II]\label{th:FK4}
Let $M$ be a globally hyperbolic spacetime with a non-compact Cauchy slices and possessing a codimension-2 trapped surface, $\Gamma$. Furthermore suppose that there exists an $\bar\lambda$ such that for each null geodesic emanating from $\Gamma$ the integrated null curvature condition, \eqref{eq:IntNullcurvcond}, holds for any $g(\lambda)$ compactly supported in $[0,\bar\lambda]$ and for some $m>1$. Furthermore suppose that mean normal curvature of $\Gamma$ is bounded below by some constant $\bar h'$ set by the $\{\mc Q_m\}$ appearing in \eqref{eq:IntNullcurvcond} (note that $\bar h'$ is different than the $\bar h$ of the previous theorem; see \cite{Fewster:2019bjg} for precise relations) and there exists a $\lambda_0>0$ and $\rho_0>0$ such that the null curvature condition is violated $k^\mu k^\nu R_{\mu\nu}<-\rho_0$ holds for all future pointing null geodesics an intersecting $\Gamma$ and continued to affine times $\lambda\in(-\lambda_0,0]$ to the past of $\Gamma$. Then there exists a future directed inextendible null geodesic emanating from $\Gamma$ with affine length less than $\bar\lambda$ and $M$ is future null geodesically incomplete.
\end{theorem}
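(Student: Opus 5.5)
The argument is by contradiction and runs along the same lines as the index-method proof of topological censorship above. Suppose every future-directed null geodesic normal to $\Gamma$ is defined for affine parameter at least $\bar\lambda$. By the standard Penrose argument (as in the sketch of Theorem \ref{thm:PST}), it then suffices to show that every such geodesic $\gamma$ contains a point focal to $\Gamma$ within $[0,\bar\lambda]$. If so, the boundary of the causal future of $\Gamma$ is generated by compact segments and is therefore compact. Global hyperbolicity then projects it homeomorphically onto a subset of the non-compact Cauchy slice, which is the contradiction. The remaining work is to find focal points.

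To produce a focal point along $\gamma$, I will use the index form for null geodesics normal to $\Gamma$. Take a parallel-propagated orthonormal transverse frame $E_A$ and variations $x^A(\lambda)=f(\lambda)E_A$ with $f(\bar\lambda)=0$. Summing over $A$, the index becomes
\beq
    (d-2)\,J[f]=-(d-2)\,f(0)^2\,H+\int_{-\lambda_0}^{\bar\lambda}\dd\lambda\left[(d-2)f'^2-f^2R_{kk}\right]~,
\eeq
where $H$ is the mean normal curvature of $\Gamma$, entering with the sign that makes a trapped surface contribute negatively. The integral has been extended into the past interval $(-\lambda_0,0]$ in the same way as in the Fewster--Kontou treatment of Theorem \ref{th:FK2}. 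Concretely, I will run the index argument on the segment starting at affine time $-\lambda_0$, using the fact that a negative index on $[-\lambda_0,\bar\lambda]$ for a variation supported there still forces a focal point to the future of $\Gamma$. The alternative is to use the null Raychaudhuri equation, \eqref{eq:nullRayEq}, on $(-\lambda_0,0]$: starting from the expansion at $\Gamma$ and integrating backward, the NEC violation $R_{kk}<-\rho_0$ constrains $\theta$ on the past segment, and so the boundary term can be transported to $-\lambda_0$. Focal points within $[0,\bar\lambda]$ follow once some admissible $f$ gives $J[f]<0$.

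Next I split $f=f_-+f_+$ into a part supported near $(-\lambda_0,0]$ and a part supported in $[0,\bar\lambda]$, glued smoothly so that $f^2=g^2$. On the future piece I apply the integrated null curvature condition \eqref{eq:IntNullcurvcond}:
\beq
    -\int f^2R_{kk}\le\sum_\ell \mc Q_\ell\norm{f^{(\ell)}}~.
\eeq
This bounds the curvature contribution by Sobolev norms of $f$, which depend only on the shape of $f$ and not on $\gamma$. On the past piece, the hypothesis $R_{kk}\le-\rho_0$ makes $-f^2R_{kk}$ positive, so its cost is at most $\rho_0^{-1}$-independent and is controlled by $\lambda_0\sup f^2$ times the (unknown but dominated) curvature. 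To avoid needing an upper bound on $|R_{kk}|$, I instead choose $f$ so that it vanishes on the past segment except through the boundary data. In that case the past NEC violation enters only through its effect on the expansion at $\Gamma$. This is the quantum-interest mechanism: the negative $R_{kk}$ before $\Gamma$ must be repaid inside $[0,\bar\lambda]$, so the condition \eqref{eq:IntNullcurvcond} applied to a test function straddling $0$ yields a strictly positive lower bound on $\int_0^{\bar\lambda}g^2R_{kk}$ for $g$ concentrated after $0$.

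Combining the pieces, I will optimize over a one-parameter family of test functions, for instance $f(\lambda)=\big(1-\lambda/\bar\lambda\big)^{m}$ suitably smoothed. The result should be $J[f]\le -H+C(\{\mc Q_\ell\},\bar\lambda,\lambda_0,\rho_0)$, so $J<0$ whenever $H>\bar h'$, with $\bar h'$ defined by this optimization. The main obstacle is the treatment of the past segment: getting a bound in which $\rho_0$ and $\lambda_0$ improve rather than worsen the threshold $\bar h'$. It also requires checking the hypothesis $m>1$, which is what makes the boundary terms from the gluing finite and controllable by the norms $\norm{g^{(\ell)}}$.
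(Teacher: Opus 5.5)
There is a genuine gap, and it sits at the step you yourself flag as the main obstacle: how the past interval $(-\lambda_0,0]$ enters. Your reduction to focal points, the Penrose compactness argument, and the overall strategy (index, test function, threshold from optimization) match the paper's sketch; the routes you offer for the past interval do not close.

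The index that detects focal points of $\Gamma$ lives on $[0,\bar\lambda]$, with its boundary term at $\lambda=0$. Your right-hand side exceeds $(d-2)$ times that index by $\int_{-\lambda_0}^0[(d-2)f'^2-f^2R_{kk}]\,\dd\lambda\geq 0$, so it is at best an upper bound on it. Both justifications you give for it fail:
\begin{itemize}
\item A negative index on $[-\lambda_0,\bar\lambda]$ for variations vanishing at both ends only yields a point conjugate to $\gamma(-\lambda_0)$. That point need not be a focal point of $\Gamma$, and it may even lie before $\Gamma$.
\item Transporting the boundary term back with \eqref{eq:nullRayEq} would need a lower bound on $\theta'$ on $(-\lambda_0,0]$. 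That means controlling the shear and $\theta^2$ there, and nothing in the hypotheses provides this.
\end{itemize}
In your splitting you then apply \eqref{eq:IntNullcurvcond} only to the future piece. This leaves $-\int_{-\lambda_0}^0 f^2R_{kk}>0$ with no bound on $|R_{kk}|$. Setting $f=0$ there simply removes the past from the argument: $\theta|_\Gamma$ is given data, so the violation cannot act through it.

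The missing step is to apply \eqref{eq:IntNullcurvcond} to the whole straddling function while keeping the index on $[0,\bar\lambda]$. Set $(d-2)H=-\theta|_\Gamma$. Multiplying \eqref{eq:nullRayEq} by $f^2$ and completing the square shows the following: if $f(0)=1$, $f(\bar\lambda)=0$ and $\int_0^{\bar\lambda}[(d-2)f'^2-f^2R_{kk}]\,\dd\lambda<(d-2)H$, there is a focal point in $(0,\bar\lambda]$. Now extend $f$ to $\phi$ supported in $(-\lambda_0,\bar\lambda]$ with $\phi=f$ on $[0,\bar\lambda]$, glued with enough regularity that $\norm{\phi^{(\ell)}}<\infty$ for $\ell\leq m$. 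Then
\[
-\int_0^{\bar\lambda}f^2R_{kk}\,\dd\lambda=-\int\phi^2R_{kk}\,\dd\lambda+\int_{-\lambda_0}^{0}\phi^2R_{kk}\,\dd\lambda\leq\sum_{\ell}\mc Q_\ell\norm{\phi^{(\ell)}}-\rho_0\int_{-\lambda_0}^{0}\phi^2\,\dd\lambda~.
\]
So a focal point exists once $(d-2)H>(d-2)\int_0^{\bar\lambda}f'^2\,\dd\lambda+\sum_\ell\mc Q_\ell\norm{\phi^{(\ell)}}-\rho_0\int_{-\lambda_0}^0\phi^2\,\dd\lambda$ for some admissible $\phi$, and the infimum over $\phi$ defines $\bar h'$. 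The past violation lowers the threshold automatically, which is the quantum-interest effect you were after. No strictly positive bound on $\int_0^{\bar\lambda}f^2R_{kk}$ is needed, and none holds in general.

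Your own display would also close if you applied \eqref{eq:IntNullcurvcond} to all of $f$ on $(-\lambda_0,\bar\lambda]$, since the past curvature term is then absorbed. But that throws away the $\rho_0$ gain.

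Either way, you need \eqref{eq:IntNullcurvcond} for test functions supported in $(-\lambda_0,\bar\lambda]$ along the past-extended geodesics, as in Theorem~\ref{th:FK2}. Read literally, with support in $[0,\bar\lambda]$, the hypothesis on $(-\lambda_0,0]$ never enters.
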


The proofs of these theorems work in a very similar manner to those for the improved Hawking Singularity Theorems (Theorems \ref{th:FK1} and \ref{th:FK2}). However the status of the matter that sources them is much murkier. As we argued in the previous lecture, there are no null QEIs (even state-dependent ones) integrated along finite null segments strictly within QFT without introducing a physically motivated cutoff. The closest candidate is the SNEC \eqref{eq:gravSNEC} which was indeed shown to be sufficient input for the Improved Penrose Singularity Theorems \cite{Freivogel:2020hiz}. The authors of that paper used the SNEC, for instance, to show that the NEC violation incurred by Hawking radiation outside the horizon does not spoil the robustness of the singularity theorem. However within the realm of QFT alone, the SNEC remains a conjectured condition. There is hope of utilizing DSNEC as a `regulated' form the SNEC to imply a regulated integrated curvature condition that could feed into a singularity theorem, however to date this has not been achieved.

\section{Lecture 5: Energy and quantum information}\label{sect:l5}

In this lecture we shift gears and focus on a completely new toolkit for discussing energy in QFT. This toolkit lies in the connections to quantum information theory. We have already seen a hint of this connection: the Reeh-Schlieder theorem, which guarantees the non-existence of a positive compactly smeared operator, is directly related to the dense entanglement structure of the quantum vacuum state.

\subsection*{Lightning review of entanglement and entanglement entropy in QFT}

To begin let us recall some basic definitions. Quantum entanglement is a form of information sharing between subsystems of state that is classically forbidden. The quintessential example of an entangled state is the Bell pair of a tensor product of a two-state system:
\beq\label{eq:Bell}
    |\psi_\text{Bell}\rangle=\frac{1}{\sqrt{2}}\left(|\uparrow\rangle_A|\downarrow\rangle_B+|\downarrow\rangle_A|\uparrow\rangle_B\right)~.
\eeq
The systems $A$(lice) and $B$(ob) can be spacelike separated yet a projective measurement by Alice (say yielding an $|\uparrow\rangle_A$) instantaneously collapses Bob's state vector (in this case to $|\downarrow\rangle_B$). Thus there is a non-classical correlation between the systems. Contrapositively, without access to both subsystems, there is a loss of information and the {\it entanglement entropy} is measure of that loss. We define it as the following.

Let $\rho$ be a density matrix\footnote{The notation of $\rho$ for a density matrix is standard in the quantum entanglement literuature. We caution the reader not to equivocate this will all the other instances of the symbol $\rho$ appearing these lectures.} on a tensor Hilbert space admitting a tensor product on two subsystems, $\mc H=\mc H_A\otimes\mc H_B$ (one can have in mind $\rho=|\psi\rangle\langle \psi|$ for some state $|\psi\rangle\in\mc H$). We define the reduced density matrix, $\rho_A$, as the partial trace of $\rho$ over $\mc H_B$:
\beq
    \rho_A=\Tr_{\mc H_B}\rho~.
\eeq
In essence, $\rho_A$ is the state after `forgetting about' subsystem $B$. Alternatively $\rho_A$ is the unique density matrix contained entirely on the algebra of operators acting on $\mc H_A$ reproducing the same measurements as $\rho$ itself of operators on $\mc H_A$. As mentioned above, forgetting $B$ means a loss of information for entangled states and so $\rho_A$ might be a probabilistic mix after the partial trace. The entanglement entropy is the von Neumann entropy of this mixed state:
\beq
    S_A[\rho]=-\Tr_{\mc H_A}\left(\rho_A\log\rho_A\right)~.
\eeq
In QFT we are going to be interested the amount of quantum correlation shared between local subsystems and so we will imagine the scenario when $\mc H_A$ is the Hilbert space of local quantum fields contained in a spacelike region $A$ of a Cauchy slice, $\Sigma$, and $B$ is the complement of that region, $B=\Sigma\setminus A$.

Because of the dense local entanglement structure of the quantum field theory vacuum, for virtually all (pure) states of the theory the entanglement entropy of a subregion $A$ is dominated by the short distance correlations shared between fields just inside and outside of $A$, straddling its boundary, $\pa A$. This leads to an `area law' entanglement entropy
\beq\label{eq:Sarealaw}
    S_A\approx\frac{\text{Area}[\pa A]}{\epsilon^{d-2}}+\ldots
\eeq
where $\epsilon$ is a short-distance cutoff on the modes close to $\pa A$ (this can be made more precise by discretizing the theory to a lattice; $\epsilon$ is then proportional to the lattice spacing).

Regardless of this divergence (which would na\"ively make the entanglement entropy an ill-defined, empty quantity for QFTs) the strict inequalities satisfied by quantum information measures (positivity, monotonicity under inclusions, etc.) make $S_A$ a useful tool for analyzing and constraining QFTs. Additionally the resemblance of \eqref{eq:Sarealaw} to the Bekenstein-Hawking entropy of black-holes
\beq
    S_\text{BH}=\frac{\text{Area}[\text{horizon}]}{4G_N}
\eeq
suggesting a potential interplay between quantum information and gravitational physics, which we will only briefly explore in Lecture \ref{sect:l6}.

A closely related object to the entanglement entropy is the relative entropy between two states, $\rho$ and $\sigma$, given by
\beq
    S_\text{rel}(\rho||\sigma)=\Tr\left(\rho\log\rho\right)-\Tr\left(\rho\log\sigma\right)~.
\eeq
The relative entropy is a measure of {\it distinguishability}. It is positive $S_\text{rel}(\rho||\sigma)\geq 0$ with equality iff the two states are equal.
\beq
    S_\text{rel}(\rho||\sigma)=0\qquad\Leftrightarrow\qquad \rho=\sigma~.
\eeq
Applied to reduced states it has a number of nice properties. For one it is {\it monotonic} under inclusions,
\beq\label{eq:Srelmono}
    S_\text{rel}(\rho_A||\sigma_A)\geq S_\text{rel}(\rho_{A'}||\sigma_{A'})~,\qquad \mc D(A')\subseteq \mc D(A)~,
\eeq
where $\mc D(\cdot)$ denotes the domain of dependence. This monotonicity simply expresses that if we trace out more of a region, $A\rightarrow A'$, we have less information by which to distinguish $\rho$ and $\sigma$.

\subsection*{Modular Hamiltonians}

An important object in the study of entanglement entropy and perhaps our first instance of a direct connection to energy density is the {\it modular Hamiltonian} of a reduced state. For a state $\rho$ reduced on a spacelike subregion $A$, the modular Hamiltonian is defined as
\beq
    K_A:=-\frac{1}{2\pi}\log\rho_A~.
\eeq
This definition is very much physically motivated by the fact that the most well known mixed state, the thermal or the Gibbs state, can be written as
\beq
    \rho_\text{Gibbs}=e^{-\beta H}~,\qquad H\text{ is the Hamiltonian.}
\eeq
So $K_A$ is the ``Hamiltonian'' of the general quantum mixed state $\rho_A$ at temperature $(2\pi)^{-1}$. However, unlike the Gibbs state, we have no right or reason to expect $K_A$ to be integral of a local stress tensor over a spacelike subregion (as is the Hamiltonian \eqref{eq:Ham}): it is a non-analytic functional of a reduced quantum state. Nevertheless, much like the Hamiltonian of a Gibbs state, the modular Hamiltonian is a formally positive operator,
\beq
    K_A\geq 0~,
\eeq
owing to the fact that density matrix is a normalized, $\Tr_{\mc H_A}\rho_A=1$, and positive operator. I have said ``formally" above because $K_A$ is divergent in most states due to a pile-up of short-distance modes near the boundary of $A$. This is essentially the same divergence in the entanglement entropy itself, \eqref{eq:Sarealaw}. A much better behaved object is the so-called {\it full modular Hamiltonian} which comes from subtracting the modular Hamiltonian of the complement region:
\beq
    \mathbf K_A:=K_A\otimes 1_B-1_A\otimes K_{B}~.
\eeq

Amazingly there are special situations in which modular Hamiltonian takes a local form (i.e. expressed as the integral of a local operator). The simplest example is the vacuum state reduced to a half-space of Minkowksi space, $A=\{\vec x\big|x^1>0\}$,
\beq\label{eq:RindModHam}
    K_A=\int_{A}\dd^{D}x\,x^1\,\hat T_{00}(t=0,\vec x)~,
\eeq
which is the generator of boosts in the Rindler wedge, a result known as the Bisognano-Wichmann theorem \cite{Bisognano:1976za}. This is sensible, the flow of $K_A$ is an accelerating observer who experiences a Rindler horizon and so has ``lost'' access to half of Minkowski space. Her state is mixed -- it is thermal with Unruh radiation.

The full modular Hamiltonian of the Rindler wedge is simply the double-sided boost generator:
\beq
    \mathbf{K}_\text{Rind.}=\int_\Sigma\,\dd^D\vec x\,x^1\,\hat T_{00}(t=0,\vec x)~.
\eeq
Note that because of the flip of sign of $x^1$, flows of $\mathbf{K}_\text{Rind.}$ take opposite directions in left and right Rindler wedges.

In CFTs we can generalize this result by acting global conformal transformations on the region $A$. Since conformal transformations map circles / lines to circles / lines, the half-space maps to a ball region, $B_R=\{\vec x~\big|~\abs{\vec x}\leq R\}$. The stress tensor is a primary operator under conformal transformations which act locally on it. The result is the modular Hamiltonian for the CFT vacuum reduced to a ball region, $A=B_R$ \cite{Casini:2011kv}:
\beq
    K_A=\int_{A}\dd^D\vec x\,\frac{R^2-r^2}{2R}\hat T_{00}(t=0,\vec x)~.
\eeq
The interesting aspect of these expressions is that the information content of the quantum vacuum (its entanglement spectrum) is an integrated energy density. Of course, the half-space and the ball region are very special subregions, however in what follows we will employ the fact that the edge of the causal domain of dependence for any region, on small enough scales, looks like the Rindler wedge.

\subsection*{Proof of the ANEC}

As a primary example of the utility of relating energy to quantum information, let us review a nifty proof of the ANEC developed by Faulkner, Leigh, Parrikar, and Wang \cite{Faulkner:2016mzt}. The key tools of this proof are the monotonicity of the relative entropy, \eqref{eq:Srelmono}, the expression of the vacuum modular Hamiltonian for a Rindler wedge, \eqref{eq:RindModHam}, and a relation between relative entropy and the vacuum modular Hamiltonian given by
\beq
    S_\text{rel}\left(\rho^{(\psi)}_A||\rho^{(\Omega)}_A\right)=\frac{1}{2\pi}\Delta\langle K_A^{(\Omega)}\rangle-\Delta S_A~,
\eeq
where
\beq
    \Delta\langle K_A^{(\Omega)}\rangle=\langle K_A^{(\Omega)}\rangle_\psi-\langle K^{(\Omega)}_A\rangle_\Omega~,\qquad \Delta S_A=S_A[\rho^{(\psi)}]-S_A[\rho^{(\Omega)}]~.
\eeq
which follows from the definitions of $K_A$ and $S_A$, plus simple algebra. Because the entanglement entropy of pure states is complementary, $S_A[|\psi\rangle\langle\psi|]=S_B[|\psi\rangle\langle\psi|]$ subtracting this relation for any two pure reference states implies the operator monotonicity relation of the full modular Hamiltonian
\beq\label{eq:modHamMono}
    \mathbf{K}_A-\mathbf{K}_{A'}\geq 0~,\qquad \mc D(A')\subseteq\mc D(A)~.
\eeq
The authors consider a deformation of the half-space $A=\{x^\mu~\big|~x^0=0,~x^1\geq 0\}$ to a region $A'=\{\tilde x^\mu~\big|~\tilde x^0=\zeta^0(x^1,\vec x_\perp)~,\tilde x^1=x^1+\zeta^1(x^1,\vec x_\perp)\geq \zeta^1(x^1,\vec x_\perp)\}$. In the limit of infintesimal $\zeta$, the authors show
\begin{align}
    K_{A'}-K_A\approx&-\int\dd^{d-2}\vec x_\perp\int_0^\infty\,\dd x^+\,\zeta^+\hat T_{++}(x^+,x^-=0,\vec x_\perp)\nonumber\\
    &+\int\dd^{d-2}\vec x_\perp\int_{-\infty}^0\,\dd x^-\,\zeta^-\hat T_{--}(x^+,x^-=0,\vec x_\perp)~.
\end{align}
A similar result can be established for complement region $B$ under this deformation and so the full modular deformed Hamiltonian takes the form
\begin{align}
    \mathbf{K}_{A'}-\mathbf{K}_A\approx&-\int\dd^{d-2}\vec x_\perp\int_{-\infty}^\infty\,\dd x^+\,\zeta^+\hat T_{++}(x^+,x^-=0,\vec x_\perp)\nonumber\\
    &+\int\dd^{d-2}\vec x_\perp\int_{-\infty}^\infty\,\dd x^-\,\zeta^-\hat T_{--}(x^+=0,x^-,\vec x_\perp)~.
\end{align}
Since the $\mc D(A')\subseteq\mc D(A)$ for $\zeta^+>0$ and $\zeta^-<0$ and otherwise arbitrary, the monotonicity of the full modular Hamiltonian \eqref{eq:modHamMono} implies the positivity of the ANEC as an operator,
\beq
    \int_{-\infty}^\infty\dd x^+T_{++}(x^+,x^-=0,\vec x_\perp)\geq 0~.
\eeq

\subsection*{QNEC}

In the last subject of this lecture we will relate a {\it local bound} on the null energy to quantum information. This is known now as the {\it quantum null energy condition} (QNEC) \cite{Bousso:2015mna}. In very rough terms, if a single variation of the entangling surface of a region yields an integrated bound on null energy (the ANEC), then the second variation results in a local bound. We will introduce it somewhat unmotivated at this point (the original motivation arose from a much more general bound, the quantum focussing conjecture \cite{Bousso:2015mna}, that for pedagogical reasons we will discuss next lecture). 

The QNEC is as follows. Consider a point $ x=(x^+,x^-,\vec {x}_\perp)$, a Cauchy slice containing $\bar x$ such that there exists a closed codimension-2, $\Gamma$, containing $x$ such that both of future null expansions vanish at $\bar x$ (we say that $\Gamma$ is {\it marginal} at $x$). $\Gamma$ encloses a spacelike region, $A\subset \Sigma$. For a state $|\psi\rangle$ in the Hilbert space on $\Sigma$ we can consider its entanglement entropy upon reduction to $A$, $S_A[\rho]$. For a null vector, $k^\mu$, pointing towards the interior of $A$ at $x$ we consider variations of $S_A[\rho]$ under an infinitesimal null deformation $\delta\lambda(x)$ of $\Gamma$ in the direction of $k$ localized to $\bar x$. See Figure \ref{fig:QNECsetup}.

\begin{figure}[h!]
    \centering
    \begin{tikzpicture}[3d view,perspective]
    \filldraw[draw=black,fill=cyan!50] (1,0,1) -- (1,4,1) -- (6,4,1) -- (6,0,1) -- (1,0,1);
    \filldraw[draw=black,fill=white,opacity=.75] (0,0,0) -- (0,4,0) -- (3.5,4,3.5) -- (4,0,4) -- (0,0,0);
    \draw[very thick, blue] (1,0,1) -- (1,4,1);
    \draw[thick, blue!50, smooth]
    plot[domain=0:4,samples=145] ({.75*exp(-(\x-2)^2/.01)+1},{\x},{.75*exp(-(\x-2)^2/.01)+1});
    \node[below] at (.85,2.1,.85) {$\Gamma(\bar x)$};
    \node[above] at (1.8,1.9,1.8) {$\Gamma(\bar x)+\delta\lambda(\bar x)$};
    \node[right] at (3.5,2,1) {$A$};
    \node[above] at (1.5,.5,1.5) {$k^\mu$};
    \draw[-Stealth] (1,.5,1) -- (1.5,.5,1.5);
\end{tikzpicture}
\caption{The QNEC set-up. The codimension-2 surface $\Gamma$ is an entangling surface for $A$ with vanishing null expansion at $\bar x$. We consider the second variation of $S_A$ with respect to a small deformation of $\Gamma$ in a null direction pointing towards the interior of $A$.}\label{fig:QNECsetup}
\end{figure}

The null energy density in $|\psi\rangle$ is bounded below by the second null variation of $S_A$ at $x$:
\beq\label{eq:QNEC}
    \langle k^\mu k^\nu T_{\mu\nu}(\bar x)\rangle_\psi\geq \frac{1}{2\pi\sqrt{\hat h(\bar x)}}\frac{\delta^2}{\delta\lambda(x)^2}S_A[\rho^{(\psi)}]\Big|_{\delta\lambda=0}~.
\eeq
We have talked at length in past lectures about how local energy densities are unbounded in QFT and the QNEC is not a free lunch in this sense. The right-hand side is unbounded over all states in QFT as well. The key benefit of the QNEC is that it is the sharpest relation of a local energy density to a quantum information quantity and as such brings the rich and rigid structure that quantum information theory must satisfy to the toolkit of constraining energy densities. This is even sharper for interacting QFTs, where there is an accumulation of evidence that the QNEC is in fact, {\it saturated} with an equality appearing in \eqref{eq:QNEC} \cite{Leichenauer:2018obf,Balakrishnan:2019gxl}. (The QNEC is known to not be saturated in free theories\footnote{It is worth thinking to one's self what is different about free theories vs. interacting theories that makes this so!}). More importantly, the QNEC has been {\it proven} for both free \cite{Bousso:2015wca} and interacting field theories \cite{Balakrishnan:2017bjg} in Minkowski spacetime putting it (amongst the ANEC) in the very rare set of lower bounds that are accepted as simply just true.

The proof of the QNEC for interacting field theories, \cite{Balakrishnan:2017bjg}, is a {\it tour de force} and an entire lecture series could be spent establishing the wealth of tools that go into its proof (as originally stated). We will instead give a summarized road-map of the free field QNEC proof \cite{Bousso:2015wca} (or a simplified variant \cite{Balakrishnan:2019gxl}) based upon the properties of free fields quantized on a lightsheet we established in Lecture \ref{sect:l3}.

\subsubsection*{A free field proof of QNEC}

We start to restating the QNEC as the positivity of the relative entropy under second null deformations
\begin{align}
    \frac{\delta^2}{\delta\lambda(x)^2}S_\text{rel}\left(\rho^{(\psi)}_A||\rho^{(\Omega)}_A\right)=&\frac{\delta^2}{\delta\lambda(x)^2}\left(\frac{1}{2\pi}\Delta \langle K_A^{(\Omega)}\rangle-\frac{\delta^2}{\delta\lambda^2}\Delta S_A\right)\nonumber\\
    =&\frac{1}{2\pi}\sqrt{\hat h(x)}\langle :T_{++}(x):\rangle_\psi-\frac{\delta^2}{\delta\lambda(x)^2}S_A[\rho^{(\psi)}]\geq 0~,
\end{align}
where in the moving from the first to the second line we have used the variation of the half-space modular Hamiltonian as in the derivation of the ANEC and we have also used that because of translation invariance of the vacuum, $\frac{\delta}{\delta\lambda}S_A[\rho^{(\Omega)}]=0$. The inequality in the second line is not a quantum information inequality: it is a restatement of the QNEC and we must still prove it. While the first variation of $S_\text{rel}$ is sign-definite under monotonicity, establishing the inequality for the second variation is the non-trivial part of the QNEC proof. 

We now consider a $A$ be a half-space of a Cauchy slice of Minkowksi spacetime and $\mc L$ be the lightsheet of null rays shot to the future and past of $\pa A$.\footnote{This lightsheet is infinitely extendable because $\pa A$ is marginal and Minkowski spacetime is flat so the Raychaudhuri equation \eqref{eq:nullRayEq} guarantees that the expansion remains zero in both directions.} We now quantize free fields on this light-sheet in the same fashion as Lecture \ref{sect:l3}. In particular we break $\mc L$ into pencils of transverse width $a$ (the infinitesimal volume element $\sqrt{\hat h}$ is then replaced by $a^{d-2}$). See Figure \ref{fig:FFQNEC} for a cartoon.

\begin{figure}[h!]
    \centering
    \begin{tikzpicture}[3d view,perspective]
    \filldraw[draw=black,fill=cyan!50] (1,0,1) -- (1,4,1) -- (6,4,1) -- (6,0,1) -- (1,0,1);
    \filldraw[draw=black,fill=white,opacity=.75] (0,0,0) -- (0,4,0) -- (3.5,4,3.5) -- (4,0,4) -- (0,0,0);
    \fill[magenta,opacity=.5] (0,1.75,0) -- (0,2.25,0) -- (3.72,2.25,3.72) -- (3.79,1.75,3.79) -- (0,1.75,0);
    \draw[blue] (1,0,1) -- (1,4,1);
    \draw[thick,blue,rounded corners] (1,0,1)--(1,1.75,1) -- (1.5,1.75,1.5) -- (1.5,2.25,1.5) -- (1,2.25,1) -- (1,4,1);
    \node[above] at (1.8,1.9,1.8) {$\delta\lambda(\bar x)$};
    \node[below] at (.8,2.3,.8) {$\mathcal{P}$};
    \node[right] at (3.5,2,1) {$A$};
    \node[above] at (1.5,.5,1.5) {$k^\mu$};
    \node[above] at (3.8,2,3.8) {$a$};
    \draw[-Stealth] (1,.5,1) -- (1.5,.5,1.5);
\end{tikzpicture}
\caption{The set-up for the free field proof the QNEC. The null deformation is localized to a fixed pencil, $\mathcal P$, and for small enough transverse area can be regarded as a translation along that pencil generated by vacuum modular Hamiltonian.}\label{fig:FFQNEC}
\end{figure}

A key feature is that the state restricted to a specific pencil, $\mc P$, (namely the one containing null deformation) and reduced up to null cut of $\mc L$, is ``close'' to the vacuum for small $a$ \cite{Bousso:2015wca},
\beq
    \rho(\lambda)=\rho_{\mc P}^{(\Omega)}\otimes \rho_\text{rest}^{(\Omega)}+\sigma(\lambda)~,
\eeq
where $\sigma$ contains information about the excited state as well as the first non-trivial entanglement of $\mc P$ amongst the other pencils (here labelled ``rest"). $\sigma$ admits an expansion in $a$ with first nontrivial term coming from an insertion of a single particle operator on $\mc P$ entangled with the rest of the pencils,
\beq
    \sigma=a^{\frac{d-2}{2}}\sum_{ij}\int \dd r\dd\theta\,f_{ij}(r,\theta)\pa\phi(re^{i\theta})\otimes E_{ij}(\theta)+\ldots~,
\eeq
for some basis of density matrices, $E_{ij}$, of the "rest" system which w.l.o.g we can take to be eigenstates of the modular Hamiltonian of the ``rest" system with eigenvalue $\kappa_i$, $E_{ij}(\theta)=e^{\theta(\kappa_i-\kappa_j)}|i\rangle_\text{rest}\langle j|_\text{rest}$. There is a general formalism for the relative entropy of states ``close to the vacuum" in terms of the flow generated by the vacuum modular Hamiltonian itself (i.e. {\it modular flow}) \cite{Faulkner:2017tkh}
\beq
    S_\text{rel}\left(\rho^{(\psi)}(\lambda)||\rho^{(\Omega)}(\lambda)\right)=-\frac{1}{2}\int_{-\infty}^\infty\frac{\dd s}{4\sinh^2\left(\frac{s+i\epsilon}{2}\right)}\langle {\rho^{(\Omega)}}^{-1}\sigma e^{is\mathbf K^{(\Omega)}(\lambda)}{\rho^{(\Omega)}}^{-1}\sigma\rangle_{\Omega}+\ldots
\eeq
where $\mathbf{K}^{(\Omega)}$ the full modular Hamiltonian of $\rho_{\mc P}^{(\Omega)}\otimes\rho_\text{rest}^{(\Omega)}$ reduced up to $\lambda$. As we discussed earlier, the vacuum modular Hamiltonian is a generator of boosts which are translations along the null ray defining the pencil. Thus taking the null derivatives of $S_\text{rel}$ we find
\begin{align}
    \frac{\dd^2}{\dd\lambda^2}S_\text{rel}=\frac{a^{d-2}}{2}\int\dd s~e^{s}\int \dd^2 r_{1,2}\dd^2\theta_{1,2}&\langle (\pa^3\phi)(r_1e^{i\theta_1})(\pa\phi)(r_2e^{i\theta_2+s})\rangle_{\Omega_{\mc P}}\nonumber\\
    &\sum_{ij}\langle E_{ij}(\theta_1)E_{ji}(\theta_2-is)\rangle_{\text{rest}}~.
\end{align}
The $\langle \cdot\rangle_{\Omega_{\mc P}}$ piece can be evaluated as a chiral 1+1 CFT correlator
\beq
    \langle \pa\phi(z_1)\pa\phi(z_2)\rangle=\frac{1}{(z_1-z_2)}~,
\eeq
and the $\langle \cdot\rangle_\text{rest}$ is evaluated in terms of the vacuum modular energies, $\kappa_i$, of the ``rest" system. Ultimately we can massage this entirely into
\beq\label{eq:d2SrelFinal}
    \frac{\dd^2}{\dd\lambda^2}S_\text{rel}=\frac{a^{d-2}}{2}\sum_{ij}\abs{F^{(2)}_{ij}}^2\,e^{-\pi(\kappa_i+\kappa_j)}\mathsf G\left(\kappa_i-\kappa_j\right)~,
\eeq
with
\beq
    F_{ij}^{(m)}=\int\frac{\dd r}{r^m}r^{i(\kappa_i-\kappa_j)}f_{ij}^{(m)}(r)~,\qquad \mathsf G(\nu)=\frac{\pi \nu(1+\nu^2)}{\sinh\pi\nu}~.
\eeq
The expression \eqref{eq:d2SrelFinal} is inherently positive which then establishes the QNEC for free (and super-renormalizable) theories.

\section{Lecture 6: Gravity and quantum information}\label{sect:l6}

In this final lecture we combine the tools that we have built up in the previous lectures to review the recent developments in using quantum information to constrain semi-classical gravity and for insights on the low-energy nature of quantum gravity, more broadly.

The first hints of a connection between semi-classical gravity and information theory was the realization by Bekenstein and Hawking that black holes are thermodynamic objects with a temperature and an entropy \cite{Bekenstein:1974ax,Hawking:1974rv,Hawking:1975vcx}. Moreover this entropy is finite and given by the area of its event horizon in Planck units:
\beq\label{eq:SBH2}
    S_\text{BH}=\frac{\mc A_\text{horizon}}{4G_N}~.
\eeq
The unavoidable growth of the black hole horizon mirrors the unavoidable growth of the entropy, a statement known as the {\it second law of black hole thermodynamics.}

The interpretation of \eqref{eq:SBH2} has been a subject of continual discussion in theoretical high-energy physics. Given the `no-hair theorem' \cite{Israel:1967wq,Israel:1967za,Carter:1971zc,Hawking:1971vc} that black holes are determined entirely by a handful of macroscopic quantities  (mass, angular momentum, and electric and magnetic charge), a standard thermodynamic interpretation of $S_\text{BH}$ is that it is a count of the number of black hole microstates sharing the same no-hair variables. However this count of microstates becomes confusing in light of quantum gravity: we can imagine forming a black hole from the collapse of a single pure quantum state for instance. A modern interpretation, and one of growing consensus, is that $S_\text{BH}$ is a form of entanglement entropy of microscopic (quantum gravitational) degrees of freedom that we have lost access to due to horizon formation. This interpretation was greatly bolstered by Susskind and Uglum \cite{Susskind:1994sm} who argued for the interpretation of $S_\text{BH}$ in string theory as open strings ``cut'' by the horizon and by the subsequent calculation of Strominger and Vafa \cite{Strominger:1996sh} who reproduced the microstate counting of certain extremal black holes from counting brane configurations in string theory.

The connection between areas and quantum information takes a much deeper role than just their application to horizons as evidenced through holographic duality and in particular, the advent of the AdS/CFT correspondence. In the AdS/CFT correspondence, the entanglement entropy of a CFT state reduced upon a region $A$, is given by minimal areas extending into the bulk of AdS space and anchored at $A$, a result known as the {\it Ryu-Takayanagi formula} \cite{Ryu:2006bv}:
\beq\label{eq:RTform}
    S_A=\min_{\substack{\gamma\\\pa\gamma=A}}\frac{\mc A_\gamma}{4G_N}~.
\eeq

\subsection*{The generalized entropy}

The interpretation of the Bekenstein-Hawking entropy as an entanglement entropy, as well as indication by the Ryu-Takanagi formula of a relation between minimal areas and entanglement entropies leads us to an intriguing mechanism for regularizing the area law divergences \eqref{eq:Sarealaw} of entanglement entropy of quantum field theory when embedded into dynamical gravity. This mechanism was in fact first suggested by the arguments of Susskind and Uglum \cite{Susskind:1994sm}, who considered the quantity
\beq
    S_A+\frac{\mc A_{\pa A}}{4G_N}~,
\eeq
and showed that the area-law divergence \eqref{eq:Sarealaw} could be interpreted\footnote{This is not a vacuous or conjectural interpretation. A computation was done in linearized gravity to show that the one-loop contribution to coupling matched this divergence.} as a one-loop renormalization of Newton's constant, $G_N\rightarrow G_N^\text{ren}$. This suggests that the quantity known as the {\it generalized entropy} of a spacelike region $A$, defined as its boundary area in Planck units plus the entanglement of quantum fields reduced to $A$,
\beq\label{eq:Sgen}
    S_\text{gen}[A]=\frac{\mc A_{\pa A}}{4G_N}+S_A~,
\eeq
is a UV-finite quantity in quantum gravity. Note that this object really is defined for semi-classical gravity, as an effective field theory, with a dynamical graviton turned on: $S_A$ quantifies the entropy low-energy fluctuations (quantum fields, and gravitons) about a classical solution (spacetime geometry) in an unstated UV theory; amazingly, the low-energy avatar of the all the unknown UV degrees of freedom organize into a geometric area law with a finite (renormalized) coefficient.

This leads to general paradigm of promoting geometric quantities in semi-classical gravity to quantum informational quantities in an agnostic UV complete-theory of gravity by replacing codimension-2 areas with generalized entropies. Below we will explore some examples and consequences of this paradigm.

\subsection*{The generalized second law}

The second law of black hole thermodynamics, while classically true, is easily violated by quantum corrections due the realization (by Hawking) that black holes radiate quantum mechanically \cite{Hawking:1975vcx}. This realization is very analogous to the realization by Unruh that the vacuum of Minkowski space is thermal with respect to the vacuum of the Rindler wedge. In short, vacuum states generically look thermal outside of horizons. This radiation can escape, with low probability, to asymptotic infinity and when it does it carries energy away from the black hole, causing its mass and horizon area to shrink. Thus without any outside matter to source the black hole, $\mc A_\text{horizon}$ can in fact {\it decrease} over long time scales, violating the second law.

This was already realized in Hawking's paper \cite{Hawking:1975vcx} who noted that the {\it generalized second law} proposed one year earlier by \cite{Bekenstein:1974ax} is not violated by this scenario. The generalized second law is as follows:\\
\\
{\bf Generalized second law: }{\it The sum of the all black hole horizon areas (measured in Planck units) and the entropy of quantum fields exterior to those horizons never decreases.}\\
\\
In other words, the generalized entropy of the exterior of all black holes never decreases:
\beq\label{eq:gen2ndlaw}
    \dd S_\text{gen}[\text{BH}]\geq 0~.
\eeq
The generalized second law avoids the problem with Hawking radiation because the radiation comes in pairs. A small quanta of radiation that escapes to null infinity is created just outside of the horizon and comes entangled with a quanta just inside the horizon that falls into the singularity. While this creation decreases the horizon area, as mentioned above, the tracing out of the partner inside the horizon leads to an increase of the entanglement entropy of the quantum fields to the exterior of the horizon. The generalized second law is compatible with the radiation outside of causal horizons more broadly, e.g. the de Sitter and Rindler horizons \cite{Jacobson:2003wv}. Lastly, it has been shown that the generalized second law implies the achronal ANEC \cite{Wall:2009wi}.

The status of the generalized second law as a scientific law is still uncertain however. There are a variety of ``proofs'' starting from various assumptions (e.g. the covariant entropy bound). Most concretely, a proof of the generalized second law for quantum fields fields falling across general causal horizons was given by Wall \cite{Wall:2011hj} based upon certain assumptions of ultra-locality of the algebra of fields quantized on that horizon. These assumptions are directly verified for free and super-renormalizable QFTs as we have already encountered in Lectures \ref{sect:l3} and \ref{sect:l4}, however remains an assumption more generally.

\subsection*{The quantum focussing conjecture}

Motivated by the generalized second law, Bousso, Fisher, Leichenauer, and Wall proposed a generalization of the classical focussing theorem holding in semi-classical gravity, coined the {\it quantum focussing conjecture} \cite{Bousso:2015mna}. The coarse logic behind the QFC is that in quantum field theory and with dynamical gravity, neither $\frac{\mc A_{\pa A}}{4G_N}$ nor $S_A$ for a region are well defined due to quantum fluctuations. However the combination $S_\text{gen}[A]$ is. Similarly while the NEC ensures the classical focussing theorem
\beq
    \frac{\dd}{\dd\lambda}\theta\leq 0~,\qquad (R_{kk}\geq 0)~,
\eeq
this is violated in QFT by the need to subtract divergent quantum fluctuations (i.e. normal ordering). Their proposal is to replace the area in the expansion with the generalized entropy. To be specific, they define a {\it quantum expansion} of a closed codimension-2 surface, $\Gamma$, containing a point $x$ as
\beq
    \Theta[\Gamma;x]=\frac{4G_N}{\sqrt{\hat h(x)}}\frac{\delta}{\delta \lambda(x)}S_\text{gen}[A]\Big|_{\lambda=0}~,
\eeq
where much like the definition of the QNEC, $A$ is a spacelike region enclosed by $\Gamma$ and $\lambda(x)$ is a null deformation of $\Gamma$ pointing towards the interior of $A$ and localized at $x$. However unlike the QNEC we do not require $\Gamma$ to be marginal at $x$ in this definition. The quantum focussing conjecture is that $\Theta$ decreases along null deformations of $\Gamma$, i.e.
\beq\label{eq:QFC}
    \frac{\delta}{\delta\lambda(y)}\Theta[\Gamma;x]\leq 0~.
\eeq
Unlike the generalized second law, the QFC is conjectured to hold for any closed spacelike codimension-2 surface. Note that because $\Theta$ is non-local (depending on an entire codimension-2 surface, $\Gamma$) we do not have to apply the QFC at the same point that the quantum expansion as the argument of $\Theta$, i.e. $y\neq x$ is allowed. Thus the QFC yields an infinite amount of off-diagonal constraints that have no analog for the classical focussing theorem. This off-diagonal QFC is easily proven, however, (in fact in the same paper the QFC was introduced) because it only involves the entanglement entropy, $S_A$, contribution of $S_\text{gen}$. A quantum information bound known as {\it strong subadditivity} then implies 
\beq
    \frac{\delta}{\delta\lambda(y)}\Theta[\Gamma;x]\leq 0~,\qquad x\neq y~.
\eeq
The diagonal part of the QFC, involving the second variation of the generalized entropy at a point $x$, is much more non-trivial and remains unproven. There are no known counter-examples to the diagonal QFC. Moreover, the QFC seems to be the strongest known inequality applicable to semi-classical gravity.\footnote{In fact, it might even be too strong as originally stated: Shahbazi-Moghaddam has argued that the slightly weaker condition that the QFC inequality holds at locations where the quantum expansion itself vanishes is sufficient to derive all the known implications of the QFC \cite{Shahbazi-Moghaddam:2022hbw}.} It implies \cite{Bousso:2015mna} the {\it covariant entropy bound} \cite{Bousso:1999xy} (or colloquially, the ``Bousso bound"), the generalized second law along causal horizons \cite{Bousso:2015mna}, and in turn, the achronal ANEC along those horizons. In the context of AdS/CFT the QFC implies \cite{Akers:2016ugt} a property known as {\it entanglement nedge westing} (which is important for ensuring that holographic entanglement entropy is well behaved under subregion inclusions) \cite{Czech:2012bh,Wall:2012uf}, that the causal domain of a boundary subregion is contained in the entanglement wedge (an important component of bulk reconstruction) \cite{Engelhardt:2014gca}, and the principle of `no bulk shortcut' \cite{Akers:2016ugt}. Lastly, in any background the QFC implies the QNEC. Let us show this latter implication briefly.

We consider the diagonal contribution to the QFC and denote the null variation at $x$ by a prime, $'$. The quantum expansion is thus notated
\beq
    \Theta=\theta+\frac{4G_N}{\sqrt{\hat h}}S_A'~.
\eeq
The second null variation is then
\begin{align}
    \Theta'=&\theta'+\frac{4G_N}{\sqrt{\hat h}}\left(S_A''-S_A'\,\theta\right)\nonumber\\
    =&-\frac{1}{(d-2)}\theta^2-\sigma^2+\omega^2-8\pi G_N\langle T_{kk}\rangle_\psi+4G_N\left(S_A''-S_A'\,\theta\right)~,
\end{align}
where in the second line we replace $\theta'$ with the null Raychaudhuri equation and the $R_{kk}$ with the null stress-energy tensor assuming a consistent solution to the semi-classical Einstein equation. We now see that if $\Gamma$ is margigal at $x$ such that $\theta$ (and we additionally arrange for it to be shearless and irrotational, $\sigma_{\mu\nu}=\omega_{\mu\nu}=0$) then the QFC, $\Theta'\leq 0$ implies
\beq\label{eq:QFCtoQNEC}
\langle T_{kk}\rangle_\psi\geq \frac{1}{2\pi}S_A''~,
\eeq
which is a compact notation for the QNEC. What's fascinating about the above manipulation is that while QFC is very much a dynamical gravity inspired bound, all $G_N$'s have dropped from \eqref{eq:QFCtoQNEC}. At the time of its formulation this strongly hinted that the QNEC could be true and proven directly in QFT without dynamical gravity. The subsequent realization of such proofs gives strong credence to the validity of the QFC.

\subsection*{The quantum Penrose singularity theorem}

In this last section of the lectures, let us now use our quantum information toolkit to revisit the classic singularity theorems we saw before. We will go back to the beginning and revisit our first singularity theorem of Lecture \ref{sect:l2}, the Penrose singularity theorem. We will drop the assumption of the NEC, and instead utilize the generalized second law as the key physical input of the theorem. This theorem is due to Aron Wall \cite{Wall:2010jtc}.

A crux of Wall's theorem is that while the generalized second law is supposed (as an assumption of the theorem) to hold on all causal horizons, it certainly does not apply to generic null surfaces and there are known counter-examples \cite{Wall:2011hj}. In turn, if a null surface $\mc L$ has a decreasing generalized entropy it cannot be a causal horizon. This is subsequently used to show that certain null surfaces with decreasing $S_\text{gen}$ must terminate.

A key concept appearing Wall's quantum singularity theorem is the notion of a {\it quantum trapped surface}. Analogous to a trapped surface, a quantum trapped surface is one of negative quantum expansion. Namely, let $\Gamma$ be a codimension-2 closed spacelike submanifold of a Cauchy slice $\Sigma$. $\Gamma$ encloses a subregion $A\subset\Sigma$. $\Gamma$ is quantum trapped if for each point on $\Gamma$ and null generator future pointing and towards the exterior of $A$, the quantum expansion is negative, i.e. $S_\text{gen}$ is decreasing.

\begin{theorem}[Quantum singularity theorem]
    Let $M$ be a globally hyperbolic spacetime with non-compact Cauchy slices and a quantum trapped surface $\Gamma$ such that in a neighborhood of $\Gamma$ the semi-classical approximation holds (that is $\Theta$ is well approximated by the classical expansion, $\theta$). Then the generalized second law implies that $M$ is null geodesically incomplete.
\end{theorem}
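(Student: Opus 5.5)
We will mirror the structure of the classical proof of Theorem \ref{thm:PST}, replacing the area of cross-sections of the null congruence by the generalized entropy $S_\text{gen}$, and replacing the null Raychaudhuri equation plus the NEC by the generalized second law. Concretely, let $\mc L$ be the boundary of the causal future of $A$'s exterior piece generated from $\Gamma$. That is, $\mc L=\pa J^+(\text{ext})$ is the null hypersurface generated by future-directed null geodesics shot outward from $\Gamma$, with generators leaving $\mc L$ once they meet caustics or crossings. We argue by contradiction and assume $M$ is null geodesically complete.

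The first step is to show that $\mc L$ is (the relevant portion of) a causal horizon. Since $\mc L$ is the boundary of the past of a set of points, namely the future of the exterior, it is the horizon of the observers whose worldlines live in that region. The generalized second law, applied to the time slicing of $\mc L$ by cross-sections $\Gamma_\lambda$, then gives
\beq
\frac{\dd}{\dd\lambda}S_\text{gen}[\Gamma_\lambda]\geq 0~,
\eeq
where we define $S_\text{gen}[\Gamma_\lambda]$ with the entropy of fields exterior to $\mc L$. This holds locally along each generator in the strengthened (per-generator) form. By contrast, quantum trappedness of $\Gamma$ says that $\Theta<0$ at $\lambda=0$ for these outward generators. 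This is where the semi-classical hypothesis near $\Gamma$ enters: it ensures $\Theta\approx\theta$ is meaningful there, so the initial negativity is a statement about a well-defined quantity that can be compared with the GSL.

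The second step is to turn this tension into compactness. The GSL forbids $S_\text{gen}$ from decreasing along a horizon, so the portion of $\mc L$ emanating from $\Gamma$ cannot be a genuine horizon near $\Gamma$ unless generators leave $\mc L$. The generators on which $\Theta<0$ must therefore exit $\mc L$ (through caustics or crossings) before the generalized entropy could be made to decrease. Following Wall's argument, we would show that every generator leaves $\mc L$ within finite affine parameter. This uses a continuity/compactness argument over the compact surface $\Gamma$ and the closedness of $\mc L$ as an achronal boundary, and it gives that $\mc L$ is compact.

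The final step repeats the topological end of the proof of Theorem \ref{thm:PST}. Global hyperbolicity lets us flow $\mc L$ along a timelike vector field onto the non-compact Cauchy slice $\Sigma$, giving a continuous injective map from compact $\mc L$ (closed, achronal, edgeless) into $\Sigma$. Its image would be both open and closed, hence all of $\Sigma$, which contradicts non-compactness. Hence $M$ is null incomplete.

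The main obstacle is the second step. The GSL only constrains the increase of $S_\text{gen}$ on true horizons, and it does not locally yield a Raychaudhuri-type ODE forcing $\Theta\to-\infty$. Nonlocality of $S_A$ means we cannot run a per-generator focussing argument directly. The first thing to try is Wall's trick of deforming $\mc L$ slightly outward to a null surface whose generalized entropy is strictly decreasing everywhere. We then argue that any such surface extending to infinite affine length would itself be a causal horizon violating the GSL, so it must terminate. Making this robust against the nonlocal entropy term is the delicate part.
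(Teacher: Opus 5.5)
Your outline is the same as the paper's brief sketch: the GSL forbids the outgoing null surface of a quantum trapped $\Gamma$ from being a causal horizon, so its generators must leave it, and Penrose's compactness/topology argument finishes. However, the two steps that carry the new content do not work as written.

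\textbf{First step.} Your first step is false, and it proves too much.
\begin{itemize}
\item The outward generators from $\Gamma$ span the boundary of $J^+(A)$, the future of the \emph{interior}, not the boundary of $J^+(\text{ext})$.
\item $\mc L$ is not the boundary of the past of $J^+(\text{ext})$. Already in Minkowski space with $\Gamma$ a round sphere of radius $R$, $J^+(\text{ext})$ contains $(t,\vec 0)$ for every $t>R$, so its past is all of $M$.
\item If $\mc L$ were a causal horizon unconditionally, the local GSL would give $\Theta\geq 0$ on $\Gamma$. Quantum trapped surfaces could then never exist, which fails for ordinary trapped surfaces deep inside a black hole, where $\Theta\approx\theta<0$.
\item Relatedly, your argument never uses the null-completeness hypothesis. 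As structured, it would derive a contradiction from global hyperbolicity, non-compactness, a quantum trapped surface and the GSL alone.
\end{itemize}

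\textbf{Second step.} This is exactly the crux, and it is left as ``we would show.'' As you yourself note, the GSL gives no focusing ODE. So proving directly that every generator leaves $\mc L$ in finite affine parameter, or that $S_\text{gen}$ decreases along all of a deformed $\mc L$, is the wrong target.

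\textbf{Suggested fix.} Reverse the order and apply the GSL to a different null surface.
\begin{itemize}
\item Assume null completeness. Apply your topological step to the edgeless achronal set $\pa J^+(A)=A\cup\mc L$, with $A$ the compact interior ($\mc L$ alone has edge $\Gamma$). This rules out compactness.
\item A limit argument over compact $\Gamma$, using closedness of $\pa J^+(A)$, then produces a generator $\gamma$ of $\mc L$, starting from some $f\in\Gamma$, that never leaves $\mc L$. By completeness it is future-infinite; this is where completeness is used.
\item $H=\pa I^-(\gamma)$ is then a genuine causal horizon, so the GSL applies to it.
\item Achronality of $\pa J^+(A)\supset\gamma$ gives $(A\cup\mc L)\cap I^-(\gamma)=\emptyset$. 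Hence the cross-section $H\cap\Sigma$ lies on the exterior side of $\Gamma$ and touches it tangentially at $f$.
\item For two tangent surfaces, the outer one has the smaller outgoing expansion, so $\theta_H(f)\le\theta_\Gamma(f)$. The entropy terms compare the same way by strong subadditivity (monotonicity of conditional entropy).
\item With $\Theta\approx\theta$ near $\Gamma$, this yields $\Theta_H(f)\le\Theta_\Gamma(f)<0$, contradicting the GSL on $H$.
\end{itemize}
Only the sign of $\Theta$ at the single point $f$ is used. You never need to deform $\mc L$, control $S_\text{gen}$ along $\mc L$, or show directly that every generator leaves in finite affine parameter.
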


The proof of this theorem essentially follows from the above comment that a null surface not violating the generalized second law cannot be a causal horizon and so either a null geodesic along which $S_\text{gen}$ is decreasing must terminate or, if it is extendible, it must exit $\mc L$. From here the incompleteness essentially follows Penrose's reasoning: this second scenario cannot happen due to global hyperbolicity and non-compactness of the Cauchy slice.

\subsubsection*{Applications}

In \cite{Wall:2010jtc}, Wall proposes various scenarios that the quantum singularity theorem applies to, beyond the standard prediction of singularity formation through black hole collapse. For example, he demonstrates that the generalized second law implies that a spacetime with non-compact Cauchy slices that are expanding (and semi-classical) originate from a Big Bang singularity a finite proper time to the past. Additionally the generalized second law restricts the formation of `baby universes' inside of black hole horizons as well periods of `restarted inflation' in asymptotically flat Minkwoski spacetimes. Further applications of the generalized second law in \cite{Wall:2010jtc} rule out `warp drives,' time machines based on closed time-like curves, and traversable wormholes.

\subsubsection*{Away from the semi-classical limit}

Several steps in establishing the quantum singularity theorem rely on a `semi-classical' approximation of the initial condition: the generalized entropy and the quantum expansion determined by the classical area and the classical expansion, respectively. It has been pointed out that there are scenarios where quantum fields are well-behaved yet allow for non-negligible fluctuations of the quantum expansion that act as a loopholes to Wall's theorem. Thus the quantum singularity theorem as formulated by Wall still allows for some special singularity avoidance scenarios, e.g. a bouncing cosmology \cite{Veneziano:2000pz,Khoury:2001wf,Graham:2017hfr}. This loophole has recently been closed by Bousso \cite{Bousso:2025xyc} by retooling Wall's theorem using a new quantum information quantity as known as the `conditional max entropy' as its starting point. Recently a singularity theorem was proposed by Engelhardt and Nagar relaxing the assumption of global hyperbolicity, allowing for Penrose diagrams in which spatial slices can undergo topology change, such as that of a fully evaporated black hole \cite{Engelhardt:2026qqs}.

\section{Conclusion}\label{sect:disc}

In these lecture notes we summarized a surface level review of classical and quantum energy conditions and their role in constraining solutions in semi-classical gravity. The aim of these lecture notes is not necessarily to make you an expert in any one given topic here, but instead to give you a foundation of understanding for the topic and to prepare you for engaging with ongoing research in these directions.

In that vein there are several interesting open directions to pursue. As emphasized in subsection \ref{sec:QEIs}, QEIs such as the ANEC have uses in their own right as QFT statements: they place bounds on free parameters of a theory and constrain renormalization group flows. Moreover, the both the ANE and the modular Hamiltonian fall into a broader class of light-ray operators composed of the null stress tensor integrated with a polynomial of a null coordinate and which display an interesting algebraic structure organizing degrees of freedom at asymptotic infinity \cite{Kravchuk:2018htv,Kologlu:2019mfz,Belin:2020lsr,Cordova:2018ygx}. More generally exploring which QEIs (even state-dependent ones) can be proven directly in QFTs either through CFT techniques, or through quantum information techniques \cite{Fliss:2025zhu} and to what degree they constrain the landscape of QFTs is a rich and worthwhile endeavor.

Separately, it is interesting to push on energy inequalities that directly incorporate semi-classical gravitational effects, such as the SNEC \eqref{eq:gravSNEC} and the QFC \eqref{eq:QFC}. While it is unexpected that there will be definitive proofs of such inequalities (precisely because they directly incorporate dynamical gravity in their formulation), testing their validity in known models of quantum gravity, e.g. holographic setups, as well as delineating their implications and their mutual dependencies are usual inputs. Further developing new ``quantum geometric theorems'' incorporating such bounds as ingredients remains an active area of research.

\section*{Acknowledgments}
I thank the organizers of the XXI Modave Summer School in Mathematical Physics for the invitation to lecture on this subject. I also thank Ben Freivogel, Eleni Kontou, Diego Pardo Santos, and Andrew Rolph for discussions and collaborations on topics appearing in these lectures. 
This work was supported partially by STFC consolidated grants ST/T000694/1 and ST/X000664/1, partially by Simons Foundation Award number 620869, and partially by FNRS MISU grant 40024018 ``Pushing Horizons
in Black Hole Physics.''

\appendix
\numberwithin{equation}{section}

\bibliographystyle{ytphys}
\bibliography{sources.bib}

@misc{WittenTalk,
  author       = {Witten, Edward},
  title        = {Some Comments on Energy Inequalities},
  howpublished = {Talk at IAS Online Workshop on Qubits and Black Holes},
  year         = {2020},
  note         = {Recording available at www.youtube.com/watch?v=0Oh-Kmy-mx0.},
  url          = {www.youtube.com/watch?v=0Oh-Kmy-mx0}
}

@article{Kologlu:2019mfz,
    author = "Kologlu, Murat and Kravchuk, Petr and Simmons-Duffin, David and Zhiboedov, Alexander",
    title = "{The light-ray OPE and conformal colliders}",
    eprint = "1905.01311",
    archivePrefix = "arXiv",
    primaryClass = "hep-th",
    reportNumber = "CALT-TH 2019-013, CERN-TH-2019-055",
    doi = "10.1007/JHEP01(2021)128",
    journal = "JHEP",
    volume = "01",
    pages = "128",
    year = "2021"
}

@article{Belin:2020lsr,
    author = "Belin, Alexandre and Hofman, Diego M. and Mathys, Gr{\'e}goire and Walters, Matthew T.",
    title = "{On the stress tensor light-ray operator algebra}",
    eprint = "2011.13862",
    archivePrefix = "arXiv",
    primaryClass = "hep-th",
    reportNumber = "CERN-TH-2020-200",
    doi = "10.1007/JHEP05(2021)033",
    journal = "JHEP",
    volume = "05",
    pages = "033",
    year = "2021"
}

@article{Kravchuk:2018htv,
    author = "Kravchuk, Petr and Simmons-Duffin, David",
    title = "{Light-ray operators in conformal field theory}",
    eprint = "1805.00098",
    archivePrefix = "arXiv",
    primaryClass = "hep-th",
    reportNumber = "CALT-TH 2018-018",
    doi = "10.1007/JHEP11(2018)102",
    journal = "JHEP",
    volume = "11",
    pages = "102",
    year = "2018"
}

@article{Cordova:2018ygx,
    author = "C{\'o}rdova, Clay and Shao, Shu-Heng",
    title = "{Light-ray Operators and the BMS Algebra}",
    eprint = "1810.05706",
    archivePrefix = "arXiv",
    primaryClass = "hep-th",
    doi = "10.1103/PhysRevD.98.125015",
    journal = "Phys. Rev. D",
    volume = "98",
    number = "12",
    pages = "125015",
    year = "2018"
}

@article{Fliss:2025zhu,
    author = "Fliss, Jackson R. and Rolph, Andrew",
    title = "{Curious QNEIs from QNEC: New Bounds on Null Energy in Quantum Field Theory}",
    eprint = "2510.26247",
    archivePrefix = "arXiv",
    primaryClass = "hep-th",
    month = "10",
    year = "2025"
}

@article{Shahbazi-Moghaddam:2022hbw,
    author = "Shahbazi-Moghaddam, Arvin",
    title = "{Restricted quantum focusing}",
    eprint = "2212.03881",
    archivePrefix = "arXiv",
    primaryClass = "hep-th",
    doi = "10.1103/PhysRevD.109.066023",
    journal = "Phys. Rev. D",
    volume = "109",
    number = "6",
    pages = "066023",
    year = "2024"
}

@article{Hawking:1971vc,
    author = "Hawking, S. W.",
    title = "{Black holes in general relativity}",
    doi = "10.1007/BF01877517",
    journal = "Commun. Math. Phys.",
    volume = "25",
    pages = "152--166",
    year = "1972"
}

@article{Carter:1971zc,
    author = "Carter, B.",
    title = "{Axisymmetric Black Hole Has Only Two Degrees of Freedom}",
    doi = "10.1103/PhysRevLett.26.331",
    journal = "Phys. Rev. Lett.",
    volume = "26",
    pages = "331--333",
    year = "1971"
}

@article{Israel:1967za,
    author = "Israel, Werner",
    title = "{Event horizons in static electrovac space-times}",
    doi = "10.1007/BF01645859",
    journal = "Commun. Math. Phys.",
    volume = "8",
    pages = "245--260",
    year = "1968"
}

@article{Israel:1967wq,
    author = "Israel, Werner",
    title = "{Event horizons in static vacuum space-times}",
    doi = "10.1103/PhysRev.164.1776",
    journal = "Phys. Rev.",
    volume = "164",
    pages = "1776--1779",
    year = "1967"
}

@article{Hawking:1974rv,
    author = "Hawking, S. W.",
    title = "{Black hole explosions}",
    doi = "10.1038/248030a0",
    journal = "Nature",
    volume = "248",
    pages = "30--31",
    year = "1974"
}

@article{Page:1981aj,
    author = "Page, Don N. and Geilker, C. D.",
    title = "{Indirect Evidence for Quantum Gravity}",
    reportNumber = "PRINT-81-0221 (PENN-STATE)",
    doi = "10.1103/PhysRevLett.47.979",
    journal = "Phys. Rev. Lett.",
    volume = "47",
    pages = "979--982",
    year = "1981"
}

@article{Hawking:1970zqf,
    author = "Hawking, S. W. and Penrose, R.",
    title = "{The Singularities of gravitational collapse and cosmology}",
    doi = "10.1098/rspa.1970.0021",
    journal = "Proc. Roy. Soc. Lond. A",
    volume = "314",
    pages = "529--548",
    year = "1970"
}

@book{Feynman:1996kb,
    author = "Feynman, R. P.",
    editor = "Morinigo, F. B. and Wagner, W. G. and Hatfield, B.",
    title = "{Feynman lectures on gravitation}",
    doi = "10.1201/9780429502859",
    isbn = "978-0-429-50285-9",
    year = "1996"
}

@article{Reeh:1961ujh,
    author = "Reeh, H. and Schlieder, S.",
    title = {{Bemerkungen zur unit{\"a}r{\"a}quivalenz von lorentzinvarianten feldern}},
    doi = "10.1007/BF02787889",
    journal = "Nuovo Cim.",
    volume = "22",
    number = "5",
    pages = "1051--1068",
    year = "1961"
}

@article{Iizuka:2025xnd,
    author = "Iizuka, Norihiro and Ishibashi, Akihiro and Maeda, Kengo and Nakayama, Haruki and Nishioka, Tatsuma",
    title = "{Energy Conditions and Quantum Information}",
    eprint = "2509.01286",
    archivePrefix = "arXiv",
    primaryClass = "hep-th",
    reportNumber = "NU-QG-10, OU-HET-1283",
    month = "9",
    year = "2025"
}

@article{EINSTEIN1936349,
	author = {Albert Einstein},
	doi = {https://doi.org/10.1016/S0016-0032(36)91047-5},
	issn = {0016-0032},
	journal = {Journal of the Franklin Institute},
	number = {3},
	pages = {349-382},
	title = {Physics and reality},
	url = {https://www.sciencedirect.com/science/article/pii/S0016003236910475},
	volume = {221},
	year = {1936}}

@article{Graham:2017hfr,
    author = "Graham, Peter W. and Kaplan, David E. and Rajendran, Surjeet",
    title = "{Born again universe}",
    eprint = "1709.01999",
    archivePrefix = "arXiv",
    primaryClass = "hep-th",
    doi = "10.1103/PhysRevD.97.044003",
    journal = "Phys. Rev. D",
    volume = "97",
    number = "4",
    pages = "044003",
    year = "2018"
}

@article{Khoury:2001wf,
    author = "Khoury, Justin and Ovrut, Burt A. and Steinhardt, Paul J. and Turok, Neil",
    title = "{The Ekpyrotic universe: Colliding branes and the origin of the hot big bang}",
    eprint = "hep-th/0103239",
    archivePrefix = "arXiv",
    doi = "10.1103/PhysRevD.64.123522",
    journal = "Phys. Rev. D",
    volume = "64",
    pages = "123522",
    year = "2001"
}

@inproceedings{Veneziano:2000pz,
    author = "Veneziano, G.",
    title = "{String cosmology: The Pre - big bang scenario}",
    booktitle = "{71st Les Houches Summer School: The Primordial Universe}",
    eprint = "hep-th/0002094",
    archivePrefix = "arXiv",
    reportNumber = "CERN-TH-2000-042",
    doi = "10.1007/3-540-45334-2_12",
    pages = "581--628",
    year = "2000"
}

@article{Engelhardt:2014gca,
    author = "Engelhardt, Netta and Wall, Aron C.",
    title = "{Quantum Extremal Surfaces: Holographic Entanglement Entropy beyond the Classical Regime}",
    eprint = "1408.3203",
    archivePrefix = "arXiv",
    primaryClass = "hep-th",
    doi = "10.1007/JHEP01(2015)073",
    journal = "JHEP",
    volume = "01",
    pages = "073",
    year = "2015"
}

@article{Wall:2012uf,
    author = "Wall, Aron C.",
    title = "{Maximin Surfaces, and the Strong Subadditivity of the Covariant Holographic Entanglement Entropy}",
    eprint = "1211.3494",
    archivePrefix = "arXiv",
    primaryClass = "hep-th",
    doi = "10.1088/0264-9381/31/22/225007",
    journal = "Class. Quant. Grav.",
    volume = "31",
    number = "22",
    pages = "225007",
    year = "2014"
}

@article{Czech:2012bh,
    author = "Czech, Bartlomiej and Karczmarek, Joanna L. and Nogueira, Fernando and Van Raamsdonk, Mark",
    title = "{The Gravity Dual of a Density Matrix}",
    eprint = "1204.1330",
    archivePrefix = "arXiv",
    primaryClass = "hep-th",
    doi = "10.1088/0264-9381/29/15/155009",
    journal = "Class. Quant. Grav.",
    volume = "29",
    pages = "155009",
    year = "2012"
}

@article{Akers:2016ugt,
    author = "Akers, Chris and Koeller, Jason and Leichenauer, Stefan and Levine, Adam",
    title = "{Geometric Constraints from Subregion Duality Beyond the Classical Regime}",
    eprint = "1610.08968",
    archivePrefix = "arXiv",
    primaryClass = "hep-th",
    month = "10",
    year = "2016"
}

@article{Bousso:1999xy,
    author = "Bousso, Raphael",
    title = "{A Covariant entropy conjecture}",
    eprint = "hep-th/9905177",
    archivePrefix = "arXiv",
    reportNumber = "SU-ITP-99-23",
    doi = "10.1088/1126-6708/1999/07/004",
    journal = "JHEP",
    volume = "07",
    pages = "004",
    year = "1999"
}

@article{Jacobson:2003wv,
    author = "Jacobson, Ted and Parentani, Renaud",
    title = "{Horizon entropy}",
    eprint = "gr-qc/0302099",
    archivePrefix = "arXiv",
    doi = "10.1023/A:1023785123428",
    journal = "Found. Phys.",
    volume = "33",
    pages = "323--348",
    year = "2003"
}

@article{Bekenstein:1974ax,
    author = "Bekenstein, Jacob D.",
    title = "{Generalized second law of thermodynamics in black hole physics}",
    doi = "10.1103/PhysRevD.9.3292",
    journal = "Phys. Rev. D",
    volume = "9",
    pages = "3292--3300",
    year = "1974"
}

@article{Hawking:1975vcx,
    author = "Hawking, S. W.",
    editor = "Gibbons, G. W. and Hawking, S. W.",
    title = "{Particle Creation by Black Holes}",
    doi = "10.1007/BF02345020",
    journal = "Commun. Math. Phys.",
    volume = "43",
    pages = "199--220",
    year = "1975",
    note = "[Erratum: Commun.Math.Phys. 46, 206 (1976)]"
}

@article{Ryu:2006bv,
    author = "Ryu, Shinsei and Takayanagi, Tadashi",
    title = "{Holographic derivation of entanglement entropy from AdS/CFT}",
    eprint = "hep-th/0603001",
    archivePrefix = "arXiv",
    reportNumber = "NSF-KITP-06-11",
    doi = "10.1103/PhysRevLett.96.181602",
    journal = "Phys. Rev. Lett.",
    volume = "96",
    pages = "181602",
    year = "2006"
}

@article{Susskind:1994sm,
    author = "Susskind, Leonard and Uglum, John",
    title = "{Black hole entropy in canonical quantum gravity and superstring theory}",
    eprint = "hep-th/9401070",
    archivePrefix = "arXiv",
    reportNumber = "SU-ITP-94-1",
    doi = "10.1103/PhysRevD.50.2700",
    journal = "Phys. Rev. D",
    volume = "50",
    pages = "2700--2711",
    year = "1994"
}

@article{Strominger:1996sh,
    author = "Strominger, Andrew and Vafa, Cumrun",
    title = "{Microscopic origin of the Bekenstein-Hawking entropy}",
    eprint = "hep-th/9601029",
    archivePrefix = "arXiv",
    reportNumber = "HUTP-96-A002, RU-96-01",
    doi = "10.1016/0370-2693(96)00345-0",
    journal = "Phys. Lett. B",
    volume = "379",
    pages = "99--104",
    year = "1996"
}

@article{Leichenauer:2018obf,
    author = "Leichenauer, Stefan and Levine, Adam and Shahbazi-Moghaddam, Arvin",
    title = "{Energy density from second shape variations of the von Neumann entropy}",
    eprint = "1802.02584",
    archivePrefix = "arXiv",
    primaryClass = "hep-th",
    doi = "10.1103/PhysRevD.98.086013",
    journal = "Phys. Rev. D",
    volume = "98",
    number = "8",
    pages = "086013",
    year = "2018"
}

@article{Bousso:2025xyc,
    author = "Bousso, Raphael",
    title = "{Robust Singularity Theorem}",
    eprint = "2501.17910",
    archivePrefix = "arXiv",
    primaryClass = "hep-th",
    doi = "10.1103/6f9b-3jmx",
    journal = "Phys. Rev. Lett.",
    volume = "135",
    number = "1",
    pages = "011501",
    year = "2025"
}

@article{Wall:2009wi,
    author = "Wall, Aron C.",
    title = "{Proving the Achronal Averaged Null Energy Condition from the Generalized Second Law}",
    eprint = "0910.5751",
    archivePrefix = "arXiv",
    primaryClass = "gr-qc",
    doi = "10.1103/PhysRevD.81.024038",
    journal = "Phys. Rev. D",
    volume = "81",
    pages = "024038",
    year = "2010"
}

@article{Wall:2010jtc,
    author = "Wall, Aron C.",
    title = "{The Generalized Second Law implies a Quantum Singularity Theorem}",
    eprint = "1010.5513",
    archivePrefix = "arXiv",
    primaryClass = "gr-qc",
    doi = "10.1088/0264-9381/30/19/199501",
    journal = "Class. Quant. Grav.",
    volume = "30",
    pages = "165003",
    year = "2013",
    note = "[Erratum: Class.Quant.Grav. 30, 199501 (2013)]"
}

@article{Faulkner:2017tkh,
    author = "Faulkner, Thomas and Haehl, Felix M. and Hijano, Eliot and Parrikar, Onkar and Rabideau, Charles and Van Raamsdonk, Mark",
    title = "{Nonlinear Gravity from Entanglement in Conformal Field Theories}",
    eprint = "1705.03026",
    archivePrefix = "arXiv",
    primaryClass = "hep-th",
    doi = "10.1007/JHEP08(2017)057",
    journal = "JHEP",
    volume = "08",
    pages = "057",
    year = "2017"
}

@article{Bousso:2015mna,
    author = "Bousso, Raphael and Fisher, Zachary and Leichenauer, Stefan and Wall, Aron C.",
    title = "{Quantum focusing conjecture}",
    eprint = "1506.02669",
    archivePrefix = "arXiv",
    primaryClass = "hep-th",
    doi = "10.1103/PhysRevD.93.064044",
    journal = "Phys. Rev. D",
    volume = "93",
    number = "6",
    pages = "064044",
    year = "2016"
}

@article{Faulkner:2016mzt,
    author = "Faulkner, Thomas and Leigh, Robert G. and Parrikar, Onkar and Wang, Huajia",
    title = "{Modular Hamiltonians for Deformed Half-Spaces and the Averaged Null Energy Condition}",
    eprint = "1605.08072",
    archivePrefix = "arXiv",
    primaryClass = "hep-th",
    doi = "10.1007/JHEP09(2016)038",
    journal = "JHEP",
    volume = "09",
    pages = "038",
    year = "2016"
}

@article{Bisognano:1976za,
    author = "Bisognano, J. J and Wichmann, E. H.",
    title = "{On the Duality Condition for Quantum Fields}",
    doi = "10.1063/1.522898",
    journal = "J. Math. Phys.",
    volume = "17",
    pages = "303--321",
    year = "1976"
}

@article{Casini:2011kv,
    author = "Casini, Horacio and Huerta, Marina and Myers, Robert C.",
    title = "{Towards a derivation of holographic entanglement entropy}",
    eprint = "1102.0440",
    archivePrefix = "arXiv",
    primaryClass = "hep-th",
    doi = "10.1007/JHEP05(2011)036",
    journal = "JHEP",
    volume = "05",
    pages = "036",
    year = "2011"
}

@article{Fliss:2024dxe,
    author = "Fliss, Jackson R. and Freivogel, Ben and Kontou, Eleni-Alexandra and Santos, Diego Pardo",
    title = "{How negative can null energy be in large N CFTs?}",
    eprint = "2412.10618",
    archivePrefix = "arXiv",
    primaryClass = "hep-th",
    month = "12",
    year = "2024"
}

@article{Ford:1999qv,
    author = "Ford, L. H. and Roman, Thomas A.",
    title = "{The Quantum interest conjecture}",
    eprint = "gr-qc/9901074",
    archivePrefix = "arXiv",
    doi = "10.1103/PhysRevD.60.104018",
    journal = "Phys. Rev. D",
    volume = "60",
    pages = "104018",
    year = "1999"
}

@article{Fewster:2007ec,
    author = "Fewster, Christopher J. and Osterbrink, Lutz W.",
    title = "{Quantum Energy Inequalities for the Non-Minimally Coupled Scalar Field}",
    eprint = "0708.2450",
    archivePrefix = "arXiv",
    primaryClass = "gr-qc",
    doi = "10.1088/1751-8113/41/2/025402",
    journal = "J. Phys. A",
    volume = "41",
    pages = "025402",
    year = "2008"
}

@article{Fewster:2002ne,
    author = "Fewster, Christopher J. and Roman, Thomas A.",
    title = "{Null energy conditions in quantum field theory}",
    eprint = "gr-qc/0209036",
    archivePrefix = "arXiv",
    doi = "10.1103/PhysRevD.67.044003",
    journal = "Phys. Rev. D",
    volume = "67",
    pages = "044003",
    year = "2003",
    note = "[Erratum: Phys.Rev.D 80, 069903 (2009)]"
}

@article{Wall:2011hj,
    author = "Wall, Aron C.",
    title = "{A proof of the generalized second law for rapidly changing fields and arbitrary horizon slices}",
    eprint = "1105.3445",
    archivePrefix = "arXiv",
    primaryClass = "gr-qc",
    doi = "10.1103/PhysRevD.85.104049",
    journal = "Phys. Rev. D",
    volume = "85",
    pages = "104049",
    year = "2012",
    note = "[Erratum: Phys.Rev.D 87, 069904 (2013)]"
}

@article{Burkardt:1995ct,
    author = "Burkardt, Matthias",
    title = "{Light front quantization}",
    eprint = "hep-ph/9505259",
    archivePrefix = "arXiv",
    doi = "10.1007/0-306-47067-5_1",
    journal = "Adv. Nucl. Phys.",
    volume = "23",
    pages = "1--74",
    year = "1996"
}

@article{Brown:2018hym,
    author = "Brown, Peter J. and Fewster, Christopher J. and Kontou, Eleni-Alexandra",
    title = "{A singularity theorem for Einstein{\textendash}Klein{\textendash}Gordon theory}",
    eprint = "1803.11094",
    archivePrefix = "arXiv",
    primaryClass = "gr-qc",
    doi = "10.1007/s10714-018-2446-5",
    journal = "Gen. Rel. Grav.",
    volume = "50",
    number = "10",
    pages = "121",
    year = "2018"
}

@article{Fewster:2019bjg,
    author = "Fewster, Christopher J. and Kontou, Eleni-Alexandra",
    title = "{A new derivation of singularity theorems with weakened energy hypotheses}",
    eprint = "1907.13604",
    archivePrefix = "arXiv",
    primaryClass = "gr-qc",
    doi = "10.1088/1361-6382/ab685b",
    journal = "Class. Quant. Grav.",
    volume = "37",
    number = "6",
    pages = "065010",
    year = "2020"
}

@article{Fewster:2021mmz,
    author = "Fewster, Christopher J. and Kontou, Eleni-Alexandra",
    title = "{A semiclassical singularity theorem}",
    eprint = "2108.12668",
    archivePrefix = "arXiv",
    primaryClass = "gr-qc",
    doi = "10.1088/1361-6382/ac566b",
    journal = "Class. Quant. Grav.",
    volume = "39",
    number = "7",
    pages = "075028",
    year = "2022"
}

@book{ONeill:1983semi,
  title={Semi-Riemannian geometry with applications to relativity},
  author={O'neill, Barrett},
  volume={103},
  year={1983},
  publisher={Academic press}
}

@article{Fliss:2021gdz,
    author = "Fliss, Jackson R. and Freivogel, Ben",
    title = "{Semi-local Bounds on Null Energy in QFT}",
    eprint = "2108.06068",
    archivePrefix = "arXiv",
    primaryClass = "hep-th",
    doi = "10.21468/SciPostPhys.12.3.084",
    journal = "SciPost Phys.",
    volume = "12",
    number = "3",
    pages = "084",
    year = "2022"
}

@article{Hartman:2023ccw,
    author = "Hartman, Thomas and Mathys, Gr{\'e}goire",
    title = "{Null energy constraints on two-dimensional RG flows}",
    eprint = "2310.15217",
    archivePrefix = "arXiv",
    primaryClass = "hep-th",
    doi = "10.1007/JHEP01(2024)102",
    journal = "JHEP",
    volume = "01",
    pages = "102",
    year = "2024"
}

@article{Hartman:2023qdn,
    author = "Hartman, Thomas and Mathys, Gr{\'e}goire",
    title = "{Averaged null energy and the renormalization group}",
    eprint = "2309.14409",
    archivePrefix = "arXiv",
    primaryClass = "hep-th",
    doi = "10.1007/JHEP12(2023)139",
    journal = "JHEP",
    volume = "12",
    pages = "139",
    year = "2023"
}

@article{Flanagan:1997gn,
    author = "Flanagan, Eanna E.",
    title = "{Quantum inequalities in two-dimensional Minkowski space-time}",
    eprint = "gr-qc/9706006",
    archivePrefix = "arXiv",
    doi = "10.1103/PhysRevD.56.4922",
    journal = "Phys. Rev. D",
    volume = "56",
    pages = "4922--4926",
    year = "1997"
}

@article{Fewster:1998pu,
    author = "Fewster, Christopher J. and Eveson, S. P.",
    title = "{Bounds on negative energy densities in flat space-time}",
    eprint = "gr-qc/9805024",
    archivePrefix = "arXiv",
    doi = "10.1103/PhysRevD.58.084010",
    journal = "Phys. Rev. D",
    volume = "58",
    pages = "084010",
    year = "1998"
}

@article{Epstein:1965zza,
    author = "Epstein, H. and Glaser, V. and Jaffe, A.",
    title = "{Nonpositivity of energy density in Quantized field theories}",
    doi = "10.1007/BF02749799",
    journal = "Nuovo Cim.",
    volume = "36",
    pages = "1016",
    year = "1965"
}

@mastersthesis{Pfenning:1998ua,
    author = "Pfenning, Michael John",
    title = "{Quantum inequality restrictions on negative energy densities in curved space-times}",
    eprint = "gr-qc/9805037",
    archivePrefix = "arXiv",
    type = "Other thesis",
    month = "4",
    year = "1998"
}

@book{Wald:1984rg,
    author = "Wald, Robert M.",
    title = "{General Relativity}",
    doi = "10.7208/chicago/9780226870373.001.0001",
    publisher = "Chicago Univ. Pr.",
    address = "Chicago, USA",
    year = "1984"
}

@article{Witten:2019qhl,
    author = "Witten, Edward",
    title = "{Light Rays, Singularities, and All That}",
    eprint = "1901.03928",
    archivePrefix = "arXiv",
    primaryClass = "hep-th",
    doi = "10.1103/RevModPhys.92.045004",
    journal = "Rev. Mod. Phys.",
    volume = "92",
    number = "4",
    pages = "045004",
    year = "2020"
}

@article{Friedman:1993ty,
    author = "Friedman, John L. and Schleich, Kristin and Witt, Donald M.",
    title = "{Topological censorship}",
    eprint = "gr-qc/9305017",
    archivePrefix = "arXiv",
    reportNumber = "PRINT-93-0448 (SANTA-BARBARA,ITP), NSF-ITP-93-80",
    doi = "10.1103/PhysRevLett.71.1486",
    journal = "Phys. Rev. Lett.",
    volume = "71",
    pages = "1486--1489",
    year = "1993",
    note = "[Erratum: Phys.Rev.Lett. 75, 1872 (1995)]"
}

@article{Leichenauer:2018tnq,
    author = "Leichenauer, Stefan and Levine, Adam",
    title = "{Upper and Lower Bounds on the Integrated Null Energy in Gravity}",
    eprint = "1808.09970",
    archivePrefix = "arXiv",
    primaryClass = "hep-th",
    doi = "10.1007/JHEP01(2019)133",
    journal = "JHEP",
    volume = "01",
    pages = "133",
    year = "2019"
}

@article{Hartman:2016lgu,
    author = "Hartman, Thomas and Kundu, Sandipan and Tajdini, Amirhossein",
    title = "{Averaged Null Energy Condition from Causality}",
    eprint = "1610.05308",
    archivePrefix = "arXiv",
    primaryClass = "hep-th",
    doi = "10.1007/JHEP07(2017)066",
    journal = "JHEP",
    volume = "07",
    pages = "066",
    year = "2017"
}

@article{Fliss:2023rzi,
    author = "Fliss, Jackson R. and Freivogel, Ben and Kontou, Eleni-Alexandra and Santos, Diego Pardo",
    title = "{Non-minimal coupling, negative null energy, and effective field theory}",
    eprint = "2309.10848",
    archivePrefix = "arXiv",
    primaryClass = "hep-th",
    month = "9",
    year = "2023"
}

@article{Fewster:2004nj,
	archiveprefix = {arXiv},
	author = {Fewster, Christopher J. and Hollands, Stefan},
	doi = {10.1142/S0129055X05002406},
	eprint = {math-ph/0412028},
	journal = {Rev. Math. Phys.},
	pages = {577},
	title = {{Quantum energy inequalities in two-dimensional conformal field theory}},
	volume = {17},
	year = {2005}}

@article{Balakrishnan:2019gxl,
    author = "Balakrishnan, Srivatsan and Chandrasekaran, Venkatesa and Faulkner, Thomas and Levine, Adam and Shahbazi-Moghaddam, Arvin",
    title = "{Entropy variations and light ray operators from replica defects}",
    eprint = "1906.08274",
    archivePrefix = "arXiv",
    primaryClass = "hep-th",
    doi = "10.1007/JHEP09(2022)217",
    journal = "JHEP",
    volume = "09",
    pages = "217",
    year = "2022"
}

@article{Bousso:2015wca,
	archiveprefix = {arXiv},
	author = {Bousso, Raphael and Fisher, Zachary and Koeller, Jason and Leichenauer, Stefan and Wall, Aron C.},
	doi = {10.1103/PhysRevD.93.024017},
	eprint = {1509.02542},
	journal = {Phys. Rev. D},
	number = {2},
	pages = {024017},
	primaryclass = {hep-th},
	title = {{Proof of the Quantum Null Energy Condition}},
	volume = {93},
	year = {2016}}

@article{Engelhardt:2026qqs,
    author = "Engelhardt, Netta and Nagar, Ivri",
    title = "{A Quantum Singularity Theorem for the Evaporating Black Hole}",
    eprint = "2605.05326",
    archivePrefix = "arXiv",
    primaryClass = "hep-th",
    reportNumber = "MIT-CTP/6034",
    month = "5",
    year = "2026"
}

@article{Fliss:2021phs,
	archiveprefix = {arXiv},
	author = {Fliss, Jackson R. and Freivogel, Ben and Kontou, Eleni-Alexandra},
	doi = {10.21468/SciPostPhys.14.2.024},
	eprint = {2111.05772},
	journal = {SciPost Phys.},
	number = {2},
	pages = {024},
	primaryclass = {hep-th},
	title = {{The double smeared null energy condition}},
	volume = {14},
	year = {2023}}

@article{Freivogel:2018gxj,
	archiveprefix = {arXiv},
	author = {Freivogel, Ben and Krommydas, Dimitrios},
	doi = {10.1007/JHEP12(2018)067},
	eprint = {1807.03808},
	journal = {JHEP},
	pages = {067},
	primaryclass = {hep-th},
	title = {{The Smeared Null Energy Condition}},
	volume = {12},
	year = {2018}}

@article{Freivogel:2020hiz,
	archiveprefix = {arXiv},
	author = {Freivogel, Ben and Kontou, Eleni-Alexandra and Krommydas, Dimitrios},
	doi = {10.21468/SciPostPhys.13.1.001},
	eprint = {2012.11569},
	journal = {SciPost Phys.},
	number = {1},
	pages = {001},
	primaryclass = {gr-qc},
	title = {{The Return of the Singularities: Applications of the Smeared Null Energy Condition}},
	volume = {13},
	year = {2022}}

@article{Kontou:2015yha,
	archiveprefix = {arXiv},
	author = {Kontou, Eleni-Alexandra and Olum, Ken D.},
	doi = {10.1103/PhysRevD.92.124009},
	eprint = {1507.00297},
	journal = {Phys. Rev. D},
	pages = {124009},
	primaryclass = {gr-qc},
	title = {{Proof of the averaged null energy condition in a classical curved spacetime using a null-projected quantum inequality}},
	volume = {92},
	year = {2015}}

@article{Kontou:2020bta,
	archiveprefix = {arXiv},
	author = {Kontou, Eleni-Alexandra and Sanders, Ko},
	doi = {10.1088/1361-6382/ab8fcf},
	eprint = {2003.01815},
	journal = {Class. Quant. Grav.},
	number = {19},
	pages = {193001},
	primaryclass = {gr-qc},
	title = {{Energy conditions in general relativity and quantum field theory}},
	volume = {37},
	year = {2020}}

@article{Balakrishnan:2017bjg,
	archiveprefix = {arXiv},
	author = {Balakrishnan, Srivatsan and Faulkner, Thomas and Khandker, Zuhair U. and Wang, Huajia},
	doi = {10.1007/JHEP09(2019)020},
	eprint = {1706.09432},
	journal = {JHEP},
	pages = {020},
	primaryclass = {hep-th},
	title = {{A General Proof of the Quantum Null Energy Condition}},
	volume = {09},
	year = {2019}}

@Article{2008_Hofman,
  author        = {Hofman, Diego M. and Maldacena, Juan},
  journal       = {JHEP},
  title         = {{Conformal collider physics: Energy and charge correlations}},
  year          = {2008},
  pages         = {012},
  volume        = {05},
  archiveprefix = {arXiv},
  doi           = {10.1088/1126-6708/2008/05/012},
  eprint        = {0803.1467},
  primaryclass  = {hep-th},
}

@article{Penrose:1964wq,
	author         = "Penrose, Roger",
	title          = "{Gravitational collapse and space-time singularities}",
	journal        = "Phys. Rev. Lett.",
	volume         = "14",
	year           = "1965",
	pages          = "57-59",
	doi            = "10.1103/PhysRevLett.14.57",
	SLACcitation   = "%%CITATION = PRLTA,14,57;%%"
}

@article{Hawking:1966sx,
	author         = "Hawking, Stephen W.",
	title          = "{The Occurrence of singularities in cosmology}",
	journal        = "Proc. Roy. Soc. Lond.",
	volume         = "A294",
	year           = "1966",
	pages          = "511-521",
	doi            = "10.1098/rspa.1966.0221",
	SLACcitation   = "%%CITATION = PRSLA,A294,511;%%"
}

\end{document}